\newtheorem{theorem}{Theorem}[section]
\newtheorem{lemma}[theorem]{Lemma}
\newtheorem{proposition}[theorem]{Proposition}
\newtheorem{remark}[theorem]{Remark}
\newcommand{\circled}[1]{\raisebox{.5pt}{\textcircled{\raisebox{-.9pt}{#1}}}}
\def\MM{\mathcal{M}}
\def\TT{\mathcal{T}}
\def\DD{\mathcal{D}}
\theoremstyle{definition}
\newtheorem{definition}[theorem]{Definition}
\numberwithin{equation}{section}
\title{{\vspace{-80pt}} Impact of network connectivity on the dynamics of populations in stream environments
\begin{center}
    {\small In memory of Professor Pierre Magal}
\end{center}
}
\date{}
\author{~}
\begin{document}
\author{Tung D. Nguyen\footnote{Department of Mathematics, University of California Los Angeles, Los Angeles, CA 90024, USA}, Tingting Tang\footnote{Department of Mathematics and Statistics, San Diego State University, San Diego, CA 92182, USA}, Amy Veprauskas\footnote{Department of Mathematics, University of Louisiana at Lafayette, Lafayette, LA 70501, USA}, Yixiang Wu\footnote{Department of Mathematical Sciences, Middle Tennessee State University, Murfreesboro, Tennessee 37132, USA}, and Ying Zhou\footnote{Department of Mathematical Sciences, Lafayette College, Easton, PA 18042, USA}}

\maketitle

\begin{abstract}
    We consider the impact of network connectivity on the dynamics of a population in a stream environment. The population is modeled using a graph theoretical framework, with habitats represented by isolated patches. We introduce a change in connectivity into the model through the addition of a bi-directional or one-directional edge between two patches and examine the impact of this edge modification on the metapopulation growth rate and the network biomass. Our main results indicate that adding a bi-directional edge often decreases both measures, while the effect of adding an one-directional edge is more intricate and dependent on the model parameters. We establish complete analytical results for stream networks of three patches, and provide some generalizations and conjectures for more general stream networks of $n$ patches. These conjectures are supported with numerical simulations.
    
\end{abstract}
\noindent{\bf MSC 2020}: 92D25, 92D40, 34C12, 34D23, 37C65.

\noindent{\bf Keywords:} metapopulation growth rate, total biomass, stream networks, network connectivity, edge modification

\section{Introduction}

% \begin{itemize}
%     \item Paragraph 1: Biological background for study stream network connectivity
%     \item Paragraph 2: Existing approaches for modeling stream network connectivity. 
%     Graph theory metapopulation model with ODE? PDE seetup but diffcult in handling out boundary conditions. 
    
% \item Paragraph 3: Existing results on connectivity
% \item Paragraph 4: the framework of our method and the main question to investigate, main results

% \item Paragraph 5: Outline of paper
% \end{itemize}

% Ecologists have utilized different mathematical models to represent and investigate the impact of network connectivity on stream population dynamics. 
% One approach is to utilize a graph theoretical framework to describe a stream network: stream organisms are assumed to leave in isolated nodes, the directed edges between the nodes are used to indicate the direction of the movement, and the weight of the edges represents the dispersal rate of the individuals. 

Lotic environments, such as streams and rivers, play a vital role in the ecosystem by connecting with adjacent terrestrial and marine environments,  transporting nutrients and sediments, and providing essential habitats for an array of diverse organisms \cite{giller1998biology}. Streams and rivers are unique habitats with  unidirectional flows and dendritic structures that shape the dwelling and dispersal of organisms. They are also highly heterogeneous in space and time, affected by seasonal changes, natural disturbances, and human activities \cite{pelletier1996dynamics,tonkin2018role}. Understanding the complex mechanisms that impact population persistence and abundance in stream networks is both challenging and valuable. 

% In stream networks, the network topology governs the dispersal of aquatic species in stream networks, and significantly impacts local population dynamics, genetic diversity, and community compositions \cite{moore2015bidirectional,thompson2017loss,tonkin2018role,tonkin2018metacommunities,wofford2005influence}.  Network connectivity is dynamic and can be affected by seasonal fluctuations, natural disturbances, and landscape changes \cite{perry2019does, zeigler2014transient}. Due to the hierarchical and dendritic structure of stream networks \cite{altermatt2013diversity}, changes in network connectivity can be coupled with source-sink dynamics and network structure to jointly impact population persistence and abundance. 

Various mathematical models have been employed to study population dynamics in stream environments. %\cite{lou2014evolution,lutscher2005effect,speirs2001population}
%. 
One common approach is to apply a graph-theoretical framework to represent the stream network \cite{chen2022invasion, chen2023impact, chen2024evolution,chen2023evolution, Ducrot2022,ge2023global, hamida2017evolution,jiang2020two, jiang2021three, liu2025global, liu2024dynamics, lou2019ideal, nguyen2023impact, nguyen2023maximizing, noble2015evolution}. In these models, stream habitats are represented as isolated nodes (or patches), while directed edges between the nodes indicate the direction of organism movement.  The weight of each edge corresponds to the dispersal rate of individuals, capturing the flow of organisms across the network. 
To investigate the population dynamics in stream networks, an ordinary differential equation (ODE) can be applied to each patch, describing the temporal variation in the local population. Meanwhile, the migration of organisms between patches is typically modeled using a matrix, which tracks the movement of individuals across the network. Alternative approaches to modeling stream networks include reaction-diffusion (PDE) models \cite{du2020fisher, huang2016r0,jin2011seasonal,jin2019population,lam2016emergence,lou2014evolution,lou2015evolution,lutscher2006effects,lutscher2005effect,vasilyeva2024evolution}. %\cite{nicoletti2023emergent}. 

In recent works \cite{jiang2020two, jiang2021three}, the authors proposed three different configurations of stream networks of three nodes and studied how the flow drift rate and  random dispersal rate affect the competitive exclusion and coexistence of two stream species. The dynamics of these three-patch models over a broader range of parameters are elucidated in \cite{chen2023impact, liu2024dynamics}. When there are $n-$patches configured along a straight line, the authors in \cite{chen2022invasion,chen2024evolution,chen2023evolution,liu2025global} have considered the influence of dispersal rates, downstream boundary conditions, and heterogeneity in drift rate and carrying capacity on the competition outcome of two stream species. The above mentioned results greatly generalized previous studies \cite{ge2023global,hamida2017evolution,lou2019ideal,noble2015evolution} on stream patch models with two nodes. 
In \cite{nguyen2023maximizing}, the authors study the optimal distribution of resources in  stream networks and the results indicate that to maximize the metapopulation growth rate, the sources should be concentrated in the most
downstream locations. Conversely, to maximize the total biomass, the sources should be concentrated on the most upstream locations. In \cite{nguyen2023impact}, the authors further showed that the stream species whose distribution of resources  maximizes the total biomass may have competitive advantage. 

%and provide more sights into how the joint impact of multiple effects alter the fate of two competing  species.  

Studies have highlighted effects of connectivity and dispersal in stream networks on metapopulation dynamics and spatial distributions of populations \cite{fagan2002connectivity, gonzalez2019effects}. Network structure governs the dispersal of aquatic species in stream networks, and significantly impacts local population dynamics, genetic diversity, and community compositions \cite{moore2015bidirectional,thompson2017loss,tonkin2018role,tonkin2018metacommunities,wofford2005influence}.  Network connectivity is dynamic and can be affected by seasonal fluctuations, natural disturbances, and landscape changes \cite{perry2019does, zeigler2014transient}. Due to the hierarchical and dendritic structure of stream networks \cite{altermatt2013diversity}, changes in network connectivity can be coupled with source-sink dynamics and network structure to jointly impact population persistence and abundance. 
Numerical simulations have shown that population persistence  is negatively affected by network connectivity or connectivity dendritic networks can promote local extinction and diminish metapopulation size \cite{labonne2008linking}. 
In contrast, it has also been found that increasing the patch degree (the number of connections per patch) reduces the extinction probability and benefits organism abundance \cite{arancibia2022network}. Some other studies argue that movement obstacles can either increase or decrease population growth rate \cite{samia2015connectivity}.

%In this paper, we analyze a metapopulation model that takes into consideration the structure of stream networks and the directionality of water flow within these networks to examine the impact of altered network connectivity.

In the current paper, the main question we address is how the connectivity of a stream network impacts the population persistence of a single species. More specifically, we examine the effect of {\em edge modification} on two persistence measures: the metapopulation growth rate and the network biomass (of the stable positive equilibrium). We consider two types of edge modification: bi-directional edge modification where a pair of reversible edges is added between two nodes and one-directional edge modification where only a single directed edge is added between two nodes. We primarily focus on two configurations of the 3-patch stream network in which the edge modification makes sense: the tributary configuration where there is two upstream nodes and one downstream node, and the distributary configuration where there is one upstream nodes and two downstream nodes (Figures \ref{fig-tributary} and \ref{fig-distributary}). Moreover, we extend some of these results to a general stream network of $n$ nodes, as was defined in \cite{nguyen2023maximizing}.

Our analytical and simulation results indicate that increasing network connectivity by adding one or more edges has mixed effects on the population persistence. Below, we list the main biological insights from our findings:  
\begin{enumerate}
    \item [(i)] {\em Adding a pair of bi-directional edges often decreases both persistence measures  (see Theorems \ref{thm:rho'-3patch}, \ref{thm:biomass_2dir_distributary}, and \ref{thm:rho'-npatch})}. There is an exception to this observation with the tributary configuration, where under some specific conditions, the biomass may increase.
    \item [(ii)] {\em Adding a one-directional edge from a faster growing node (higher intrinsic growth rate) to a slower growing node (lower intrinsic growth rate) tends to decrease the metapopulation growth rate and vice versa.} We provide the proofs for the 3-patch stream networks (see Theorems \ref{thm:rho'_onedir_tributary} and \ref{thm:rho'_onedir_distributary}) and some simulation results for stream networks with more than 3 patches.
    \item [(iii)] {\em Conversely, adding a one-directional edge from a faster growing node  to a slower growing node tends to increase the network biomass when the rate on the added edge is sufficiently small and vice versa.} Again, we provide the proofs for the 3-patch stream networks (see Theorems \ref{thm:biomass_1dir_tributary} and \ref{thm:biomass_1dir_distributary}) and some simulation results for larger stream networks. 
\end{enumerate}
Notably, the last two insights suggest that there seems to be a trade-off between prioritizing the metapopulation growth rate and the network biomass when the rate of the added edge is small. This trade-off was also observed in our previous study where we examined how resource distributions impact these two persistence measures \cite{nguyen2023maximizing}.

The paper is organized as follows. In Section \ref{sec:model}, we formulate our ODE model for a single species on stream networks, and define the two types of edge modifications on three-patch stream networks. In Sections \ref{sec:growth} and \ref{sec:biomass}, we investigate the effects of edge modifications on the metapopulation growth rate and network biomass respectively. In Section \ref{sec:homflow}, we extend some of our results to stream networks with an arbitrary number of patches. Finally, in Section \ref{sec:discussion}, we summarize our findings and make conjectures about more general stream networks, which are backed up with simulation results.

\section{Model formulation}\label{sec:model}
Let $n$ be a positive integer representing the fixed number of patches (or nodes) and $u_i=u_i(t), 1\le i\le n$, denote the population scale (size or density) of a certain species of study in patch $i$ at time $t\ge 0$. The model we consider is of the form
 \begin{equation}\label{eq-system}
u_i' = r_iu_i\Big(1-\frac{u_i}{K_i}\Big) + \sum_{j=1}^n \big(\ell_{ij} u_j - \ell_{ji}u_i\big), \quad i=1,\ldots,n,
\end{equation}
where population growth in each patch follows the logistic equation with $r_i$ being the local population growth rate and $K_i$ being the local carrying capacity at patch $i$. Here $\ell_{ij}\ge 0$ denotes the movement rate of the individuals from patch $j$ to patch $i$, for $1\le i,j\le n$ and $i\not= j$. %and $\mu\ge 0$ denote the scale of these movements (e.g., frequency and speed). 
We always assume $\ell_{ii}=0$ for all $i$. 
The first term in the sum above, $\sum_{j} \ell_{ij} u_j$, tracks all incoming movements (flux in) to patch $i$ while the second term, $\sum_j \ell_{ji}u_i$, sums all outgoing movements (flux out) departing from patch $i$. %The movement rates $l_{ij}$ depend on the diffusion rate $d$ and the advection rate $q$.
%In practical applications, the dimensions of $\mu$ and $a_{ij}$ are exchangeable as long as one of them is a rate (i.e., 1/time) and the other is dimensionless (or say, has unit 1). 

All movement coefficients in \eqref{eq-system} can be associated with a \textit{movement network} $G$. We denote such a movement network as $(G, L)$, where the $n\times n$ \textit{connection matrix} $L$, whose off-diagonal entries are $\ell_{ij}$ and diagonal entries are $-\sum_j \ell_{ji}$, is as follows:

\begin{equation}\label{eq-L}
L:=\begin{pmatrix}
-\sum_j \ell_{j1} & \ell_{12} & \cdots & \ell_{1n}\\
\ell_{21} & -\sum_j \ell_{j2} & \cdots & \ell_{2n}\\
\vdots & \vdots & \ddots & \vdots\\
\ell_{n1} & \ell_{n2} & \cdots & -\sum_j \ell_{jn}
\end{pmatrix}.
\end{equation}
 We assume that the movement network $G$ is strongly connected, i.e.,  $L$ is irreducible.

We focus on two measures of population persistence, namely the metapopulation growth rate and the total biomass. The \textit{metapopulation growth rate} $\rho$ of \eqref{eq-system} determines the metapopulation growth rate when the population is small. Let $s(A)$ denote the \textit{spectral bound} of a matrix $A$, i.e. 
\[
s(A):=\max\Big\{\mathrm{Re} \lambda : \; \lambda \ \mbox{is an eigenvalue of }{A} \Big\}.
\]
Then the metapopulation growth rate is defined as the spectral bound of the Jacobian matrix $J$ for the linearization of \eqref{eq-system} at the trivial equilibrium $(u_1,u_2,\ldots,u_n)=(0,0,\ldots,0)$, i.e.
\begin{equation}\label{eq-r}
\rho :=s(L+R),
\end{equation} 
where $R=\mathrm{diag}\{r_i\}$ and $L$ is the connection matrix \eqref{eq-L}. Let $\mathbf{u}^*=(u_1^*, u_2^*, \ldots, u_n^*)$ be the unique positive equilibrium of \eqref{eq-system} (see \cite{cosner1996variability,li2010global,lu1993global} for the existence and uniqueness result of the positive equilibrium of \eqref{eq-system}). The \textit{network biomass} is defined as 
\[\MM=\sum_{i=1}^n u_i^*.
\]

We assume that the movement of individuals among patches are subject to diffusion and drift, where $d > 0$ represents the diffusion magnitude and $q > 0$ represents the drift magnitude. In addition, we impose the following assumptions:
\begin{enumerate}
    \item[(A1)] $r_i\geq 0$ for all $i$, i.e. there are no patches with negative intrinsic growth rates. Moreover, $\sum_{i} r_i\neq 0$. 
    \item[(A2)] $K_i=K>0$ for all $i$, i.e. all patches have the same carrying capacity.
    \item[(A3)] The connection matrix $L$ is irreducible, and all upstream movement coefficients are
    given by $d$ and downstream movement coefficients are $d + q$.
\end{enumerate}

\subsection{Stream networks with three patches}
%{\color{red}Consider citing \cite{liu2003localized} for the concept of matrix/network modification}.
The main objective of this paper is to investigate the impact of \textit{edge modifications} (adding edges) on the metapopulation growth rate and network biomass. We consider two types of edge modifications: adding a single edge (one-directional edge modification) and adding a pair of edges with equal movement coefficients (bi-directional edge modification).   For most of the analysis in the paper, we focus on stream networks with three patches. While there are three possible configurations of 3-patch stream networks (straight, tributary, distributary - see \cite{jiang2021three}), edge modifications make the most sense in the cases of tributary and distributary stream networks (Figures \ref{fig-tributary} and \ref{fig-distributary}). We note that previous studies have examined the impact of one-directional edge modifications on the eigenvalues of a matrix in other applications \cite{ hadeler2008monotone, kirkland2021impact}.

\begin{figure}[htbp]
\centering
\begin{tikzpicture}[scale=0.75, transform shape]
\begin{scope}[every node/.style={draw}, node distance= 1.5 cm]

    \node[draw=white] (i) at (-5.5,-3) {(0)};
    \node[draw=white] (ii) at (0,-3) {(1)};
    \node[draw=white] (iii) at (5.5,-3) {(2)};
    \node[circle] (1) at (-7,0) {$1$};
    \node[circle] (2) at (-4,0) {$2$};
    \node[circle] (3) at (-5.5,-2) {$3$};
    \node[circle] (4) at (-1.5,0) {$1$};
    \node[circle] (5) at (1.5,0) {$2$};
    \node[circle] (6) at (0,-2) {$3$};
    \node[circle] (7) at (4,0) {$1$};
    \node[circle] (8) at (7,0) {$2$};
    \node[circle] (9) at (5.5,-2) {$3$};
\end{scope}
\begin{scope}[every node/.style={fill=white},
              every edge/.style={thick}]

    \draw[line width=0.5mm] [->](1) to [bend right] node[left=5] {{\footnotesize $d+q$}} (3);
    \draw[thick] [->](3) to node[right=4] {{\footnotesize $d$}} (1);  
    \draw[line width=0.5mm] [->](2) to [bend left] node[right=5] {{\footnotesize $d+q$}} (3);
    \draw[thick] [->](3) to node[left=4] {{\footnotesize $d$}} (2);
    \draw[line width=0.5mm] [->](4) to [bend right] node[left=5] {{\footnotesize $d+q$}} (6);
    \draw[thick] [->](6) to node[right=4] {{\footnotesize $d$}} (4);
    \draw[line width=0.5mm] [->](5) to [bend left] node[right=5] {{\footnotesize $d+q$}} (6);
    \draw[thick] [->](6) to  node[left=4] {{\footnotesize $d$}} (5);
    \draw[thick,dashed] [->] (4) to  node[above=2] {{\footnotesize $k$}} (5);
    \draw[line width=0.5mm] [->](7) to [bend right] node[left=5] {{\footnotesize $d+q$}} (9);
    \draw[thick] [->](9) to node[right=4] {{\footnotesize $d$}} (7);
    \draw[line width=0.5mm] [->](8) to [bend left] node[right=5] {{\footnotesize $d+q$}} (9);
    \draw[thick] [->](9) to  node[left=4] {{\footnotesize $d$}} (8);
    \draw[thick,dashed] [->] (8) to [bend right] node[above=5] {{\footnotesize $k$}} (7);
    \draw[thick,dashed] [->] (7) to  node[below=2] {{\footnotesize $k$}} (8);
\end{scope}
\end{tikzpicture}
\caption{(0) The tributary stream network, where 1,2 are upstream nodes and 3 is a downstream node; (1) the tributary stream network with one-directional edge modification; (2) the tributary stream network with bi-directional edge modification.
}\label{fig-tributary}
%\vskip -10pt
\end{figure}
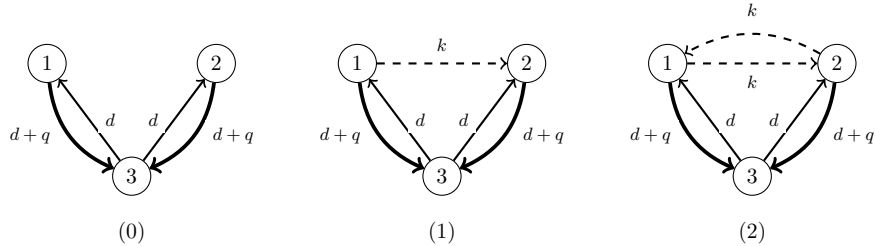

Figure \ref{fig-tributary} illustrates the tributary stream network with two types of edge modification. 
The 3-patch tributary stream has connection matrix
    \[
    L_\mathcal{T}=\begin{bmatrix}
    -d-q & 0 & d \\
    0 & -d-q & d \\
    d+q & d+q & -2d \\
    \end{bmatrix}.
    \]
The modified connection matrices are $L_{\mathcal{T},m}(k) = L_\mathcal{T} + kE_{\TT,m}$, where $m=1,2$ correspond to the one-directional and bi-directional edge modifications. In particular, the matrices $E_\TT^m$ are given as follows:
\[
    E_{\TT,1} =\begin{bmatrix}
    -1 & 0 & 0 \\
    1 & 0 & 0 \\
    0 & 0 & 0 \\
    \end{bmatrix}, \quad  E_{\TT,2} =\begin{bmatrix}
    -1 & 1 & 0 \\
    1 & -1 & 0 \\
    0 & 0 & 0 \\
    \end{bmatrix}.
\]
We denote the metapopulation growth rate and network biomass of the modified tributary stream network by $\rho_{\TT,m}(k)$ and $\MM_{\TT,m}(k)$ for $m=1, 2$.

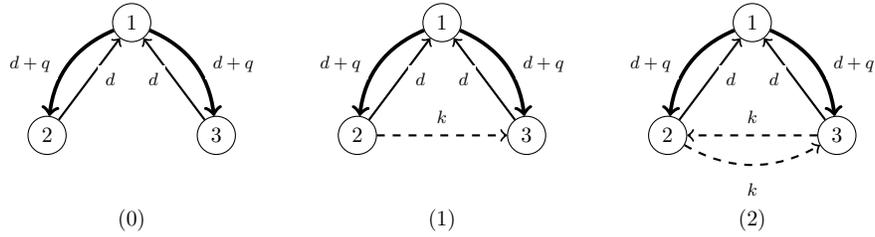
\begin{figure}[htbp]
\centering
\begin{tikzpicture}[scale=0.75, transform shape]
\begin{scope}[every node/.style={draw}, node distance= 1.5 cm]

    \node[draw=white] (i) at (-5.5,-3.5) {(0)};
    \node[draw=white] (ii) at (0,-3.5) {(1)};
    \node[draw=white] (iii) at (5.5,-3.5) {(2)};
    \node[circle] (1) at (-5.5,0) {$1$};
    \node[circle] (2) at (-7,-2) {$2$};
    \node[circle] (3) at (-4,-2) {$3$};  
    \node[circle] (4) at (0,0) {$1$};
    \node[circle] (5) at (-1.5,-2) {$2$};
    \node[circle] (6) at (1.5,-2) {$3$};
    \node[circle] (7) at (5.5, 0) {$1$};
    \node[circle] (8) at (4,-2) {$2$};
    \node[circle] (9) at (7,-2) {$3$};
\end{scope}
\begin{scope}[every node/.style={fill=white},
              every edge/.style={thick}]

    \draw[line width=0.5mm] [->](1) to [bend right] node[left=5] {{\footnotesize $d+q$}} (2);
    \draw[thick] [->](2) to node[right=4] {{\footnotesize $d$}} (1);  
    \draw[line width=0.5mm] [->](1) to [bend left] node[right=5] {{\footnotesize $d+q$}} (3);
    \draw[thick] [->](3) to node[left=4] {{\footnotesize $d$}} (1);
    \draw[line width=0.5mm] [->](4) to [bend right] node[left=5] {{\footnotesize $d+q$}} (5);
    \draw[thick] [->](5) to node[right=4] {{\footnotesize $d$}} (4);
    \draw[line width=0.5mm] [->](4) to [bend left] node[right=5] {{\footnotesize $d+q$}} (6);
    \draw[thick] [->](6) to  node[left=4] {{\footnotesize $d$}} (4);
    \draw[thick,dashed] [->] (5) to  node[above=2] {{\footnotesize $k$}} (6);
    \draw[line width=0.5mm] [->](7) to [bend right] node[left=5] {{\footnotesize $d+q$}} (8);
    \draw[thick] [->](8) to node[right=4] {{\footnotesize $d$}} (7);
    \draw[line width=0.5mm] [->](7) to [bend left] node[right=5] {{\footnotesize $d+q$}} (9);
    \draw[thick] [->](9) to  node[left=4] {{\footnotesize $d$}} (7);
    \draw[thick,dashed] [->] (8) to [bend right] node[below=5] {{\footnotesize $k$}} (9);
    \draw[thick,dashed] [->] (9) to  node[above=2] {{\footnotesize $k$}} (8);
\end{scope}
\end{tikzpicture}
\caption{(0) The distributary stream network, where 1 is an upstream node and 2, 3 are downstream nodes; (1) the distributary stream network with one-directional edge modification; (2) the distributary stream network with bi-directional edge modification.
}\label{fig-distributary}
%\vskip -10pt
\end{figure}

Figure \ref{fig-distributary} illustrates the distributary stream network with two types of edge modification. 
The 3-patch distributary stream has connection matrix
    \[
    L_\mathcal{D} =\begin{bmatrix}
  -2d-2q & d & d \\
  d+q & -d & 0 \\
  d+q & 0 & -d \\ 
\end{bmatrix}.
    \]
The modified connection matrices are $L_{\DD,m}(k) = L_\DD + kE_{\DD,m}$, where $m=1,2$ correspond to the one-directional and bi-directional edge modifications. In particular, the matrices $E_{\DD,m}$ are given as follows:
\[
    E_{\DD,1} =\begin{bmatrix}
    0 & 0 & 0 \\
    0 & -1 & 0 \\
    0 & 1 & 0 \\
    \end{bmatrix}, \quad  E_{\DD,2} =\begin{bmatrix}
    0 & 0 & 0 \\
    0 & -1 & 1 \\
    0 & 1 & -1\\
    \end{bmatrix}.
\]
We denote the metapopulation growth rate and network biomass of the modified distributary stream network by $\rho_{\DD,m}(k)$ and $\MM_{\DD,m}(k)$ for $m=1, 2$.

\section{Impact of edge modifications on the metapopulation growth rate}\label{sec:growth}

In this section, we investigate the effects of the two types of edge modification on the metapopulation growth rate. We first present the results on the bi-directional edge modification.

\subsection{Bi-directional edge modification}

First, we consider a more general setting. Let $L$ be an $n\times n$ irreducible connection matrix. Consider the modified connection matrix
\[
L(k) = L + kE,
\]
where $E$ is also an $n\times n$  connection matrix. Then it is easy to check that $L(k)$ is still an $n\times n$ irreducible connection matrix. Let $\rho(k)$ be the metapopulation growth rate corresponding to the modified connection matrix $L(k)$. By the Perron-Frobenius Theorem \cite{berman1994nonnegative}, $\rho(k)$ is an eigenvalue of $L(k)+R$ and it satisfies the following eigenvalue problem:
\begin{equation}\label{eq:rhok}
    (L(k)+R)\bm x = \rho(k) \bm x,
\end{equation}
where $\bm x$ can be chosen to be a normalized positive vector. 
Next, we state a technical lemma which allows us to compute $\rho'(k)$.

\begin{lemma}\label{lemma:rhok'}
Assume that there exists a vector $\bm v\in\mathbb{R}^n$ satisfying $v_i(\ell_{ij}+ke_{ij})=v_j(\ell_{ji}+ke_{ji})$ for all $i, j$ (or equivalently $\text{diag}({\bm v})L(k)$ is symmetric). Then we have
\[
\rho'(k)=\frac{(E\bm x\circ \bm x)\cdot \bm v}{(\bm x\circ \bm x)\cdot \bm v},
\] 
where $\circ$ represents the Hadamard product of two vectors.
\end{lemma}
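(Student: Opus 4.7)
The plan is to apply first-order eigenvalue perturbation theory to \eqref{eq:rhok}. Since $L(k)+R$ is an irreducible Metzler matrix, a standard application of the Perron--Frobenius theorem (to $L(k)+R+cI$ for $c$ large) guarantees that $\rho(k)$ is a simple eigenvalue and that both $\rho(k)$ and a positive eigenvector $\bm x(k)$ may be chosen smoothly in $k$. Differentiating \eqref{eq:rhok} in $k$ then gives
\[
E\bm x+(L(k)+R)\bm x'=\rho'(k)\bm x+\rho(k)\bm x'.
\]
If $\bm w$ is any left eigenvector of $L(k)+R$ for the eigenvalue $\rho(k)$, taking the inner product with $\bm w$ cancels the $\bm x'$ contributions and yields the Rayleigh--Schr\"odinger type formula
\[
\rho'(k)=\frac{\bm w^T E\bm x}{\bm w^T \bm x}.
\]

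The crux of the proof is identifying the left eigenvector $\bm w$ using the hypothesis on $\bm v$. I would claim that $\bm w:=\bm v\circ \bm x$ does the job. The symmetry of $\mathrm{diag}(\bm v)L(k)$ translates entrywise to $v_i L(k)_{ij}=v_j L(k)_{ji}$ for all $i,j$ (including diagonal entries, where it is automatic). Using this for each component $j$,
\[
\big((\bm v\circ \bm x)^T L(k)\big)_j=\sum_i v_i x_i L(k)_{ij}=\sum_i v_j L(k)_{ji} x_i=v_j (L(k)\bm x)_j.
\]
Since $R$ is diagonal, $\big((\bm v\circ\bm x)^T R\big)_j = v_j x_j r_j$, and combining with $(L(k)+R)\bm x=\rho(k)\bm x$ gives $(\bm v\circ\bm x)^T(L(k)+R)=\rho(k)(\bm v\circ\bm x)^T$, confirming that $\bm v\circ\bm x$ is indeed a left Perron eigenvector.

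Substituting $\bm w=\bm v\circ\bm x$ into the perturbation formula and expanding the inner products coordinatewise produces
\[
\rho'(k)=\frac{\sum_i v_i x_i (E\bm x)_i}{\sum_i v_i x_i^2}=\frac{(E\bm x\circ \bm x)\cdot \bm v}{(\bm x\circ \bm x)\cdot \bm v},
\]
as claimed. I do not foresee a serious obstacle. The smoothness of $\rho(k)$ and $\bm x(k)$ is routine given the simplicity of the Perron eigenvalue, and the reversibility hypothesis is exactly what is needed to convert the right eigenvector $\bm x$ into a left eigenvector by reweighting with $\bm v$. The only mild care required is to verify the symmetry relation on the diagonal as well as off-diagonal entries, which is automatic from the column-sum-zero structure of $L(k)$.
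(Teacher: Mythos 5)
Your proof is correct, and it reaches the formula by a route that is conceptually different from, though ultimately equivalent to, the paper's. The paper never mentions left eigenvectors: it takes the Hadamard product of the eigenvalue equation with $\bm x'$ and of the differentiated equation with $\bm x$, subtracts, dots the result with $\bm v$, and kills the cross terms $(L+kE)\bm x'\circ\bm x\cdot\bm v-(L+kE)\bm x\circ\bm x'\cdot\bm v$ by an index-switching computation that uses the symmetry hypothesis directly. You instead observe that the hypothesis $v_iL(k)_{ij}=v_jL(k)_{ji}$ is exactly the statement that reweighting the right Perron vector by $\bm v$ produces a left Perron vector $\bm w=\bm v\circ\bm x$, and then invoke the standard first-order perturbation formula $\rho'(k)=\bm w^TE\bm x/\bm w^T\bm x$. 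Your version makes the mechanism explicit (the paper's Hadamard manipulation is precisely this left-eigenvector pairing in disguise), and it has the side benefit of justifying the differentiability of $\rho(k)$ and $\bm x(k)$ via simplicity of the Perron root, which the paper takes for granted. Two minor points: the identity $v_iL(k)_{ii}=v_iL(k)_{ii}$ on the diagonal is trivially true for any matrix, so no appeal to the column-sum-zero structure is needed there; and, as in the paper's proof, the formula implicitly requires $(\bm x\circ\bm x)\cdot\bm v\neq 0$, which holds in all applications since $\bm v$ is chosen positive but is not forced by the hypothesis $\bm v\in\mathbb{R}^n$ alone.
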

\begin{proof}

On one hand, we take the Hadamard product of both sides of equation \eqref{eq:rhok} with $\bm x'$ to obtain
\begin{equation}\label{eq:xx'}
(L+kE+R)\bm x\circ\bm x' = \rho(k) (\bm x\circ\bm x').
\end{equation}

On the other hand, Differentiating equation \eqref{eq:rhok} in terms of $k$ yields
\begin{equation}\label{eq:rhok'}
    (L+kE+R)\bm x'+E \bm x = \rho'(k) \bm x + \rho(k) \bm x'.
\end{equation}
We take the Hadamard product of both sides of equation \eqref{eq:rhok'} with $\bm x$ to obtain
\begin{equation}\label{eq:x'x}
(L+kE+R)\bm x'\circ\bm x + E\bm x\circ \bm x = \rho'(k) (\bm x\circ\bm x) +\rho(k)(\bm x'\circ\bm x).
\end{equation}
Since $R$ is a diagonal matrix, we have $(R\bm x)\circ \bm x' = (R\bm x')\circ\bm x$. Thus subtracting equation \eqref{eq:xx'} from equation \eqref{eq:x'x} yields
\begin{equation}\label{eq:x'x-xx'}
\rho'(k)(\bm x\circ\bm x) = E\bm x\circ \bm x + (L+kE)\bm x'\circ\bm x - (L+kE)\bm x\circ\bm x'.
\end{equation}
We take the dot product of both sides of equation \eqref{eq:x'x-xx'} with $\bm v$ to get
\begin{equation}\label{eq:dotv}
\rho'(k)(\bm x\circ\bm x)\cdot \bm v = (E\bm x\circ \bm x)\cdot\bm v + (L+kE)\bm x'\circ\bm x \cdot\bm v- (L+kE)\bm x\circ\bm x'\cdot\bm v.
\end{equation}
We notice that
\begin{align*}
   (L+kE)\bm x'\circ\bm x \cdot\bm v  &= \sum_{i,j} v_i (\ell_{ij}+ke_{ij})x'_j x_i \\
   &=\sum_{i,j} v_j (\ell_{ji}+ke_{ji})x'_i x_j\\
   &=\sum_{i,j} v_i(\ell_{ij}+ke_{ij})x_j x'_i\\
   &=(L+kE)\bm x\circ\bm x'\cdot\bm v,
\end{align*}
where we obtain the second equality by switching indices and the third equality by using the assumption on $\bm v$. Thus plugging this into equation \eqref{eq:dotv} gives
\[
\rho'(k)(\bm x\circ\bm x)\cdot \bm v = (E\bm x\circ \bm x)\cdot\bm v \implies \rho'(k)=\frac{(E\bm x\circ \bm x)\cdot \bm v}{(\bm x\circ \bm x)\cdot \bm v}.
\]
\end{proof} 

We have the following results for the metapopulation growth rate of the modified 3-patch stream networks, which state that the population growth rate is decreasing with respect to the modification rate $k$. 
\begin{theorem}\label{thm:rho'-3patch}
The following statements hold:
\begin{enumerate}[(a)]
    \item For the tributary stream network, we have
    \[
    \rho'_{\TT,2}(k) \leq 0,
    \]
    where the equality happens if and only if $r_1=r_2$.
    \item For the distributary stream network, we have
    \[
    \rho'_{\DD,2}(k) \leq 0,
    \]
    where the equality happens if and only if $r_2=r_3$.
\end{enumerate}
\end{theorem}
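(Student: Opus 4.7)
The plan is to apply Lemma \ref{lemma:rhok'} in both cases, which reduces the theorem to showing that $(E\bm x\circ \bm x)\cdot \bm v \le 0$ for an appropriate choice of the weight vector $\bm v$. The first step is therefore to find $\bm v$ satisfying the detailed-balance condition $v_i(\ell_{ij}+ke_{ij})=v_j(\ell_{ji}+ke_{ji})$ for each modified network. For the tributary matrix $L_{\TT,2}(k)$, the nontrivial $(1,2)$ entries both equal $k$, forcing $v_1=v_2$, while the edges to the downstream node $3$ give $v_1 d = v_3 (d+q)$. So one may take $\bm v_\TT = (d+q,\, d+q,\, d)$, which is independent of $k$. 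An analogous computation for $L_{\DD,2}(k)$, with the symmetric $(2,3)$ entries equal to $k$ and the edges from node $1$ giving $v_1 d = v_2(d+q)$, yields $\bm v_\DD = (d+q,\, d,\, d)$.

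The second step is a direct computation. For the tributary case, $E_{\TT,2}\bm x = (x_2 - x_1,\, x_1 - x_2,\, 0)^\top$, so
\begin{equation*}
(E_{\TT,2}\bm x \circ \bm x)\cdot \bm v_\TT = (d+q)\bigl[(x_2-x_1)x_1 + (x_1-x_2)x_2\bigr] = -(d+q)(x_1-x_2)^2 \le 0.
\end{equation*}
The distributary case is entirely parallel and gives $(E_{\DD,2}\bm x \circ \bm x)\cdot \bm v_\DD = -d(x_2-x_3)^2 \le 0$. Since $\bm x$ is a positive Perron eigenvector, the denominator $(\bm x\circ \bm x)\cdot \bm v$ in Lemma \ref{lemma:rhok'} is strictly positive, so in both cases $\rho'(k)\le 0$ with equality exactly when the relevant two components of $\bm x$ coincide.

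The third step, and the one requiring the most care, is translating the eigenvector equality $x_1 = x_2$ (resp.\ $x_2 = x_3$) into the parameter equality $r_1 = r_2$ (resp.\ $r_2 = r_3$). For the forward direction in the tributary case, I would subtract the first two rows of $(L_{\TT,2}(k)+R)\bm x = \rho\bm x$; after simplification the coupling terms collapse and one obtains
\begin{equation*}
-(d+q+2k)(x_1-x_2) + r_1 x_1 - r_2 x_2 = \rho(x_1 - x_2),
\end{equation*}
so setting $x_1 = x_2$ and using $x_1>0$ forces $r_1 = r_2$. Conversely, if $r_1 = r_2$ then the matrix $L_{\TT,2}(k)+R$ is invariant under the permutation swapping indices $1$ and $2$, and the Perron vector, being unique up to scaling, must share this symmetry, giving $x_1 = x_2$. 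The distributary case is handled by the analogous permutation argument on indices $2$ and $3$.

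The main obstacle I anticipate is purely bookkeeping: verifying that the $k$-dependent off-diagonal entries are genuinely symmetric (so that $\bm v$ can be chosen independent of $k$), and keeping the Hadamard products organized in the computation. The equality characterization via the symmetry of the Perron vector is a short but essential observation that requires invoking uniqueness of the positive eigenvector from the Perron-Frobenius theorem, which holds since $L_{\TT,2}(k)+R$ and $L_{\DD,2}(k)+R$ remain irreducible.
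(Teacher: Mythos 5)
Your proposal is correct and follows essentially the same route as the paper: it invokes Lemma \ref{lemma:rhok'} with the same weight vector up to an overall scalar (the paper takes $\bm v=(1,1,\tfrac{d}{d+q})$ and $(1,\tfrac{d}{d+q},\tfrac{d}{d+q})$, which are your $\bm v_\TT$ and $\bm v_\DD$ divided by $d+q$), arrives at the same expression $-c\,(x_i-x_j)^2$ for the numerator, and characterizes equality by the same subtraction-of-rows and relabeling arguments.
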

\begin{proof}
(a) For the tributary stream network, the modified connection matrix is
\[
L_{\TT,2}(k) = \begin{bmatrix}
    -d-q-k & k & d \\
    k & -d-q-k & d \\
    d+q & d+q & -2d \\
    \end{bmatrix}.
\]
Let $\bm v= \left(1,1,\frac{d}{d+q}\right)^T$, then 
\[
\text{diag}(\bm v) L_{\TT,2}(k) = \begin{bmatrix}
    -d-q-k & k & d \\
    k & -d-q-k & d \\
    d & d & -2d^2/(d+q) \\
    \end{bmatrix}
\]
which is a symmetric matrix. Thus by Lemma \ref{lemma:rhok'}, we have
\[
  \rho'_{\TT,2}(k)=\frac{(E\bm x\circ \bm x)\cdot \bm v}{(\bm x\circ \bm x)\cdot \bm v} = \frac{(-x_1+x_2)x_1v_1+ (x_1-x_2)x_2v_2}{x_1^2v_1+x_2^2v_2} = \frac{-(x_1-x_2)^2}{x_1^2+x_2^2} \leq 0.
\]
The equality happens if and only if $x_1=x_2$. Recall that $\bm x$ is the eigenvector satisfying the eigenvalue problem:
    \[    (L(k)+R)\bm x = \rho(k) \bm x.\]
So $x_1, x_2$ satisfy
\begin{align*}
    &(-d-q-k+r_1)x_1+kx_2+dx_3=\rho(k)x_1\\
    &kx_1+(-d-q-k+r_2)x_2+dx_3=\rho(k)x_2.
\end{align*}
If $x_1=x_2$, then subtracting one equation from the other yields $r_1=r_2$. On the other hand, if $r_1=r_2$, patch 1 and 2 are identical and by a reordering argument, we must also have $x_1=x_2$. Thus $  \rho'_{\TT,2}(k)=0$ if and only if $r_1=r_2$.

(b) For the distributary stream network,  the modified connection matrix is
\[
L_{\DD,2}(k) = \begin{bmatrix}
-2d-2q & d & d \\
  d+q & -d-k & k \\
  d+q & k & -d-k \\ 
    \end{bmatrix}.
\]
Let $\bm v= \left(1,\frac{d}{d+q},\frac{d}{d+q}\right)^T$, then 
\[
\text{diag}(\bm v) L_{\DD,2}(k) = \begin{bmatrix}
-2d-2q & d & d \\
  d & (-d-k)d/(d+q) & kd/(d+q) \\
  d & kd/(d+q) & (-d-k)d/(d+q) \\ 
    \end{bmatrix}
\]
which is a symmetric matrix. Again, applying Lemma \ref{lemma:rhok'} and using a similar argument for when the equality happens, we obtain the desired result.
\end{proof}

\subsection{One-directional edge modification}

 For the one-directional edge modification, we first observe that the matrices $E_{\TT,1}$ and $E_{\DD,1}$ correspond to rank one perturbations of the stream networks $L_\mathcal{T}+R$ and $L_\mathcal{D}+R$, respectively. Therefore, these edge modifications fall under the scenario studied in \cite{hadeler2008monotone}. Applying Theorem 1 from \cite{hadeler2008monotone}, we have that the metapopulation growth rates $\rho_{\TT,1}(k)$ and $\rho_{\DD,1}(k)$  of the perturbed matrices $L_{\mathcal{T}, 1}(k) + R$  and  $L_{\mathcal{D}, 1}(k) + R$ are either strictly increasing, strictly decreasing, or constant. To determine the condition for each case, it suffices to compare the growth rates at $k=0$ and at $k\to\infty$.

Following \cite{hadeler2008monotone}, for the tributary stream network where the one-directional edge is added from patch 1 to patch 2 (see Figure \ref{fig-tributary}(1)), we consider two matrices
\[
A= L_{\TT,1}(0) + R = L_\TT+R = \begin{bmatrix}
    -d-q +r_1& 0 & d \\
    0 & -d-q +r_2& d \\
    d+q & d+q & -2d+r_3 \\
    \end{bmatrix},
\]
and
\[
B= \begin{bmatrix}
    -d-q +r_2& 2d \\
    d+q & -2d+r_3 \\
    \end{bmatrix},
\]
where $B$ is obtained from $A$ by adding row $1$ to row $2$, then deleting the first row and column of the resulting matrix. Informally, the matrix $B$ corresponds to the case when $k\to\infty$. 

\begin{proposition}[{\cite[Theorem 1]{hadeler2008monotone}}]\label{prop:hadeler}
The following statements hold:
\begin{enumerate}[(a)]
    \item If $s(A)=s(B)$, then $\rho'_{\TT,1}(k)=0$ for all $k\ge 0$;
    \item If $s(A)>s(B)$, then $\rho'_{\TT,1}(k)<0$ for all $k\ge 0$;
    \item If $s(A)<s(B)$, then $\rho'_{\TT,1}(k)>0$ for all $k\ge 0$.
\end{enumerate}
\end{proposition}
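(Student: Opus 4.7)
The plan is to exploit the rank-one structure of the perturbation $E_{\TT,1}$ together with Perron--Frobenius theory. First I would observe that $E_{\TT,1} = (\bm e_2 - \bm e_1)\bm e_1^T$ is rank one and that $M(k) := L_{\TT,1}(k) + R$ is essentially nonnegative and irreducible for every $k \ge 0$. Hence $\rho_{\TT,1}(k) = s(M(k))$ is a simple eigenvalue that depends analytically on $k$, and it admits positive right and left eigenvectors $\bm x(k)$ and $\bm v(k)$. Standard perturbation theory for simple eigenvalues then yields
\[
\rho_{\TT,1}'(k) = \frac{\bm v(k)^T E_{\TT,1}\,\bm x(k)}{\bm v(k)^T \bm x(k)} = \frac{\bigl(v_2(k)-v_1(k)\bigr)\,x_1(k)}{\bm v(k)^T \bm x(k)},
\]
so the sign of $\rho_{\TT,1}'(k)$ is exactly the sign of $v_2(k)-v_1(k)$.

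The next step is to show that this sign is constant in $k$. Reading the left eigenvalue equation $\bm v^T M(k) = \rho\, \bm v^T$ column by column gives, for the first two columns,
\[
(-d-q-k+r_1)v_1 + k v_2 + (d+q)v_3 = \rho v_1, \qquad (-d-q+r_2)v_2 + (d+q)v_3 = \rho v_2.
\]
If at some $k_0$ one had $v_1(k_0) = v_2(k_0)$, the $k$-terms cancel in the first equation, and subtracting the two equations forces $(r_1-r_2)v_1 = 0$, hence $r_1 = r_2$. So if $r_1 \ne r_2$, then $v_2(k) - v_1(k)$ never vanishes and $\rho_{\TT,1}'$ has constant nonzero sign on $[0,\infty)$; if $r_1 = r_2$, the symmetry of $M(k)$ under swapping patches $1$ and $2$ forces $v_1 \equiv v_2$ (and $x_1 \equiv x_2$), so $\rho_{\TT,1}'(k) \equiv 0$, giving case (a).

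To distinguish cases (b) and (c), I would identify the sign by comparing the endpoints $\rho_{\TT,1}(0)=s(A)$ and $\lim_{k\to\infty}\rho_{\TT,1}(k)=s(B)$. For the limit, the first row of $M(k)\bm x = \rho\bm x$ is $(-d-q-k+r_1)x_1 + d x_3 = \rho x_1$, which forces $x_1 = O(1/k)$ as $k \to \infty$. Summing the first two rows eliminates $k$ entirely and, in the limit $x_1 \to 0$, produces $(-d-q+r_2)x_2 + 2d x_3 = \rho x_2$; meanwhile the third row reduces to $(d+q)x_2 + (-2d+r_3)x_3 = \rho x_3$. These are precisely the eigenvalue equations for $B$ acting on $(x_2,x_3)^T$, so $\lim_{k\to\infty}\rho_{\TT,1}(k) = s(B)$. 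Combining monotonicity with these endpoint values yields: if $s(A)>s(B)$ then $\rho_{\TT,1}$ must decrease, so $\rho_{\TT,1}'<0$; if $s(A)<s(B)$ then $\rho_{\TT,1}'>0$; and $s(A)=s(B)$ corresponds to the constant case.

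The main obstacle I expect is step two, namely ruling out a sign change of $v_2-v_1$ without circularity; the column-subtraction trick above sidesteps having to invoke a global monotonicity theorem. A secondary technical point is the $k\to\infty$ limit, where one must justify continuous dependence of the normalized Perron eigenvector up to and including the boundary behavior $x_1 = O(1/k)$, which is standard but worth verifying by viewing $1/k$ as a smooth perturbation parameter of the rescaled matrix.
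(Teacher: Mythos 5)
The paper offers no proof of this proposition at all --- it is quoted from \cite[Theorem 1]{hadeler2008monotone} --- so what you have written is a self-contained reconstruction rather than an alternative to an in-paper argument. Most of it is sound: the factorization $E_{\TT,1}=(\bm e_2-\bm e_1)\bm e_1^T$ and the perturbation formula correctly reduce the sign of $\rho'_{\TT,1}(k)$ to the sign of $v_2(k)-v_1(k)$; the column-subtraction argument showing that $v_1(k_0)=v_2(k_0)$ forces $r_1=r_2$ is correct; and the identification $\lim_{k\to\infty}\rho_{\TT,1}(k)=s(B)$ by summing the first two rows is correct (you should add that $\rho_{\TT,1}(k)$ is bounded, e.g.\ $\rho_{\TT,1}(k)\le\max_i r_i$ by the column sums of $L_{\TT,1}(k)+R$, so the limit of the monotone function exists and the compactness argument for the normalized eigenvector goes through). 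The endpoint trichotomy then delivers (a)--(c).

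There is, however, one genuinely wrong step: the claim that when $r_1=r_2$ ``the symmetry of $M(k)$ under swapping patches $1$ and $2$ forces $v_1\equiv v_2$ (and $x_1\equiv x_2$).'' For $k>0$ the matrix $M(k)$ is \emph{not} invariant under that swap: the added edge is one-directional, so $m_{12}=0$ while $m_{21}=k$, and the $-k$ sits in $m_{11}$ but not in $m_{22}$. Indeed $x_1\equiv x_2$ is false: with $r_1=r_2=r$, subtracting the second row of $M(k)\bm x=\rho\bm x$ from the first gives $(x_1-x_2)(-d-q+r-\rho)=2kx_1>0$, and since $\rho=s(M(k))>m_{22}=-d-q+r$ (apply Lemma \ref{lem:PQ} with $P$ the diagonal part of $M(k)$ and $Q$ its off-diagonal part), this yields $x_1<x_2$ for every $k>0$. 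The statement you actually need, $v_1\equiv v_2$, is nevertheless true, but it must be proved by the same device you used in the other direction: subtracting the second column equation from the first with $r_1=r_2=r$ gives $(v_1-v_2)(-d-q+r-k-\rho)=0$, and $\rho>m_{11}=-d-q-k+r$ forces $v_1=v_2$. With that repair, case (a) --- and hence the full trichotomy --- is complete.
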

To compare $s(A)$ and $s(B)$, we first state some technical lemmas. The first lemma is a well-known result (e.g., see \cite[Corollary 2.1.5]{berman1994nonnegative}) .
\begin{lemma}\label{lem:PQ}
Suppose that $P$ and $Q$ are $n\times n$ real-valued matrices, $P$ is essentially
nonnegative (i.e., all off-diagonal entries are nonnegative), $Q$ is nonnegative and nonzero, and $P +Q$ is irreducible. Then $s(P+Q) > s(P)$.
\end{lemma}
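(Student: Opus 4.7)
The plan is to reduce the statement about essentially nonnegative matrices to the classical strict monotonicity of the Perron root for irreducible nonnegative matrices, by the standard trick of shifting by a scalar multiple of the identity.

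First I would pick a constant $\alpha \geq 0$ large enough that $N := P + \alpha I$ has all nonnegative entries; this is possible because $P$ is essentially nonnegative, so only the diagonal entries can be negative, and $\alpha \geq \max_i |P_{ii}|$ suffices. Then $N$ and $N+Q$ are both nonnegative matrices, and shifting by a multiple of the identity only translates the spectrum, so
\[
s(P) = s(N) - \alpha, \qquad s(P+Q) = s(N+Q) - \alpha.
\]
For nonnegative matrices the Perron--Frobenius theorem gives $s(N) = \rho(N)$ and $s(N+Q) = \rho(N+Q)$, where $\rho(\cdot)$ denotes the spectral radius. Hence proving $s(P+Q) > s(P)$ reduces to proving $\rho(N+Q) > \rho(N)$.

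Next I would verify the hypotheses needed to apply strict Perron monotonicity. The matrix $N+Q = (P+Q) + \alpha I$ is irreducible because $P+Q$ is irreducible and adding $\alpha I$ does not affect the zero pattern of the off-diagonal entries. Moreover $N+Q \geq N \geq 0$ entrywise, and $(N+Q) - N = Q \neq 0$. At this point I would invoke the standard theorem that if $A \geq B \geq 0$ with $A$ irreducible and $A \neq B$, then $\rho(A) > \rho(B)$; applied with $A = N+Q$ and $B = N$, this yields $\rho(N+Q) > \rho(N)$, and combining with the shift identities above gives $s(P+Q) > s(P)$.

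I do not expect a serious obstacle here, since the strict monotonicity of the Perron root is classical (this is essentially the cited Corollary~2.1.5 of \cite{berman1994nonnegative}). The only subtle point is ensuring that strict inequality is preserved: this uses irreducibility of $N+Q$ in an essential way, because for reducible nonnegative matrices one only gets $\rho(A) \geq \rho(B)$. If one wanted a self-contained justification of $\rho(N+Q) > \rho(N)$, the cleanest route would be to let $\bm y > 0$ be the positive left Perron eigenvector of $N+Q$, write $\bm y^T N = \rho(N+Q)\bm y^T - \bm y^T Q$, note that $\bm y^T Q$ is a nonzero nonnegative row vector, and conclude via the inequality $\bm y^T N \leq \rho(N+Q) \bm y^T$ with strict inequality in at least one coordinate that $\rho(N) < \rho(N+Q)$.
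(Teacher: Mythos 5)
The paper does not actually prove this lemma; it simply cites it as a well-known fact (Corollary~2.1.5 of Berman--Plemmons). Your argument is correct and is essentially the standard derivation of that cited fact: shift by $\alpha I$ to reduce to nonnegative matrices, identify the spectral bound with the Perron root, and invoke strict monotonicity of the Perron root under an entrywise increase when the larger matrix is irreducible. All the hypotheses are checked correctly, so the proof is complete as a replacement for the citation.

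One small caveat about your optional ``self-contained'' aside at the end: from $\bm y^T N \leq \rho(N+Q)\,\bm y^T$ with strict inequality in at least one coordinate one can only conclude $\rho(N) \leq \rho(N+Q)$ in general; strictness in a single coordinate does not by itself force strict inequality of spectral radii (take $N = \mathrm{diag}(1,0)$, $\bm y = (1,1)^T$, $c=1$). To make that route airtight, suppose $\rho(N) = \rho(N+Q) =: c$, take a nonnegative right eigenvector $\bm x$ of $N$ for the eigenvalue $c$, and pair: $c\,\bm y^T\bm x = \bm y^T N\bm x = c\,\bm y^T\bm x - \bm y^T Q\bm x$, so $\bm y^T Q \bm x = 0$, hence $Q\bm x = 0$; then $(N+Q)\bm x = c\bm x$, so by irreducibility of $N+Q$ the vector $\bm x$ is strictly positive, which together with $Q \geq 0$ and $Q\bm x = 0$ forces $Q = 0$, a contradiction. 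Since your main argument relies on the classical theorem rather than this sketch, the proof as a whole stands.
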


\begin{lemma}\label{lem:SB}
    We have $s(B)>\max\{-d-q+r_2,-2d+r_3\}$.
\end{lemma}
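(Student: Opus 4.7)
The plan is to realize $B$ as a perturbation $P+Q$ of a diagonal matrix and apply Lemma \ref{lem:PQ} directly. Specifically, I would write
\[
B = P + Q, \qquad P = \begin{bmatrix} -d-q+r_2 & 0 \\ 0 & -2d+r_3 \end{bmatrix}, \qquad Q = \begin{bmatrix} 0 & 2d \\ d+q & 0 \end{bmatrix}.
\]
Since $P$ is diagonal, it is trivially essentially nonnegative, and its spectral bound equals the larger of its diagonal entries, namely $s(P)=\max\{-d-q+r_2,\,-2d+r_3\}$.

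Next I would verify the hypotheses of Lemma \ref{lem:PQ} for $Q$: because $d>0$ and $d+q>0$, both off-diagonal entries of $Q$ are strictly positive, so $Q$ is nonnegative and nonzero. The sum $P+Q=B$ is then irreducible (its directed graph on two nodes contains both arcs $1\to 2$ and $2\to 1$). Applying Lemma \ref{lem:PQ} yields $s(B)=s(P+Q)>s(P)$, which is exactly the claimed inequality.

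There is essentially no obstacle here: the entire argument is a one-line application of Lemma \ref{lem:PQ} once the splitting is written down. The only thing to be careful about is confirming that the splitting's diagonal part really does have spectral bound equal to $\max\{-d-q+r_2,-2d+r_3\}$, which is immediate since a diagonal real matrix has its diagonal entries as eigenvalues.
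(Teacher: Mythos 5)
Your proof is correct and follows essentially the same route as the paper: both apply Lemma \ref{lem:PQ} to a splitting $B=P+Q$ in which $P$ is triangular or diagonal with spectral bound $\max\{-d-q+r_2,-2d+r_3\}$ and $Q$ is nonnegative and nonzero with $P+Q$ irreducible. The only (immaterial) difference is that the paper keeps the entry $d+q$ in $P$, making $P$ lower triangular, whereas you move it into $Q$; either choice satisfies the hypotheses of Lemma \ref{lem:PQ} and yields the same conclusion.
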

\begin{proof}
We have
\[
B= \begin{bmatrix}
    -d-q +r_2& 2d \\
    d+q & -2d+r_3 \\
    \end{bmatrix} = \begin{bmatrix}
    -d-q +r_2& 0 \\
    d+q & -2d+r_3 \\
    \end{bmatrix} +\begin{bmatrix}
    0& 2d \\
    0 & 0 \\
    \end{bmatrix}. 
\]
Let $P$ and $Q$ be the two matrices in the above sum, then they satisfy the conditions in Lemma \ref{lem:PQ}. Thus $s(B) > s(P) = \max\{-d-q+r_2,-2d+r_3\}$.
\end{proof}

For the tributary stream network, the following result holds.
\begin{theorem}\label{thm:rho'_onedir_tributary}
The following statements hold for the tributary stream network:
\begin{enumerate}[(a)]
    \item If $r_1=r_2$, then $\rho'_{\TT,1}(k) = 0$ for all $k\ge 0$;
    \item If $r_1>r_2$, then $\rho'_{\TT,1}(k) <0$ for all $k\ge 0$;
    \item If $r_1<r_2$, then $\rho'_{\TT,1}(k) >0$ for all $k\ge 0$.
\end{enumerate}   
\end{theorem}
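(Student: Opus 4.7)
The plan is to invoke Proposition \ref{prop:hadeler}, which reduces the entire theorem to determining the sign of $s(A) - s(B)$ under the three hypotheses on $r_1$ versus $r_2$. The main idea is to use the Perron eigenvector of $A$ to build a test vector for $B$, and then apply a Collatz-Wielandt-type comparison.

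First, I would let $\bm x = (x_1, x_2, x_3)^T > 0$ be the Perron eigenvector of $A$ with eigenvalue $\mu = s(A) = \rho_{\TT,1}(0)$, and set $\bm w = (x_1 + x_2,\, x_3)^T > 0$. Summing the first two scalar equations of $A\bm x = \mu \bm x$ and using the third directly, a short calculation yields the identity
\[
B \bm w \;=\; \mu \bm w \;-\; (r_1 - r_2)\,(x_1,\, 0)^T.
\]
This identity is the crux: it transfers the sign of $r_1 - r_2$ into the componentwise relation between $B\bm w$ and $\mu \bm w$, and it is obtained with essentially no computation beyond grouping terms.

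From here, all three cases follow quickly. When $r_1 = r_2$, we have $B\bm w = \mu \bm w$ with $\bm w > 0$, so Perron--Frobenius gives $s(B) = \mu = s(A)$, and part (a) follows from Proposition \ref{prop:hadeler}(a). When $r_1 > r_2$, we have $B\bm w \le \mu \bm w$ with strict inequality in the first component; since $B$ is irreducible (its off-diagonal entries $2d$ and $d+q$ are positive) and essentially nonnegative, this forces $s(B) < \mu$ via a Collatz--Wielandt-type argument, so Proposition \ref{prop:hadeler}(b) yields part (b). Case (c) is symmetric, reversing the inequalities.

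I do not foresee any serious obstacle, since the explicit eigenvector identity above does the heavy lifting. The one subtlety is upgrading componentwise strict inequality at a single coordinate to strict inequality of spectral bounds in parts (b) and (c); this is handled by pairing the inequality $B\bm w \lneq \mu \bm w$ against the strictly positive left Perron eigenvector of $B$, which forces $s(B) < \mu$ by a standard Perron--Frobenius argument.
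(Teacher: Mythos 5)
Your proposal is correct, and it takes a genuinely different route from the paper. The paper also reduces everything to comparing $s(A)$ and $s(B)$ via Proposition \ref{prop:hadeler}, but it then works with characteristic polynomials: a row operation gives the identity $p_A(\lambda)=(-d-q+r_1-\lambda)p_B(\lambda)+d(d+q)(r_1-r_2)$, after which the sign of $p_A(s(B))$, the Intermediate Value Theorem, and the lower bound $s(B)>-d-q+r_2$ from Lemma \ref{lem:SB} settle the three cases. You instead collapse the Perron eigenvector of $A$ onto the $2\times 2$ level: writing $\bm w=(x_1+x_2,\,x_3)^T$ and summing the first two eigenvalue equations does give $B\bm w=\mu\bm w-(r_1-r_2)(x_1,0)^T$ (I verified the algebra), and pairing against the strictly positive left Perron eigenvector of the irreducible matrix $B$ correctly converts the componentwise inequality $B\bm w\lneq\mu\bm w$ (resp.\ $\gneq$) into the strict spectral inequality $s(B)<s(A)$ (resp.\ $>$); the equality case follows since an irreducible essentially nonnegative matrix with a positive eigenvector has that eigenvalue as its spectral bound. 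Your argument buys two things: it bypasses Lemma \ref{lem:SB} entirely (the paper needs it both to discard the spurious factor $-d-q+r_1$ when $r_1=r_2$ and to control the sign of $p_B(\lambda)$ for $\lambda>s(B)$ when $r_1<r_2$), and the eigenvector-collapsing identity is structural rather than computational, so it would adapt more readily to larger networks where expanding determinants becomes unwieldy. The paper's polynomial computation, on the other hand, is entirely explicit and self-contained for the $3\times 3$ case.
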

\begin{proof}
    We consider the characteristic polynomial of the matrix $A$
    \begin{align*}
    p_A (\lambda)&= \begin{vmatrix}
    -d-q +r_1-\lambda & 0 & d \\
    0 & -d-q +r_2 -\lambda & d \\
    d+q & d+q & -2d+r_3-\lambda  \\
    \end{vmatrix} \\
    &= \begin{vmatrix}
    -d-q +r_1-\lambda & 0 & d \\
    -d-q +r_1-\lambda & -d-q +r_2 -\lambda & 2d \\
    d+q & d+q & -2d+r_3-\lambda  \\
    \end{vmatrix}\\
    &=(-d-q+r_1-\lambda)p_B(\lambda) + d\begin{vmatrix}
    -d-q +r_1-\lambda & -d-q +r_2 -\lambda  \\
    d+q & d+q  \\
    \end{vmatrix}\\
    &=(-d-q+r_1-\lambda)p_B(\lambda) + d(d+q)(r_1-r_2),
    \end{align*}
where $p_B(\lambda)$ is the characteristic polynomial of the matrix $B$. Now we consider the three cases.
\begin{enumerate}[(a)]
    \item If $r_1=r_2$, then $p(A)(\lambda)=(-d-q+r_1-\lambda)p_B(\lambda)$, thus we have
    \[
    s(A)=\max\{-d-q+r_1,s(B)\}.
    \] 
    We observe that $s(B)>\max\{-d-q+r_2,-2d+r_3\}\geq -d-q+r_2=-d-q+r_1$, where the first inequality is due to Lemma \ref{lem:SB}. Thus we must have $s(A)=s(B)$. Applying Proposition \ref{prop:hadeler} gives $\rho'_{\TT,1}(k) = 0$.
    \item Suppose that $r_1>r_2$. We plug $\lambda = s(B)$ into $p_A(\lambda)$ and obtain
    \[
    p_A(s(B)) = 0 + d(d+q)(r_1-r_2) >0.
    \]
    Since $\lim_{\lambda\to\infty}p_A(\lambda)=-\infty<0$, by the Intermediate Value Theorem, there exists a root of $p_A(\lambda)$ in $(s(B),\infty)$. Thus $s(A) > s(B)$. Applying Proposition \ref{prop:hadeler} gives $\rho'_{\TT,1}(k) <0$.
    \item Suppose that $r_1<r_2$. We plug $\lambda = s(B)$ into $p_A(\lambda)$ and obtain
    \[
    p_A(s(B)) = 0 + d(d+q)(r_1-r_2) <0.
    \]
    Note that when $\lambda > s(B)$, then by Lemma \ref{lem:SB}, we have $\lambda > -d-q+r_2>-d-q+r_1$. In addition, since $p_B$ is a quadratic function which opens upward, we have $p_B(\lambda)>0$ when $\lambda > s(B)$. Combining both observations yields
    \[
    p_A(\lambda) = (-d-q+r_1-\lambda)p_B(\lambda) + d(d+q)(r_1-r_2) <0, \quad \text{for } \lambda>s(B).
    \]
    As a result, there is no root of $p_A(\lambda)$ in $[s(B),\infty)$ and consequently, we must have $s(A)<s(B)$. Applying Proposition \ref{prop:hadeler} gives $\rho'_{\TT,1}(k) > 0$.
\end{enumerate}

\end{proof}

For the tributary stream network, the following result holds. Since the proof is similar to the proof of Theorem \ref{thm:rho'_onedir_tributary}, it is provided in the Appendix. 
\begin{theorem}\label{thm:rho'_onedir_distributary}
The following statements hold for the distributary stream network:
\begin{enumerate}[(a)]
    \item If $r_2=r_3$, then $\rho'_{\DD,1}(k) = 0$ for all $k\ge 0$;
    \item If $r_2>r_3$, then $\rho'_{\DD,1}(k) <0$ for all $k\ge 0$; 
    \item If $r_2<r_3$, then $\rho'_{\DD,1}(k) >0$.
\end{enumerate}   
\end{theorem}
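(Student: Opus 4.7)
The plan is to mirror the proof of Theorem \ref{thm:rho'_onedir_tributary} almost verbatim, adapted to the distributary geometry. Applying Proposition \ref{prop:hadeler} to the matrix $A_\DD = L_\DD + R$ with the rank-one perturbation $E_{\DD,1}$, the ``limit'' matrix $B_\DD$ is obtained by adding row $2$ to row $3$ of $A_\DD$ and then deleting row $2$ and column $2$:
\[
B_\DD = \begin{bmatrix} -2d-2q+r_1 & d \\ 2d+2q & -d+r_3 \end{bmatrix}.
\]
The task then reduces to comparing $s(A_\DD)$ with $s(B_\DD)$, with the three cases in the theorem corresponding to the three cases in Proposition \ref{prop:hadeler}.

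First, I would establish the analog of Lemma \ref{lem:SB}: the inequality $s(B_\DD) > \max\{-2d-2q+r_1,\, -d+r_3\}$. This follows immediately from Lemma \ref{lem:PQ} by splitting $B_\DD$ into its diagonal part $P = \mathrm{diag}(-2d-2q+r_1, -d+r_3)$ and the nonnegative irreducible off-diagonal part $Q$. In particular, $s(B_\DD) > -d+r_3$.

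Next I would compute the characteristic polynomial $p_{A_\DD}(\lambda)$ by performing the row operation ``add row 2 to row 3'' (which does not change the determinant) and then expanding along column $2$. After combining the three cofactor terms, the key identity I expect to obtain is
\[
p_{A_\DD}(\lambda) = (-d+r_2-\lambda)\, p_{B_\DD}(\lambda) + d(d+q)(r_2-r_3),
\]
exactly parallel to the tributary formula but with $(r_1,r_2)$ replaced by $(r_2,r_3)$ and the factor $(-d-q+r_1-\lambda)$ replaced by $(-d+r_2-\lambda)$. This reduction is the main computational step, and is where careful bookkeeping of the cofactor signs is needed; I expect this to be the only part that could plausibly go wrong.

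Given this identity, the three cases follow exactly as in Theorem \ref{thm:rho'_onedir_tributary}. If $r_2 = r_3$, then $p_{A_\DD}(\lambda) = (-d+r_2-\lambda)p_{B_\DD}(\lambda)$ so $s(A_\DD) = \max\{-d+r_2, s(B_\DD)\} = s(B_\DD)$ by the inequality above (since $s(B_\DD) > -d+r_3 = -d+r_2$), giving $\rho'_{\DD,1}(k)=0$. If $r_2>r_3$, then $p_{A_\DD}(s(B_\DD)) = d(d+q)(r_2-r_3)>0$ while $p_{A_\DD}(\lambda)\to -\infty$ as $\lambda\to\infty$, so the Intermediate Value Theorem yields a root in $(s(B_\DD),\infty)$, giving $s(A_\DD)>s(B_\DD)$ and hence $\rho'_{\DD,1}(k)<0$. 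If $r_2<r_3$, then for $\lambda>s(B_\DD)$ we have $\lambda> -d+r_3 > -d+r_2$, so the leading factor $(-d+r_2-\lambda)$ is negative, $p_{B_\DD}(\lambda)>0$ (since $p_{B_\DD}$ is an upward-opening quadratic with largest root $s(B_\DD)$), and the additive term $d(d+q)(r_2-r_3)$ is negative; thus $p_{A_\DD}(\lambda)<0$ on $[s(B_\DD),\infty)$, forcing $s(A_\DD)<s(B_\DD)$ and $\rho'_{\DD,1}(k)>0$. An invocation of Proposition \ref{prop:hadeler} completes each case.
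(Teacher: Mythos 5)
Your proposal is correct and follows essentially the same route as the paper's own appendix proof: the same limit matrix $B_\DD$, the same spectral-bound lemma (the paper's Lemma \ref{lem:SB2}, proved via Lemma \ref{lem:PQ}), the identical characteristic-polynomial identity $p_{A}(\lambda) = (-d+r_2-\lambda)p_{B}(\lambda) + d(d+q)(r_2-r_3)$, and the same three-case comparison of $s(A)$ and $s(B)$ via Proposition \ref{prop:hadeler}. The only (immaterial) difference is that you split $B_\DD$ into its diagonal part plus the full off-diagonal part when invoking Lemma \ref{lem:PQ}, whereas the paper keeps the lower-triangular entry in $P$; both satisfy the hypotheses and yield the same bound.
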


By Theorems \ref{thm:rho'_onedir_tributary} and \ref{thm:rho'_onedir_distributary}, for both tributary and distributary stream networks, the metapopulation growth rate decreases when there is an added edge from the better patch (i.e. higher intrinsic growth rate) to the worse patch (i.e. lower intrinsic growth rate) and vice versa. We note that an analogous result has been obtained for a two-patch discrete-time model \cite{ackleh2022interplay}.

\section{Impact of edge modifications on the network biomass}\label{sec:biomass}

In this section, we consider the effect of the two types of edge modifications on the network biomass. Again, we start with the bi-directional edge modification, and discuss the more nuanced case of the one-directional modification later.

\subsection{Bi-directional edge modification}

We first start with the tributary stream network with bi-directional edge modification (see Figure \ref{fig-tributary}(2)). Let $\bm u^*(k)=(u_1^*(k),u_2^*(k),u_3^*(k))$ be the unique positive equilibrium of the modified system with movement matrix $L_{\TT,2}(k)=L_\TT+kE_{\TT,2}$. Recall that the network biomass for the modified system is
\[
\MM_{\TT,2}(k) = u_1^*(k)+u_2^*(k)+u_3^*(k).
\]
For notational convenience, we omit the function arguments and the star and just write $(u_1,u_2, u_3)$ for the positive equilibrium. We notice that $u_1,u_2, u_3$ satisfy
\begin{subequations}
\begin{align}
&r_1u_1\left(1-\frac{u_1}{K}\right)-(d+q+k)u_1 + ku_2+du_3 = 0, \label{eq:t1}\\
&r_2u_2\left(1-\frac{u_2}{K}\right) + ku_1-(d+q+k)u_2+du_3 = 0, \label{eq:t2}\\
&r_3u_3\left(1-\frac{u_3}{K}\right) +(d+q)u_1 +(d+q)u_2 - 2du_3=0 \label{eq:t3}.
\end{align}
\label{eq:t}
\end{subequations}
Differentiating the above equations in terms of $k$ yields the second triplet of equations
\begin{subequations}
\begin{align}
&r_1u_1'\left(1-\frac{2u_1}{K}\right)-(d+q+k)u_1'- u_1 + ku_2'+u_2+du_3' = 0,\label{eq:t1'}\\
&r_2u_2'\left(1-\frac{2u_2}{K}\right) + ku_1'+u_1-(d+q+k)u_2'-u_2+du_3' = 0,\label{eq:t2'}\\
&r_3u_3'\left(1-\frac{2u_3}{K}\right) +(d+q)u_1' +(d+q)u_2' - 2du_3'=0 \label{eq:t3'}.
\end{align}
\end{subequations}
By taking $u_1'\times \eqref{eq:t1}-u_1\times \eqref{eq:t1'}$ and similar operations on the other pair of equations we obtain
\begin{subequations}
\begin{align}
&\frac{r_1u_1^2}{K}u_1' +ku_1'u_2 - ku_1u_2' + du_1'u_3 - du_1u_3' + u_1^2- u_1u_2= 0,\label{eq:t1-t1'}\\
&\frac{r_2u_2^2}{K}u_2' +ku_2'u_1 - ku_2u_1' + du_2'u_3 - du_2u_3' + u_2^2- u_1u_2 = 0,\label{eq:t2-t2'}\\
&\frac{r_3u_3^2}{K}u_3'+ (d+q)u_3'u_1 - (d+q)u_1'u_3 + (d+q)u_3'u_2 - (d+q)u_2'u_3 =0 \label{eq:t3-t3'}.
\end{align}
\label{eq:t3}
\end{subequations}
Finally, a useful equation can be obtained by taking $\eqref{eq:t1-t1'} + \eqref{eq:t2-t2'}+\frac{d}{d+q}\eqref{eq:t3-t3'}$,
\begin{equation}\label{eq:t_sum}
 \frac{r_1u_1^2}{K}u_1'+   \frac{r_2u_2^2}{K}u_2' + \frac{d}{d+q}\frac{r_3u_3^2}{K}u_3' + (u_1-u_2)^2=0.
\end{equation}
We are now ready to provide a series of technical lemmas that will lead to Theorem \ref{thm:biomass_2dir_tributary}, the main result for the tributary stream network.
\begin{lemma}\label{lem:r3=0}
    If $r_3=0$, then $\MM_{\TT,2}'(k)=0$ for all $k\ge 0$.
\end{lemma}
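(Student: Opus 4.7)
The plan is to observe that when $r_3 = 0$, the positive equilibrium $\bm u^*(k)$ is in fact completely independent of $k$, which will immediately yield $\MM_{\TT,2}'(k) = 0$ without any use of the differentiated identities derived above.

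My first step is to propose the explicit candidate $(u_1^*, u_2^*, u_3^*) = \bigl(K,\, K,\, (d+q)K/d\bigr)$ and check that it satisfies the equilibrium system (\ref{eq:t1})--(\ref{eq:t3}) for every $k \ge 0$, provided $r_3 = 0$. The verification is routine but the cancellation of $k$ is the crucial feature: in (\ref{eq:t1}) the logistic term vanishes because $u_1^* = K$, and the flux contributions $-(d+q+k)K + kK + d\cdot (d+q)K/d$ collapse to zero regardless of $k$; equation (\ref{eq:t2}) follows by the obvious $u_1 \leftrightarrow u_2$ symmetry; and (\ref{eq:t3}) with $r_3=0$ reduces to the purely linear balance $(d+q)\cdot 2K - 2d\cdot (d+q)K/d = 0$.

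Once the candidate is confirmed to solve the equilibrium system for every $k$, I invoke the uniqueness of the positive equilibrium of (\ref{eq-system}) (the result cited in Section \ref{sec:model}) to conclude that this candidate is the positive equilibrium, so that $u_i^*(k)$ is constant in $k$ for each $i$. Summing then gives $\MM_{\TT,2}(k) = 2K + (d+q)K/d$, a constant in $k$, from which $\MM_{\TT,2}'(k) = 0$ follows.

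The only subtlety worth flagging is making sure the uniqueness theorem still applies when one of the local growth rates vanishes. Under assumption (A1) we have $r_1 + r_2 > 0$ in this case, and the modified connection matrix $L_{\TT,2}(k)$ remains irreducible for all $k \ge 0$, so we are within the scope of the cited existence-and-uniqueness results; no genuine obstacle arises. An alternative route via the identity (\ref{eq:t_sum}) would force $(u_1 - u_2)^2 = 0$ first and then reason about $u_1', u_2', u_3'$, but it is strictly more cumbersome than the direct verification above.
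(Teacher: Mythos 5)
Your proposal is correct and is essentially the paper's own argument: the paper likewise verifies that $(K,K,(d+q)K/d)$ solves the equilibrium system for every $k$ when $r_3=0$ and invokes uniqueness to conclude the biomass is constant. Your extra check that the uniqueness result still applies is a reasonable (if routine) addition.
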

\begin{proof}
    It is easy to check that when $r_3=0$, $(K,K,K(d+q)/d)$ satisfies equation \eqref{eq:t} and thus it is the unique positive equilibrium. Therefore,
    \[
    \MM_{\TT,2}(k)=K\left(2+\frac{d+q}{d}\right) \implies \MM'_{\TT,2}=0.
    \]
\end{proof}

\begin{lemma}\label{lem:tributary_bounds}
    If $r_3>0$, we must have $u_1, u_2\in\left[\frac{d}{d+q}K,K\right)$ and $u_3\in \left[K,\frac{d+q}{d}K\right)$
\end{lemma}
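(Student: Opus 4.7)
The plan is to establish both bounds by a discrete maximum/minimum principle applied to the rescaled quantities $u_1$, $u_2$, and $\tilde u_3 := \tfrac{d}{d+q} u_3$. The motivation for this rescaling is that after redistributing the movement terms in the equilibrium equations \eqref{eq:t1}--\eqref{eq:t3}, the system becomes
\begin{align*}
r_1 u_1(1 - u_1/K) &= (d+q)(u_1 - \tilde u_3) + k(u_1 - u_2),\\
r_2 u_2(1 - u_2/K) &= (d+q)(u_2 - \tilde u_3) + k(u_2 - u_1),\\
r_3 u_3(1 - u_3/K) &= (d+q)(\tilde u_3 - u_1) + (d+q)(\tilde u_3 - u_2),
\end{align*}
in which the three scaled populations appear symmetrically on the right-hand sides, so sign considerations will propagate uniformly.

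For the upper bound, I would set $M := \max\{u_1, u_2, \tilde u_3\}$ and split on which variable attains the maximum. If $M = u_1$, both differences on the right of the first equation are nonnegative, so $r_1 u_1(1 - u_1/K) \geq 0$ and hence $u_1 \leq K$; the boundary case $u_1 = K$ would force $\tilde u_3 = K$ (so $u_3 = (d+q)K/d$) and then the third equation reduces to $r_3 u_3(1 - u_3/K) = (d+q)(u_2 - K)$, whose left side is strictly negative by $r_3 > 0$ while its right side is nonpositive, a contradiction. The case $M = u_2$ is symmetric, and $M = \tilde u_3$ yields $u_3 \leq K$ directly and hence $\tilde u_3 \leq dK/(d+q) < K$. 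In every case $M < K$ strictly, which gives $u_1, u_2 < K$ and $u_3 < (d+q)K/d$.

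The lower bound follows from a symmetric minimum principle. Set $m := \min\{u_1, u_2, \tilde u_3\}$ and suppose for contradiction that $m < dK/(d+q)$. If $m = \tilde u_3$, then $u_3 < K$ makes the left-hand side of the third equation strictly positive while the right-hand side is nonpositive, an immediate contradiction. If $m = u_1$ (or symmetrically $u_2$), then each term on the right of the first equation is $\leq 0$ while the left side is $\geq 0$, forcing all of them to vanish: $r_1 = 0$, $u_1 = \tilde u_3$, and $k(u_1 - u_2) = 0$. Feeding this into equations \eqref{eq:t2} and \eqref{eq:t3} and splitting on whether $k > 0$ or $k = 0$ will, in each sub-case, either force $r_3 = 0$ (contradicting the hypothesis) or force $u_3 \geq K$, which is incompatible with $\tilde u_3 = u_1 < dK/(d+q)$.

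The main technical hurdle will be the degenerate equality cases in the lower-bound argument, which arise precisely when some $r_i$ vanish or when $k = 0$; each such configuration must be traced through the other two equations until the hypothesis $r_3 > 0$ yields the final contradiction.
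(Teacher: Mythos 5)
Your argument takes a genuinely different route from the paper's proof. The paper verifies that $\bar{\bm u}=(K,K,(d+q)K/d)$ and $\underline{\bm u}=(dK/(d+q),dK/(d+q),K)$ are super- and sub-solutions of \eqref{eq-system} and then invokes strong monotonicity of the flow together with uniqueness of the positive equilibrium; you instead run a discrete maximum principle directly on the equilibrium equations after the rescaling $\tilde u_3=\frac{d}{d+q}u_3$. That rescaling is exactly the right normalization (it is the symmetrizing vector $\bm v=(1,1,\frac{d}{d+q})$ of Lemma \ref{lemma:rhok'} in disguise), and the strategy is sound and more self-contained, since it avoids the dynamical-systems machinery. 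However, two steps need repair. First, in the boundary case $u_1=K$ of the upper bound you reduce the third equation to $r_3u_3(1-u_3/K)=(d+q)(u_2-K)$ and claim a contradiction because the left side is strictly negative while the right side is nonpositive --- that is not a contradiction. The sign is wrong: with $\tilde u_3=u_1=K$ the right-hand side is $(d+q)(\tilde u_3-u_1)+(d+q)(\tilde u_3-u_2)=(d+q)(K-u_2)\ge 0$, and only with this corrected sign does the strictly negative left side yield the contradiction. Second, the inference ``$r_1u_1(1-u_1/K)\ge 0$, hence $u_1\le K$'' is invalid when $r_1=0$, which assumption (A1) permits; in that case the first equation forces both nonnegative terms on its right-hand side to vanish, so $u_1=\tilde u_3=M$, and you must fall back on the $M=\tilde u_3$ case, where $r_3>0$ does the work. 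Finally, the last sub-case of your lower-bound argument is more elaborate than necessary: once $m=u_1$ forces $u_1=\tilde u_3$, you have $\tilde u_3=m<dK/(d+q)$ and the case $m=\tilde u_3$, which you already disposed of, applies verbatim, so no splitting on $k$ is needed. With these repairs the proof is complete and constitutes a valid elementary alternative to the paper's argument.
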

\begin{proof}
Here we use the same technique in \cite[Theorem 5.6]{nguyen2023maximizing}. For $\bm \phi=(\phi_1, \dots, \phi_n), \bm \psi=(\psi_1, \dots, \psi_n)\in\mathbb{R}^n$, we write $\bm \phi> \bm \psi$ if  $\bm \phi\ge \bm \psi$ and $\bm \phi\neq \bm \psi$; we denote  $\bm \phi\gg \bm \psi$ if $\phi_i>\psi_i$ for all $1\le i\le n$. Since $L_{\TT,2}(k)$ is essentially nonnegative and irreducible, the solutions of \eqref{eq-system} with $L=L_{\TT,2}(k)$ generate a strongly monotone dynamical system \cite[Theorem 4.1.1]{smith2008monotone}, that is, if $\bm u_1(t)$ and $\bm u_2(t)$ are both solutions of \eqref{eq-system}, then $\bm u_1(0)>\bm u_2(0)$ implies $\bm u_1(t)\gg \bm u_2(t)$ for all $t>0$.     

Now we consider $\bar{\bm u}=(K,K,K(d+q)/d)$. We have
\begin{align*}
&r_1\bar{u}_1\left(1-\frac{\bar{u}_1}{K}\right)-(d+q+k)\bar{u}_1 + k\bar{u}_2+d\bar{u}_3 = 0,\\
&r_2\bar{u}_2\left(1-\frac{\bar{u}_2}{K}\right) + k\bar{u}_1-(d+q+k)\bar{u}_2+d\bar{u}_3 = 0, \\
&r_3\bar{u}_3\left(1-\frac{\bar{u}_3}{K}\right) +(d+q)\bar{u}_1 +(d+q)\bar{u}_2 - 2d\bar{u}_3=r_3K\frac{d+q}{d}\left(1-\frac{d+q}{d}\right) <0.
\end{align*}
Therefore, $\bm {\bar u}$ is an upper solution of \eqref{eq-system}, and the solution $\bm u(t)$ of \eqref{eq-system} (with $L=L_{\TT,2}(k)$) with initial condition $\bm u(0)=\bm {\bar u}$ is decreasing and converges to the unique positive equilibrium $\bm u^*$ as $t\to\infty$. Hence,  we have that $\bm u^* < \bm{\bar u}$, which implies $u_1<K$, $u_2<K$, and $u_3<K(d+q)/d$.

Similarly, it is easy to check that $\underline{\bm u} = (Kd/(d+q),Kd/(d+q),K)$ is a lower solution. Thus we have $\bm u^* \geq \underline{\bm u}$, which implies $u_1\geq Kd/(d+q)$, $u_2\geq Kd/(d+q)$, and $u_3\geq K$. Unlike the previous argument which has strict inequalities, here the equality may happen if $r_1=r_2=0$.
\end{proof}

\begin{lemma}\label{lemma:r1r2}
 If $r_1>r_2$, then $u_1>u_2$ for all $k\ge0$ and vice versa.   
\end{lemma}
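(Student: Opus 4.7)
The plan is to squeeze a single algebraic identity out of the equilibrium equations and then run an elementary comparison. Subtracting \eqref{eq:t2} from \eqref{eq:t1}, the $du_3$ terms cancel and the cross-$k$ contributions combine into $(d+q+2k)(u_2-u_1)$, yielding
\[
u_1\Bigl[r_1\Bigl(1-\tfrac{u_1}{K}\Bigr) - (d+q+2k)\Bigr] = u_2\Bigl[r_2\Bigl(1-\tfrac{u_2}{K}\Bigr) - (d+q+2k)\Bigr]. \qquad(\star)
\]
Writing the bracketed factor on side $i$ as $-A_i$, the next step is to verify $A_1, A_2 > 0$. Returning to \eqref{eq:t1} and isolating the $u_1$ coefficient gives $[r_1(1-u_1/K)-(d+q+k)]\,u_1 = -ku_2-du_3$; since $k,u_2\geq 0$ and $d,u_3>0$, the right side is strictly negative, forcing $r_1(1-u_1/K)<d+q+k$, and therefore $A_1=(d+q+2k)-r_1(1-u_1/K)>k\geq 0$. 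The same reasoning applied to \eqref{eq:t2} yields $A_2>0$.

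With $A_1,A_2>0$, identity $(\star)$ reads $u_1 A_1 = u_2 A_2$, so that $u_1>u_2 \Longleftrightarrow A_2>A_1 \Longleftrightarrow r_1(1-u_1/K)>r_2(1-u_2/K)$. Now suppose $r_1>r_2$ and, for contradiction, that $u_1\leq u_2$. The lemma is implicitly posed under $r_3>0$ (the degenerate case $r_3=0$ being handled by Lemma \ref{lem:r3=0}), so Lemma \ref{lem:tributary_bounds} ensures $u_1,u_2\in[Kd/(d+q),K)$; hence $1-u_1/K\geq 1-u_2/K>0$, and combining with $r_1>r_2\geq 0$ gives
\[
r_1(1-u_1/K) \;\geq\; r_1(1-u_2/K) \;>\; r_2(1-u_2/K),
\]
which by the equivalence above forces $u_1>u_2$, a contradiction. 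The ``vice versa'' direction is obtained by interchanging the roles of indices $1$ and $2$ throughout.

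The main obstacle to anticipate is the sign of the two bracketed quantities in $(\star)$: a priori one might be tempted to reach for Perron--Frobenius / M-matrix arguments on a principal submatrix of $L_{\TT,2}(k)+R-\mathrm{diag}(r_iu_i/K)$. What keeps the proof elementary is the observation that each $a_i u_i$ equals $-ku_{3-i}-du_3$, which is manifestly negative thanks to $d,u_3>0$, eliminating any need for spectral machinery.
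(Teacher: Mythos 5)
Your proof is correct, and it takes a genuinely different route from the paper's. The paper argues by continuity in the parameters: it first shows $u_1-u_2=0$ if and only if $r_1=r_2$ (via the relabeling argument and the bound $u_1<K$), concludes that $u_1-u_2$ has constant sign on each of the connected regions $\{r_1<r_2\}$ and $\{r_1>r_2\}$, and then determines that sign by evaluating on the boundary axes $r_1=0$ and $r_2=0$. You instead extract the single identity $u_1A_1=u_2A_2$ with $A_i=(d+q+2k)-r_i(1-u_i/K)$, verify $A_i>k\ge 0$ directly from the equilibrium equations (a clean observation the paper does not make), and close with an elementary monotonicity-plus-contradiction argument. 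Both proofs rest on the same two ingredients --- subtracting \eqref{eq:t2} from \eqref{eq:t1} and the strict upper bound $u_1,u_2<K$ from Lemma \ref{lem:tributary_bounds}, which is why both implicitly require $r_3>0$ (you are right that the $r_3=0$ case is degenerate: there $u_1=u_2=K$ regardless of $r_1,r_2$, so the strict conclusion fails, and the paper leaves this hypothesis tacit as well). What your approach buys is self-containedness: the paper's sign-constancy step silently relies on continuous dependence of the positive equilibrium on $(r_1,r_2)$, which is true but never justified, whereas your argument needs no such input. What the paper's approach buys is a reusable template --- the same continuity-and-boundary-case strategy is deployed again in Lemma \ref{lemma:tributary_diagonal} and Theorem \ref{thm:biomass_2dir_tributary}. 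One cosmetic slip in your closing remarks: the quantity satisfying $a_iu_i=-ku_{3-i}-du_3$ is $r_i(1-u_i/K)-(d+q+k)$, not the bracketed factor $-A_i=r_i(1-u_i/K)-(d+q+2k)$ from $(\star)$; the proof body keeps these straight, so nothing is affected.
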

\begin{proof}
First, we fix the values of $r_3$ and $k$. Now $u_1-u_2$ depends continuously on $r_1$ and $r_2$. We claim that $u_1-u_2=0$ if and only if $r_1=r_2$. Indeed, if $r_1=r_2$, then node 1 and 2 are identical. By a relabeling argument and the uniqueness of positive solution of \eqref{eq:t}, we have $u_1=u_2$. On the other hand, if $u_1=u_2$, subtracting equation \eqref{eq:t2} from \eqref{eq:t1} yields 
\[
(r_1-r_2)u_1\left(1-\frac{u_1}{K}\right) =0.
\]
Since $u_1<K$ from Lemma \ref{lem:tributary_bounds}, we must have $r_1=r_2$.

Since $u_1-u_2=0$ if and only if $r_1=r_2$, $u_1-u_2$ must have the same sign in each region: $R_1=\{(r_1,r_2):0\leq r_1<r_2\}$ and $R_2=\{(r_1,r_2):0\leq r_2<r_1\}$. As a result, to determine the sign of $u_1-u_2$ in $R_1$, it suffices to consider the special case $r_1=0<r_2$. Again, subtracting equation \eqref{eq:t2} from \eqref{eq:t1} and rearranging terms yields
\[
(d+q+2k)(u_2-u_1) = r_2u_2\left(1-\frac{u_2}{K}\right) >0.
\]
Thus $u_2-u_1>0$ in region $R_1$. Similarly, for region $R_2$, we consider the special case $r_2=0<r_1$. Subtracting equation \eqref{eq:t2} from \eqref{eq:t1} and rearranging terms yields
\[
(d+q+2k)(u_2-u_1) = -r_1u_1\left(1-\frac{u_1}{K}\right) <0.
\]
Thus $u_2-u_1<0$ in region $R_2$.
\end{proof}
\begin{lemma}\label{lemma:tributary_samesign}
$u_1'+u_2'$ has the same sign as $u_3'$ for all $k\ge0$. In particular, this implies further that $\MM'_{\TT,2}$ has the same sign as $u_3'$.    
\end{lemma}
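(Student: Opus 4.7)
The plan is to exploit the symmetry that appears when we add the first two differentiated equations, so that the non-derivative terms $\pm(u_1-u_2)$ cancel and the mixed $k u_1', k u_2'$ terms collapse with the $-(d+q+k)u_i'$ terms into a clean $-(d+q)(u_1'+u_2')$. Concretely, adding \eqref{eq:t1'} and \eqref{eq:t2'} yields
\[
r_1u_1'\Bigl(1-\tfrac{2u_1}{K}\Bigr) + r_2u_2'\Bigl(1-\tfrac{2u_2}{K}\Bigr) - (d+q)(u_1'+u_2') + 2d\, u_3' = 0,
\]
while equation \eqref{eq:t3'} can be rewritten as
\[
(d+q)(u_1'+u_2') = 2d\, u_3' - r_3 u_3'\Bigl(1-\tfrac{2u_3}{K}\Bigr) = u_3'\Bigl[\,2d + r_3\Bigl(\tfrac{2u_3}{K}-1\Bigr)\Bigr].
\]

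The key observation is then that the bracketed coefficient is strictly positive. By Lemma \ref{lem:tributary_bounds}, $u_3\ge K$ (when $r_3>0$), so $\tfrac{2u_3}{K}-1\ge 1>0$, giving $2d + r_3(\tfrac{2u_3}{K}-1) \ge 2d > 0$. Hence $u_1'+u_2'$ has the same sign as $u_3'$. The $r_3 = 0$ case is handled separately by invoking Lemma \ref{lem:r3=0}, which gives $\MM_{\TT,2}'(k) = 0$ (and in fact all three $u_i'$ vanish because the equilibrium is constant in $k$), so the conclusion is trivial.

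For the second assertion, I would simply write $\MM_{\TT,2}'(k) = u_1' + u_2' + u_3'$: since $u_1'+u_2'$ and $u_3'$ have the same sign, the sum has that same sign as well (and is zero precisely when $u_3'=0$).

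There is no real obstacle here; the only thing to be careful about is the $r_3=0$ edge case, where the argument via equation \eqref{eq:t3'} is degenerate but where Lemma \ref{lem:r3=0} already resolves the conclusion directly. The whole proof is essentially two linear combinations of the differentiated equilibrium equations plus a sign check using the a priori bound $u_3 \ge K$.
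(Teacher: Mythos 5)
Your proof is correct, and it reaches the same structural conclusion as the paper (a single linear relation expressing $(d+q)(u_1'+u_2')$ as a positive multiple of $u_3'$), but it gets there by a slightly different manipulation. The paper forms the combination $u_3'\times\eqref{eq:t3}-u_3\times\eqref{eq:t3'}$, which produces
\[
u_3'\left(\frac{r_3u_3^2}{K}+(d+q)(u_1+u_2)\right)=(d+q)u_3(u_1'+u_2'),
\]
where the coefficient of $u_3'$ is manifestly positive just from positivity of the equilibrium; no a priori bounds and no case split on $r_3$ are needed. You instead read the relation directly off \eqref{eq:t3'}, obtaining the coefficient $2d+r_3\bigl(\tfrac{2u_3}{K}-1\bigr)$, whose positivity you secure via the bound $u_3\ge K$ from Lemma \ref{lem:tributary_bounds} (available only for $r_3>0$), with the $r_3=0$ case dispatched by Lemma \ref{lem:r3=0}. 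Both are valid; the paper's version is marginally more self-contained, while yours is arguably more direct. Two small remarks: your first displayed equation (the sum of \eqref{eq:t1'} and \eqref{eq:t2'}) is never used and can be deleted; and the $r_3=0$ case is not actually degenerate for your argument, since the bracket then equals $2d>0$ regardless of $u_3$, so the separate appeal to Lemma \ref{lem:r3=0} is unnecessary (though harmless).
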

\begin{proof}
Rearranging equation \eqref{eq:t3-t3'} yields
\[
u'_3 \left(\frac{r_3u_3^2}{K}+(d+q)(u_1+u_2)\right) = (d+q)u_3(u_1'+u_2').
\]
Thus $u_3'$ has the same sign as $u_1'+u_2'$, which implies $u_3'$ has the same sign as $\MM'_{\TT,2} = u_1'+u_2'+u_3'$.
\end{proof}
\begin{lemma}\label{lemma:tributary_diagonal}
Suppose that $d>q$ and $r_3>0$. Then $u_3'=0$ for some value of $k$ if and only if $r_1=r_2$.

\end{lemma}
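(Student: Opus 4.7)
The plan is to prove the two implications separately, exploiting the symmetric structure for one direction and deriving a sharp ratio identity for the other.

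For the forward direction, I would observe that when $r_1 = r_2$, the system \eqref{eq:t} is invariant under swapping $u_1 \leftrightarrow u_2$, so by uniqueness of the positive equilibrium $u_1(k) = u_2(k)$ for every $k \ge 0$. Substituting this identity into \eqref{eq:t1} makes the $k$-terms cancel, so the entire equilibrium system reduces to a $k$-independent system. Consequently $u_i'(k) \equiv 0$ for every $i$, and in particular $u_3'(k) = 0$ for all $k$.

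For the reverse direction, I would argue by contradiction. Suppose $r_1 \neq r_2$ (WLOG $r_1 > r_2$, hence $u_1 > u_2$ by Lemma \ref{lemma:r1r2}) while $u_3'(k_0) = 0$ at some $k_0 \ge 0$. By Lemma \ref{lemma:tributary_samesign} we then have $u_1'(k_0) + u_2'(k_0) = 0$, so I would set $u_1' = s$, $u_2' = -s$, $u_3' = 0$ and substitute into the derivative equations \eqref{eq:t1'} and \eqref{eq:t2'}. After collecting terms, both equations collapse to
\[
s\bigl[r_1(1 - 2u_1/K) - (d+q+2k_0)\bigr] = u_1 - u_2 = s\bigl[r_2(1 - 2u_2/K) - (d+q+2k_0)\bigr].
\]
Since $u_1 - u_2 > 0$, the scalar $s$ must be nonzero, and cancelling it gives the key identity
\[
r_1\bigl(1 - 2u_1/K\bigr) = r_2\bigl(1 - 2u_2/K\bigr).
\]

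The hypothesis $d > q$ enters decisively here: by Lemma \ref{lem:tributary_bounds}, $u_i \ge dK/(d+q) > K/2$, so $1 - 2u_i/K < 0$ for $i = 1,2$. This immediately rules out either $r_i$ being zero (the identity would then force the other to vanish as well, contradicting $r_1 \ne r_2$), and with $r_1, r_2 > 0$ the identity rearranges to $u_1/K - 1/2 = (r_2/r_1)\bigl(u_2/K - 1/2\bigr)$ with both sides strictly positive. Since $r_2/r_1 < 1$, this forces $u_1 < u_2$, contradicting $u_1 > u_2$. The main obstacle, I expect, is recognizing that the ansatz $u_1' = -u_2' = s$ collapses the derivative system into the single clean identity above, and that the threshold $u_i > K/2$ supplied by $d > q$ is exactly what is needed to flip the inequality between $u_1/K - 1/2$ and $u_2/K - 1/2$ in the wrong direction relative to Lemma \ref{lemma:r1r2}.
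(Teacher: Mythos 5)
Your proof is correct, and it shares the paper's skeleton --- reduce $u_3'=0$ to $u_1'=-u_2'$, collapse the differentiated equations to the identity $r_1(1-2u_1/K)=r_2(1-2u_2/K)$, and use $d>q$ via Lemma \ref{lem:tributary_bounds} to place $u_1,u_2$ strictly above $K/2$ --- but it finishes both directions differently. For the ``$r_1=r_2$'' direction the paper invokes equation \eqref{eq:t_sum} together with Lemma \ref{lemma:tributary_samesign} to force $u_3'=0$, whereas you observe that $u_1=u_2$ makes the $k$-terms cancel in \eqref{eq:t1}--\eqref{eq:t2}, so the equilibrium is literally $k$-independent and all derivatives vanish; this is cleaner and is in fact the argument the paper uses for the distributary analogue (Lemma \ref{lemma_u1p0}). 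For the converse, the paper does not fix an ordering of $r_1,r_2$: it pairs the key identity with the difference $\eqref{eq:t1}-\eqref{eq:t2}$ and exploits the monotonicity of $f(x)=x(1-x/K)/(2x/K-1)$ on $(K/2,K)$ to contradict $u_1\gtrless u_2$. You instead take $r_1>r_2$ without loss of generality, import $u_1>u_2$ from Lemma \ref{lemma:r1r2}, and read the contradiction straight off the identity rewritten as $u_1/K-1/2=(r_2/r_1)(u_2/K-1/2)$ with $r_2/r_1<1$, which forces $u_1<u_2$. Your route avoids the auxiliary function $f$ entirely and is shorter, at the cost of leaning on Lemma \ref{lemma:r1r2} (which is available and already proved); the paper's version is self-contained at this step and generalizes more directly to settings where the sign comparison of $u_1-u_2$ with $r_1-r_2$ is not known in advance. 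Your handling of the degenerate cases ($s=0$ excluded by $u_1-u_2>0$; $r_i=0$ excluded because $1-2u_i/K<0$ would force both rates to vanish) is sound.
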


\begin{proof}
For the first direction, suppose there exists $k$ such that $u_3'=0$ at such value. Plug this value of $k$ into equations from \eqref{eq:t}-\eqref{eq:t3}. From equation \eqref{eq:t3'} we have
\[
(d+q)u_1'+(d+q)u_2'=0 \implies u_2'=-u_1'.
\]
If $u_1'=0$, we have $u_2'=0$ and thus from equation \eqref{eq:t1'} we have $u_1=u_2$. Subtracting equation \eqref{eq:t1} from \eqref{eq:t2} yields $r_1=r_2$. Thus it remains to prove the first direction of the lemma when $u_1'\neq 0$. Adding equations \eqref{eq:t1'} and \eqref{eq:t2'}, and using $u_2'=-u_1'$ we have
\begin{equation}\label{eq:t1'+t2'}
  u_1'\left(r_1\left(1-\frac{2u_1}{K}\right) - r_2\left(1-\frac{2u_2}{K}\right)\right)=0 \implies r_1\left(\frac{2u_1}{K}-1\right)= r_2\left(\frac{2u_2}{K}-1\right).
\end{equation}
On the other hand, subtracting equation \eqref{eq:t2} from \eqref{eq:t1} yields
\[
r_1u_1\left(1-\frac{u_1}{K}\right) - r_2u_2\left(1-\frac{u_2}{K}\right) = (d+q+2k)(u_1-u_2).
\]

Suppose $u_1>u_2$. Then we have
\begin{equation}\label{eq:t1-t2}
   r_1u_1\left(1-\frac{u_1}{K}\right)>r_2u_2\left(1-\frac{u_2}{K}\right). 
\end{equation}
Note that from Lemma \ref{lem:tributary_bounds} we have $u_1,u_2\in \left[\frac{d}{d+q}K,K\right)$. Since $d>q$, we have $u_1,u_2\in (K/2,K)$ and thus both sides of equation \eqref{eq:t1'+t2'} are positive. Thus from equations \eqref{eq:t1'+t2'} and \eqref{eq:t1-t2} we have
\[
\frac{r_1u_1\left(1-\frac{u_1}{K}\right)}{r_1\left(\frac{2u_1}{K}-1\right)} > \frac{r_2u_2\left(1-\frac{u_2}{K}\right)}{r_2\left(\frac{2u_2}{K}-1\right)} \implies \frac{u_1\left(1-\frac{u_1}{K}\right)}{\frac{2u_1}{K}-1} > \frac{u_2\left(1-\frac{u_2}{K}\right)}{\frac{2u_2}{K}-1}. 
\]
It is straightforward to check that the function $f(x)=\frac{x\left(1-\frac{x}{K}\right)}{\frac{2x}{K}-1}$ is decreasing for $x\in(K/2,K)$. Thus we must have $u_1<u_2$ which is a contradiction.
Assuming $u_1<u_2$ and using the same argument lead to a contradiction as well. Thus we must have  $u_1=u_2$, which again implies $r_1=r_2$.

For the second direction, suppose that $r_1=r_2$. Now node 1 and 2 are identical, so by a relabeling argument we have $u_1=u_2$. Thus from equation \eqref{eq:t_sum} we have 
\[
\frac{r_1u_1^2}{K}(u_1'+u_2') + \frac{d}{d+q}\frac{r_3u_3^2}{K}u_3'=0.
\]
Thus $u_3'$ and $u_1'+u_2'$ must have opposite sign. Combining this with Lemma \ref{lemma:tributary_samesign} we must have  $u_3'=0$.
\end{proof}

\begin{lemma}\label{lem:tributary_u3'}
Suppose that $d>q$ and $r_3>0$. Then $u_3'>0$ for all $k\ge0$ if one of the following conditions holds:
\begin{enumerate}
    \item[(i)] $r_1=0$ and $r_2>0$;
    \item[(ii)] $r_2=0$ and $r_1>0$.
\end{enumerate}
\end{lemma}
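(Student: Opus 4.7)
The plan is to handle case (i), where $r_1=0$ and $r_2>0$; case (ii) follows by the obvious symmetry of the tributary network under swapping the labels of patches $1$ and $2$, which leaves $u_3'$ unchanged. Since $r_1\neq r_2$, Lemma \ref{lemma:tributary_diagonal} already guarantees $u_3'(k)\ne 0$ for every $k\ge 0$; it therefore suffices to rule out $u_3'(k_0)<0$ at an arbitrary fixed $k_0\ge 0$, and the argument will produce a contradiction at that single value of $k_0$ without invoking continuity.

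The key ingredient is an identity obtained by simply adding \eqref{eq:t1'}, \eqref{eq:t2'}, and \eqref{eq:t3'}: all transport coefficients $\pm(d+q+k),\ \pm k,\ \pm d,\ \pm(d+q)$ and all forcing constants $\pm u_1,\ \pm u_2$ cancel in pairs, leaving
\[
r_1u_1'\!\left(1-\tfrac{2u_1}{K}\right)+r_2u_2'\!\left(1-\tfrac{2u_2}{K}\right)+r_3u_3'\!\left(1-\tfrac{2u_3}{K}\right)=0.
\]
Under $r_1=0$ this reads $r_2u_2'(1-2u_2/K)=-r_3u_3'(1-2u_3/K)$. By Lemma \ref{lem:tributary_bounds} combined with $d>q$, both $u_2,u_3>K/2$, so $1-2u_2/K<0$ and $1-2u_3/K<0$. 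If $u_3'(k_0)<0$, the right-hand side is strictly negative, forcing $u_2'(k_0)>0$; Lemma \ref{lemma:tributary_samesign} then gives $u_1'(k_0)+u_2'(k_0)<0$, whence $u_1'(k_0)<0$.

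The contradiction then comes from \eqref{eq:t2'} itself, rewritten as
\[
u_2'\bigl[r_2(1-2u_2/K)-(d+q+k)\bigr]=(u_2-u_1)-ku_1'-du_3'.
\]
The bracket on the left is strictly negative. On the right, $u_2-u_1>0$ by Lemma \ref{lemma:r1r2} (since $r_2>r_1$), $-ku_1'\ge 0$ because $u_1'<0$, and $-du_3'>0$ because $u_3'<0$; hence the right-hand side is strictly positive. Dividing yields $u_2'(k_0)<0$, contradicting the $u_2'(k_0)>0$ obtained in the previous paragraph, so $u_3'(k_0)>0$ as claimed.

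The main obstacle is recognising the identity $\sum_i r_iu_i'(1-2u_i/K)=0$; without it, the hypothesis $d>q$ (which places $u_2,u_3$ above $K/2$ via Lemma \ref{lem:tributary_bounds}) has no direct bearing on the sign of $u_2'$. Once the identity is in hand, the remainder is a short sign chase through \eqref{eq:t1'}--\eqref{eq:t3'}.
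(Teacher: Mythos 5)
Your proof is correct and follows essentially the same route as the paper's: the key identity $\sum_i r_iu_i'(1-2u_i/K)=0$ obtained by summing \eqref{eq:t1'}--\eqref{eq:t3'} is exactly the paper's equation \eqref{eq:t1+t2+t3}, and the subsequent sign chase uses the same ingredients (Lemmas \ref{lem:tributary_bounds}, \ref{lemma:r1r2}, \ref{lemma:tributary_samesign}, and equation \eqref{eq:t2'}), merely reordered so that the contradiction lands on $u_2'$ rather than on Lemma \ref{lemma:tributary_samesign}. The symmetry reduction of case (ii) to case (i) also matches the paper.
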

\begin{proof}
    Suppose that $r_1=0$, $r_2>0$ and $r_3>0$. From Lemma \ref{lemma:tributary_diagonal}, we must have $u_3' >0$ or $u_3'<0$ for all $k\ge0$. Assume by contradiction that $u_3'<0$. 
    %Adding equations \eqref{eq:t1}, \eqref{eq:t2} and \eqref{eq:t3},
    Adding equations \eqref{eq:t1'}-\eqref{eq:t3'},
    we have
    \begin{equation}\label{eq:t1+t2+t3}
          r_2u_2'\left(1-\frac{2u_2}{K}\right)+r_3u_3'\left(1-\frac{2u_3}{K}\right)=0.  
    \end{equation}
    From Lemma \ref{lem:tributary_bounds}, we have $u_2\geq \frac{d}{d+q}K>\frac{K}{2}$, and $u_3\geq K$. Thus from equation \eqref{eq:t1+t2+t3}, we have $u_2'>0$. Now rewriting equations \eqref{eq:t2'}, we have
    \[
    ku_1' = u_2'\left(r_2\left(\frac{2u_2}{K}-1\right)+(d+q+k)\right)-du_3'+(u_2-u_1).
    \]
    Since $u_2>\frac{K}{2}$, we have $\left(d+q+k+r_2\left(\frac{2u_2}{K}-1\right)\right)>0$. Additionally, since $r_1<r_2$, from Lemma \ref{lemma:r1r2} we have $u_1<u_2$. Thus
    \[
    ku_1' =\left(d+q+k+r_2\left(\frac{2u_2}{K}-1\right)\right)u_2'-du_3' +(u_2-u_1) >0 \implies u_1'>0.
    \]
    Now we have  $u_1'+u_2' >0$ and $u_3'<0$, which contradicts Lemma \ref{lemma:tributary_samesign}. Thus we must have $u_3'>0$. Due to symmetry, the same argument can be applied to the case when $r_2=0$ and $r_1>0$.

\end{proof}
\begin{theorem}\label{thm:biomass_2dir_tributary}
For the tributary stream network, if $d > q$ (diffusion is faster than advection), then we have
\[
\MM'_{\TT,2}(k)\geq 0\ \ \text{for all}\ k\ge 0,
\]
where the equality happens if and only if $r_1=r_2$ or $r_3=0$.
\end{theorem}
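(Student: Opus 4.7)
The plan is to reduce everything to determining the sign of $u_3'(k)$. If $r_3=0$, Lemma~\ref{lem:r3=0} already gives $\MM'_{\TT,2}(k)\equiv 0$, so I assume $r_3>0$ for the remainder. Under this assumption Lemma~\ref{lemma:tributary_samesign} tells us that $\MM'_{\TT,2}(k)$ and $u_3'(k)$ always carry the same sign, so the theorem follows once I establish that $u_3'(k)\ge 0$ for all $k\ge 0$, with equality precisely when $r_1=r_2$.

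For the equality case $r_1=r_2$, I use the relabeling symmetry of nodes $1$ and $2$, together with the uniqueness of the positive equilibrium, to get $u_1\equiv u_2$ for every $k\ge 0$, and differentiation in $k$ then yields $u_1'=u_2'$. Substituting this into \eqref{eq:t_sum} kills the $(u_1-u_2)^2$ term and leaves an identity in which $u_1'+u_2'$ and $u_3'$ must carry opposite signs (the remaining coefficients being nonnegative). Combined with Lemma~\ref{lemma:tributary_samesign}, which says they must carry the same sign, this forces $u_3'\equiv 0$; this is essentially the second half of the argument already embedded in Lemma~\ref{lemma:tributary_diagonal}.

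For the strict case $r_1\ne r_2$ (with $d>q$ and $r_3>0$), Lemma~\ref{lemma:tributary_diagonal} guarantees $u_3'(k)\ne 0$ for every $k\ge 0$. The idea is to promote this non-vanishing to a uniform positive sign by a continuity argument. Viewing $u_3'$ as a continuous function of $(r_1,r_2,k)$ on the two connected regions $\{(r_1,r_2,k): r_1>r_2\ge 0,\, k\ge 0\}$ and $\{(r_1,r_2,k): r_2>r_1\ge 0,\, k\ge 0\}$, and noting that it vanishes nowhere on either region, it must have constant sign on each. To identify that sign I evaluate at the boundary points $r_2=0<r_1$ and $r_1=0<r_2$ respectively, where Lemma~\ref{lem:tributary_u3'} furnishes $u_3'(k)>0$ directly. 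Hence $u_3'(k)>0$ throughout, and Lemma~\ref{lemma:tributary_samesign} delivers $\MM'_{\TT,2}(k)>0$.

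The only step outside the quoted lemmas is the justification of smooth dependence of $\bm u^{*}$ and of $u_3' = \partial_k u_3^{*}$ on $(r_1,r_2,k)$, which is standard: the implicit function theorem applied to the equilibrium equations \eqref{eq:t} yields smoothness since the Jacobian of the right-hand side of \eqref{eq-system} at the globally attracting positive equilibrium is nonsingular. I do not expect this final sign-propagation to be the hard part; rather, the genuine obstacle is hidden in Lemma~\ref{lemma:tributary_diagonal}, where the condition $d>q$ is essential to force $u_1,u_2\in(K/2,K)$ and thereby to exploit the monotonicity of the auxiliary function $f(x)=x(1-x/K)/(2x/K-1)$ on that interval. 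Once Lemmas~\ref{lem:r3=0}, \ref{lemma:tributary_samesign}, \ref{lemma:tributary_diagonal}, and \ref{lem:tributary_u3'} are in hand, the theorem reduces to the bookkeeping outlined above.
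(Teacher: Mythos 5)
Your proposal is correct and follows essentially the same route as the paper's proof: reduce to the sign of $u_3'$ via Lemma \ref{lemma:tributary_samesign}, use Lemma \ref{lem:r3=0} for the $r_3=0$ case, characterize the zero set via Lemma \ref{lemma:tributary_diagonal}, and propagate the sign by continuity from the axes $\{r_1=0,\,r_2>0\}$ and $\{r_2=0,\,r_1>0\}$ where Lemma \ref{lem:tributary_u3'} applies. The only cosmetic difference is that you run the connectedness argument in $(r_1,r_2,k)$-space while the paper fixes $k$ and argues in $(r_1,r_2)$; your added remark on smooth dependence via the implicit function theorem is a reasonable (and implicitly assumed) supplement.
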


\begin{proof}
From Lemma \ref{lem:r3=0}, if $r_3=0$ we have   $\MM'_{\TT,2}= 0$.
Suppose that $r_3>0$. We fix the values of $r_3$ and $k$. Now $u_3'$ depends continuously on $r_1, r_2$. From Lemma \ref{lemma:tributary_diagonal}, $u_3'=0$ if and only if $r_1=r_2$. Thus $u_3'$ has the same sign in each region: $R_1=\{(r_1,r_2): 0\leq r_1<r_2\}$ and $R_2=\{(r_1,r_2): 0\leq r_2<r_1\}$. Lemma \ref{lem:tributary_u3'} already establishes that $u_3'>0$ on $\{(r_1, r_2):\ r_1=0, r_2>0\}$ and $\{(r_1, r_2): \ r_2=0, r_1>0\}$. Thus $u_3'>0$ in both regions $R_1$ and $R_2$ and $u_3'=0$ on the line $r_1=r_2$. Using Lemma \ref{lemma:tributary_samesign}, we conclude that $\MM'_{\TT,2}>0$ when $r_1\neq r_2$ and $\MM'_{\TT,2}= 0$ when $r_1=r_2$, which completes the proof. %Refer to Figure \ref{fig:biomass proof} for a graphical illustration of this argument.

 \end{proof}
% \begin{figure}[!h]
%     \centering
%     \includegraphics[width=0.5\linewidth]{Proof1.PNG}
%     \caption{Illustration of the proof of Theorem \ref{thm:biomass_2dir_tributary} where the green region corresponds to the set $R_1$ and the blue region corresponds to the set $R_2$. We have shown that $\mathcal{M}'=0$ if and only if $r_1=r_2$ and $\mathcal{M}'>0$ $\{(r_1, r_2):\ r_1=0, r_2>0\}$ and $\{(r_1, r_2): \ r_2=0, r_1>0\}$ \textcolor{blue}{[What is the red line?]}}
%     \label{fig:biomass proof}
% \end{figure}

\begin{remark}
Through numerical simulations we observe that without the assumption $d>q$, the total network biomass can be an increasing, decreasing, or non-monotone function of $k$ (Figure \ref{fig: biomass d>q}).
\end{remark}
\begin{figure}
    \centering
    \includegraphics[width=1\linewidth]{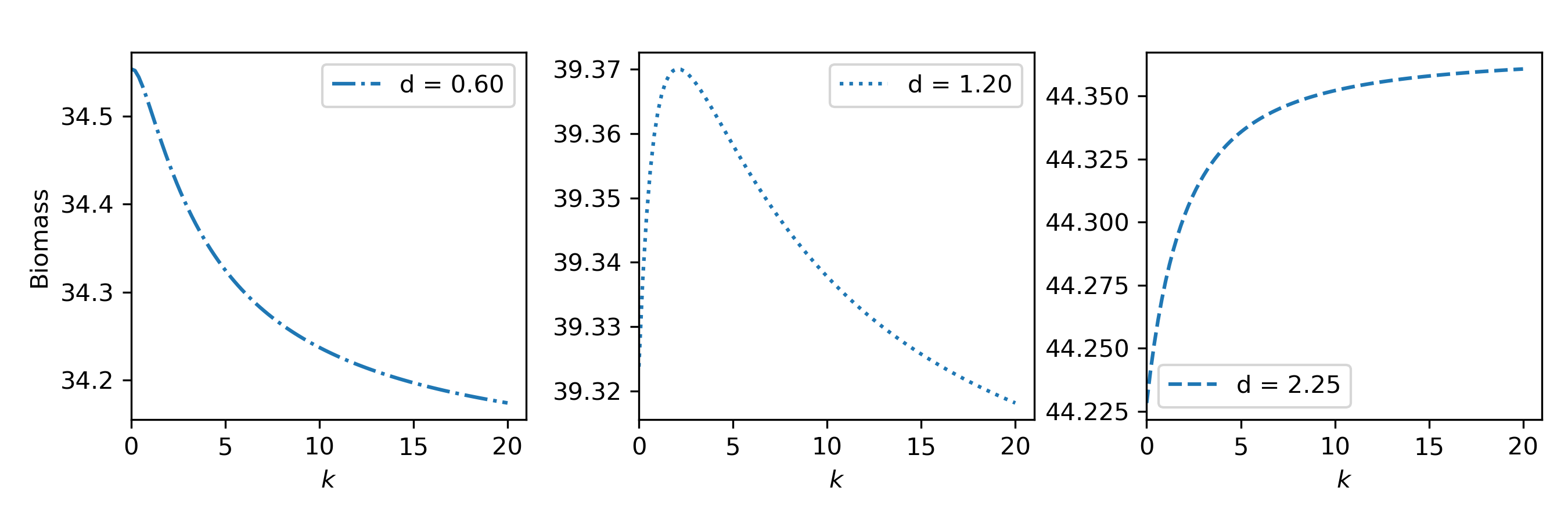}
    \caption{For the tributary network with a bi-directional edge modification, if the condition $d>q$ fails, then the network biomass can be an increasing, decreasing, or  non-monotone function of $k$. Graphs were obtained using parameters values $r_1=2.7,r_2=5.8,r_3=9.5,q=5,K=20$.}
    \label{fig: biomass d>q}
\end{figure}

The result for the distributary stream network is presented below. Notably, the monotonicity of the network biomass in this case does not depend on the relative values of $d$ and $q$. The proof for Theorem \ref{thm:biomass_2dir_distributary}, provided in the Appendix, uses similar lemmas as the proof of Theorem \ref{thm:biomass_2dir_tributary}.  

\begin{theorem}\label{thm:biomass_2dir_distributary}
For the distributary stream network,  we have
\[
\MM'_{\DD,2}(k)\leq 0,
\]
where the equality happens if and only if $r_1=0$ or $r_2=r_3$.
\end{theorem}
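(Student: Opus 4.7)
The plan is to parallel the proof of Theorem \ref{thm:biomass_2dir_tributary} with modifications suited to the distributary configuration. First, differentiating the equilibrium system for $L_{\DD,2}(k)$, forming $u_i' \cdot (\mathrm{eq}_i) - u_i \cdot (\mathrm{eq}_i')$ for each $i=1,2,3$, and taking the weighted sum $\tfrac{d+q}{d}\cdot(i{=}1) + (i{=}2) + (i{=}3)$ yields the analog of \eqref{eq:t_sum}:
\[
\frac{(d+q)\,r_1 u_1^2}{dK}\,u_1' + \frac{r_2 u_2^2}{K}\,u_2' + \frac{r_3 u_3^2}{K}\,u_3' + (u_2-u_3)^2 = 0.
\]
The weight $(d+q)/d$ is chosen so that the dispersal cross-terms between node 1 and nodes $2,3$ cancel, while the $k$-dependent cross-terms between nodes 2 and 3 annihilate by antisymmetry. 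The $i=1$ combination alone also gives the linking identity $u_1'[\tfrac{r_1 u_1^2}{K}+d(u_2+u_3)] = d u_1 (u_2'+u_3')$, which (for $r_1>0$) shows that $u_1'$, $u_2'+u_3'$, and $\MM'_{\DD,2}$ all share the same sign.

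I would dispatch the trivial case $r_1=0$ by directly verifying that $(Kd/(d+q),K,K)$ is the unique positive equilibrium for every $k$, so $\MM'_{\DD,2}=0$. For $r_1>0$, the pair $\underline{\bm u}=(Kd/(d+q),K,K)$ and $\bar{\bm u}=(K,(d+q)K/d,(d+q)K/d)$ serve as lower and upper solutions (analog of Lemma \ref{lem:tributary_bounds}), giving $u_1\in[Kd/(d+q),K)$ and $u_2,u_3\in(K,(d+q)K/d)$. Crucially $u_2,u_3>K>K/2$ is automatic here, which is why the $d>q$ hypothesis is not needed. A continuity-plus-boundary argument as in Lemma \ref{lemma:r1r2} gives $\operatorname{sgn}(u_2-u_3)=\operatorname{sgn}(r_3-r_2)$; note the sign reversal from the tributary case, because patches 2 and 3 here are downstream sinks where a larger $r_i$ accelerates removal of the excess population.

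The main obstacle is the analog of Lemma \ref{lemma:tributary_diagonal}: for $r_1>0$, $u_1'(k)=0$ for some $k$ if and only if $r_2=r_3$. The reverse direction is quick — relabeling gives $u_2=u_3$, $u_2'=u_3'$, and combining the key identity with the linking identity collapses everything to $u_1'\cdot(\text{strictly positive quantity})=0$. For the forward direction, assume $u_1'=0$ so that $u_3'=-u_2'$. Subtracting the differentiated equations at nodes 2 and 3 gives $u_2'[r_2(1-2u_2/K)-r_3(1-2u_3/K)]=0$, and subtracting the equilibrium equations yields $r_2 u_2(1-u_2/K)-r_3 u_3(1-u_3/K)=(d+2k)(u_2-u_3)$. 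If $u_2'\neq 0$, eliminating $r_3$ via the first relation and factoring $(u_2-u_3)$ out of the second produces
\[
\frac{r_2\bigl[(1-u_2/K)(1-u_3/K)+u_2 u_3/K^2\bigr]}{1-2u_3/K}=d+2k.
\]
Since $u_2,u_3\ge K$, the bracket is strictly positive (both $(1-u_2/K)(1-u_3/K)\ge 0$ and $u_2 u_3/K^2>0$), while $1-2u_3/K\le -1$; hence the left side is $\le 0<d+2k$, a contradiction. So $u_2'=0$, which via the differentiated equation at node 2 forces $u_2=u_3$ and thence $r_2=r_3$. Identifying this sign-producing factorization — and noting that the bound $u_2,u_3\ge K$ (rather than $u_i>K/2$) is what removes the $d>q$ hypothesis — is the main technical hurdle.

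Finally I would pin down the direction of monotonicity via the boundary slice $r_1>0,\; r_2=0<r_3$ (the case $r_3=0<r_2$ being symmetric). By the previous lemma $u_1'$ has constant sign in $k$, so it suffices to rule out $u_1'>0$. Assume $u_1'>0$, so $u_2'+u_3'>0$; the sum identity $r_1 u_1'(1-2u_1/K)+r_3 u_3'(1-2u_3/K)=0$ then determines $\operatorname{sgn}(u_3')$. If $u_1\le K/2$ then $u_3'\ge 0$, and the key identity has strictly positive left side against its strictly negative right side — contradiction. If $u_1>K/2$ then $u_3'<0$ and $u_2'>-u_3'>0$; applying the identity $u_3'\cdot(\mathrm{eq}_3)-u_3\cdot(\mathrm{eq}_3')$, which reads
\[
\tfrac{r_3 u_3^2}{K}u_3'+(d+q)(u_3' u_1-u_3 u_1')+k(u_3' u_2-u_3 u_2')+u_3(u_3-u_2)=0,
\]
and using $u_2>u_3$ (from the analog of Lemma \ref{lemma:r1r2}) together with the assumed signs, each of the four summands on the left is $\le 0$ with the first two strictly negative — again contradicting the equation. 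Hence $u_1'<0$ throughout each of the connected regions $\{0\le r_2<r_3\}$ and $\{0\le r_3<r_2\}$, and combining with the $r_1=0$ case and the reverse implication yields $\MM'_{\DD,2}(k)\le 0$ with equality precisely when $r_1=0$ or $r_2=r_3$.
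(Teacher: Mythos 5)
Your proposal is correct and follows essentially the same route as the paper's appendix proof: the same weighted identity $\frac{(d+q)r_1u_1^2}{dK}u_1'+\frac{r_2u_2^2}{K}u_2'+\frac{r_3u_3^2}{K}u_3'+(u_2-u_3)^2=0$, the same chain of lemmas (equilibrium bounds from sub/supersolutions, the sign of $u_2-u_3$ versus $r_3-r_2$, the linking identity between $u_1'$ and $u_2'+u_3'$, the characterization $u_1'=0$ iff $r_2=r_3$, the boundary slices $r_2=0$ or $r_3=0$, and the continuity argument), together with the same observation that $u_2,u_3\ge K$ is what removes the $d>q$ hypothesis. The only internal variations are that, in the forward direction of the key lemma, you eliminate $r_3$ and factor out $(u_2-u_3)$ where the paper divides the two relations and uses monotonicity of $x(1-x/K)/(2x/K-1)$ on $[K,(d+q)K/d]$ --- your factorization tacitly assumes $u_2\neq u_3$, but in the subcase $u_2=u_3$ the relation $r_2(1-2u_2/K)=r_3(1-2u_3/K)$ yields $r_2=r_3$ immediately, so nothing breaks --- and in the boundary lemma you contradict the node-3 product identity directly where the paper instead derives $u_2'<0$ and contradicts the sign-linking lemma; both variants are valid.
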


%To provide possible biological interpretation and intuition to our results, we made some observations and conjectures that may help explain the difference between the two configurations. To this end, we call a patch a \textit{source} if its population at positive equilibrium is smaller than the carrying capacity, and a \textit{sink}  otherwise. We conjecture that adding bi-directional edges between two sinks always decreases the total biomass, while adding bi-directional edges between two sources can lead to all types of changes to the biomass.

\subsection{One-directional edge modification}

In contrast to the case of bi-directional edge modification, we find in Theorems \ref{thm:biomass_1dir_tributary} and \ref{thm:biomass_1dir_distributary} that for $k\gtrapprox 0$ the effect of adding a one-directional edge on the total biomass depends on the values of the intrinsic growth rates $r_1, r_2, r_3$. Proofs of these theorems, which follow analogous arguments to those used in the proof of Theorems \ref{thm:biomass_2dir_tributary} and \ref{thm:biomass_2dir_distributary},  are provided in the Appendix. 
Specifically, we find that adding movement from a faster growing patch (higher intrinsic growth rate) to a slower growing patch increases the biomass. Conversely adding movement from a slower growing patch to a faster growing patch decreases the biomass.

\begin{theorem}\label{thm:biomass_1dir_tributary}
For the tributary stream network, if $d>q$, then we have
\[
    \MM'_{\TT,1}(0)  \begin{cases}
        <0 & \text{if} \quad r_1<r_2 \text{ and } r_3>0\\
        =0 & \text{if} \quad r_1=r_2 \text{ or } r_3=0\\
        >0 & \text{if} \quad r_1>r_2 \text{ and } r_3>0.\\
    \end{cases}
\]
\end{theorem}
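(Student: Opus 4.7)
The plan is to adapt the argument from the bi-directional proof of Theorem \ref{thm:biomass_2dir_tributary} to the one-directional setting, but only at the single point $k=0$. After writing out the equilibrium system for $L_{\TT,1}(k)+R$ and differentiating in $k$, I would form the combination $u_i'\cdot(\text{Eq}_i)-u_i\cdot(\text{Eq}_i')$ for each $i=1,2,3$ and take the weighted sum $\text{Eq1}+\text{Eq2}+\tfrac{d}{d+q}\text{Eq3}$. The cross-coupling terms from $L_\TT$ telescope away exactly as in the derivation of \eqref{eq:t_sum}, producing at $k=0$ the one-directional analog
\[
\frac{r_1 u_1^2}{K}\,u_1' + \frac{r_2 u_2^2}{K}\,u_2' + \frac{d}{d+q}\cdot\frac{r_3 u_3^2}{K}\,u_3' + u_1(u_1-u_2) = 0,
\]
whose residual $u_1(u_1-u_2)$ replaces the manifestly non-negative $(u_1-u_2)^2$ of the bi-directional identity.

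Separately, the combination coming from Eq3 alone gives $u_3'\bigl[\tfrac{r_3 u_3^2}{K}+(d+q)(u_1+u_2)\bigr]=(d+q)\,u_3\,(u_1'+u_2')$; since the bracket is positive at any positive equilibrium, $u_3'$ and $u_1'+u_2'$ share a common sign, and therefore $\MM'_{\TT,1}(0)=u_1'+u_2'+u_3'$ has the sign of $u_3'(0)$. The theorem thus reduces to pinning down $\mathrm{sgn}(u_3'(0))$ in each parameter regime.

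For $r_1=r_2$, a relabeling argument on the unperturbed tributary (which is symmetric in nodes 1 and 2) forces $u_1=u_2$, so the residual vanishes and the three remaining (common-sign) terms of the identity must each vanish. The case $r_3=0$ is handled similarly by working with the explicit unperturbed equilibrium $(K,K,K(d+q)/d)$, which also forces $u_1=u_2$. For $r_1\ne r_2$ and $r_3>0$, Lemma \ref{lemma:r1r2} applied at $k=0$ gives $\mathrm{sgn}(u_1-u_2)=\mathrm{sgn}(r_1-r_2)$, and Lemma \ref{lem:tributary_bounds} combined with the hypothesis $d>q$ places $u_1,u_2\in(K/2,K)$. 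I would then follow the template of Lemmas \ref{lemma:tributary_diagonal} and \ref{lem:tributary_u3'}: first show that $u_3'(0)=0$ only on $\{r_1=r_2\}$, using the monotonicity of the auxiliary function $f(x)=x(1-x/K)/(2x/K-1)$ on $(K/2,K)$ (precisely where $d>q$ is needed); then evaluate $u_3'(0)$ on the boundary loci $\{r_1=0<r_2\}$ and $\{r_2=0<r_1\}$ to fix its sign on each region $R_1=\{r_1<r_2\}$ and $R_2=\{r_1>r_2\}$; finally propagate by continuity.

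The main obstacle is the sign-indefinite residual $u_1(u_1-u_2)$: unlike the $(u_1-u_2)^2$ in the bi-directional identity, which by itself pins down the sign of the remaining three-term sum, here the summed identity alone is inconclusive when $r_1\ne r_2$. The resolution, exactly as in the bi-directional proof, is to isolate $u_3'$ through its own identity and import the monotonicity-plus-boundary analysis of Lemmas \ref{lemma:tributary_diagonal}--\ref{lem:tributary_u3'}, which is the step that requires the diffusion-dominance assumption $d>q$.
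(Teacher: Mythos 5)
Your overall architecture matches the paper's appendix proof: reduce $\operatorname{sgn}\MM'_{\TT,1}(0)$ to $\operatorname{sgn}u_3'(0)$ via the row-3 identity, characterize the zero set $\{u_3'(0)=0\}$ as $\{r_1=r_2\}$ using the decreasing auxiliary function $f(x)=x(1-x/K)/(2x/K-1)$ on $(K/2,K)$ (which is exactly where $d>q$ enters), fix the sign on the axes $\{r_1=0<r_2\}$ and $\{r_2=0<r_1\}$, and propagate by continuity. Your weighted identity with residual $u_1(u_1-u_2)$ is correct at $k=0$ (the paper instead uses the unweighted sum $\sum_i r_iu_i'(0)(1-2u_i(0)/K)=0$ for the tributary case, though it uses precisely your type of identity for the distributary case), and your treatment of $r_1=r_2$ is sound.

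The genuine gap is the $r_3=0$ branch. When $r_3=0$ the unperturbed equilibrium is $(K,K,(d+q)K/d)$ for any $r_1,r_2$, so indeed $u_1(0)=u_2(0)$ and your residual vanishes; but the third term of your identity also vanishes identically (its coefficient is $r_3u_3^2/K=0$), and the first two terms carry the unequal weights $r_1$ and $r_2$. The identity therefore collapses to $r_1u_1'(0)+r_2u_2'(0)=0$, which does not force $u_1'(0)+u_2'(0)=0$ or $u_3'(0)=0$ unless $r_1=r_2$; your phrase ``the three remaining (common-sign) terms must each vanish'' is exactly the step that fails, since only the combination $u_1'+u_2'$, not each term separately, shares the sign of $u_3'$. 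In fact the claimed conclusion is false there: take $d=2$, $q=1$, $K=1$, $(r_1,r_2,r_3)=(1,0,0)$; the equilibrium at $k=0$ is $(1,1,3/2)$, the linearized system $-4u_1'+2u_3'=1$, $-3u_2'+2u_3'=-1$, $3u_1'+3u_2'-4u_3'=0$ gives $(u_1',u_2',u_3')=(0,2/3,1/2)$, and so $\MM'_{\TT,1}(0)=7/6>0$ rather than $0$. The paper's own appendix has the same defect: it asserts the $r_3=0$ lemma is a special case of Lemma \ref{lem:r3=0}, but that lemma's proof relies on $(K,K,(d+q)K/d)$ remaining an equilibrium for all $k$, which holds for the bi-directional modification but fails for the one-directional one (row 1 then reads $-kK\neq0$). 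So for $r_3>0$ your plan goes through as in the paper, but the $r_3=0$, $r_1\neq r_2$ case of the statement needs to be corrected rather than proved.
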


\begin{theorem}\label{thm:biomass_1dir_distributary}
For the distributary stream network, we have
\[
    \MM'_{\DD,1}(0)  \begin{cases}
        <0 & \text{if} \quad r_2<r_3 \text{ and } r_1>0\\
        =0 & \text{if} \quad r_2=r_3 \text{ or } r_1=0\\
        >0 & \text{if} \quad r_2>r_3 \text{ and } r_1>0.\\
    \end{cases}
\]
\end{theorem}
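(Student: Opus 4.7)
My plan mirrors the structure of the proof of Theorem~\ref{thm:biomass_1dir_tributary}, adapted to the distributary configuration, and reuses the identity-combining technique from the bi-directional proofs (Theorems~\ref{thm:biomass_2dir_tributary} and~\ref{thm:biomass_2dir_distributary}). Because only the derivative at $k=0$ is required, the argument reduces to analyzing the $3\times 3$ linear system obtained by implicit differentiation of the equilibrium equations for $L_{\DD,1}(k)+R$, coupled with a continuity argument in $(r_1,r_2,r_3)$ to pin down the sign.

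First I would write the three equilibrium equations for the perturbed system, together with their $k$-derivatives, and evaluate everything at $k=0$. Then, following the recipe used to derive \eqref{eq:t_sum}, I would form $u_i'\cdot(\text{eq}_i)-u_i\cdot(\text{eq}_i')$ for each $i=1,2,3$ and take the weighted combination with weights $1,\ d/(d+q),\ d/(d+q)$. These weights are exactly the reversibility vector $\bm v=(1,d/(d+q),d/(d+q))^T$ from the proof of Theorem~\ref{thm:rho'-3patch}(b), and they cause the inflow--outflow cross terms to telescope, leaving the key identity
\[
\frac{r_1 u_1^{2}}{K}u_1' \;+\; \frac{d}{d+q}\frac{r_2 u_2^{2}}{K}u_2' \;+\; \frac{d}{d+q}\frac{r_3 u_3^{2}}{K}u_3' \;+\; \frac{d\,u_2(u_2-u_3)}{d+q} \;=\; 0,
\]
where $(u_1,u_2,u_3)$ is the unperturbed positive equilibrium and $(u_1',u_2',u_3')$ denote the $k$-derivatives evaluated at $k=0$.

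Next I would dispatch the degenerate cases. When $r_2=r_3$, the $2\leftrightarrow 3$ relabeling symmetry of $L_\DD$ together with uniqueness of the positive equilibrium forces $u_2=u_3$ at $k=0$, so the last term in the identity vanishes; the differentiated first equation then ties $u_2'+u_3'$ to $u_1'$, and a short linear-algebra manipulation gives $u_1'=0$ and $u_2'=-u_3'$, whence $\MM_{\DD,1}'(0)=0$. For the case $r_1=0$, I would first observe, analogously to Lemma~\ref{lem:r3=0}, that the unique positive equilibrium at $k=0$ is the explicit triple $(Kd/(d+q),\,K,\,K)$; substituting this into the key identity together with the linear system coming from implicit differentiation then closes the case.

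For the strict-sign regions $\{r_1>0,\,r_2>r_3\}$ and $\{r_1>0,\,r_2<r_3\}$, I would follow the continuity strategy of Lemma~\ref{lemma:r1r2} and Theorem~\ref{thm:biomass_2dir_tributary}: since $(r_1,r_2,r_3)\mapsto\MM_{\DD,1}'(0)$ is continuous on the admissible region and vanishes on $\{r_1=0\}\cup\{r_2=r_3\}$ by the previous step, its sign is constant on each connected component of the complement, so it suffices to check one convenient representative in each component (for instance $r_2=0,\,r_1,r_3>0$ and $r_3=0,\,r_1,r_2>0$), where the $3\times 3$ differentiated system collapses to a short explicit computation that produces the advertised sign. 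The main obstacle will be the $r_1=0$ case: unlike the bi-directional scenario of Theorem~\ref{thm:biomass_2dir_distributary}, the one-directional perturbation instantly breaks the $2\leftrightarrow 3$ symmetry once $k>0$, so one cannot reduce to a $k$-independent equilibrium, and the desired cancellation in $u_1'+u_2'+u_3'$ must be extracted delicately from the key identity combined with the full structure of the differentiated system rather than from any single equation in isolation.
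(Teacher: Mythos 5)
Your plan follows essentially the same route as the paper's appendix proof: derive the differentiated equilibrium system at $k=0$, extract a weighted "key identity," reduce the sign of $\MM'_{\DD,1}(0)$ to the sign of $u_1'(0)$, dispose of the degenerate sets $\{r_1=0\}$ and $\{r_2=r_3\}$, and then propagate signs by continuity from the axes $\{r_2=0\}$ and $\{r_3=0\}$. (Your identity, with the weights $1,\,d/(d+q),\,d/(d+q)$, is in fact the correctly telescoped version of the paper's \eqref{eq:sum_distributary}, which omits the $d/(d+q)$ factors; the discrepancy is harmless for the sign analysis since all weights are positive.) Your worry about the $r_1=0$ case is unfounded: when $r_1=0$ the equilibrium is $(Kd/(d+q),K,K)$, and because $u_2=u_3=K$ the added one-directional terms $-ku_2+ku_3$ vanish identically, so this equilibrium is $k$-independent for all $k\ge 0$ and the case is immediate — the symmetry-breaking you anticipate never enters.

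The one genuine gap is in your continuity step. Knowing that $\MM'_{\DD,1}(0)$ vanishes \emph{on} $\{r_1=0\}\cup\{r_2=r_3\}$ does not imply its sign is constant on the complementary regions $\{r_1>0,\ r_2<r_3\}$ and $\{r_1>0,\ r_2>r_3\}$; for that you must also prove the converse, namely that $u_1'(0)=0$ \emph{forces} $r_2=r_3$ when $r_1>0$, so that the derivative is nowhere zero off the degenerate set. This is the content of the paper's Lemma \ref{lemma_u1p} (specialized to $k=0$), and it is not a formality: it requires combining the relation $r_2\left(1-\tfrac{2u_2}{K}\right)=r_3\left(1-\tfrac{2u_3}{K}\right)$ (obtained from $u_2'+u_3'=0$) with the ordering of $u_2,u_3$ from Lemma \ref{lem:r2r3_distributary}, the bounds $u_2,u_3\in\left[K,\tfrac{d+q}{d}K\right)$ from Lemma \ref{bounds}, and the monotonicity of $f(x)=\dfrac{x\left(1-\frac{x}{K}\right)}{\frac{2x}{K}-1}$ on that interval to reach a contradiction. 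Your proposal should state and prove this "only if" direction explicitly before invoking constancy of sign on the two regions; once it is in place, the rest of your outline closes the argument exactly as the paper does.
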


\begin{remark}
    We note that Theorems \ref{thm:biomass_1dir_tributary}-\ref{thm:biomass_1dir_distributary} apply only for $k \gtrapprox 0$. In fact, from numerical simulation we observe that, for both network configurations, the biomass tends to decrease when $k$ is large. An example of this is provided in Figure \ref{fig: biomass 1 dir}. 
\end{remark}

\begin{figure}[h!]
    \centering
    \includegraphics[width=3in,height=2.25in]{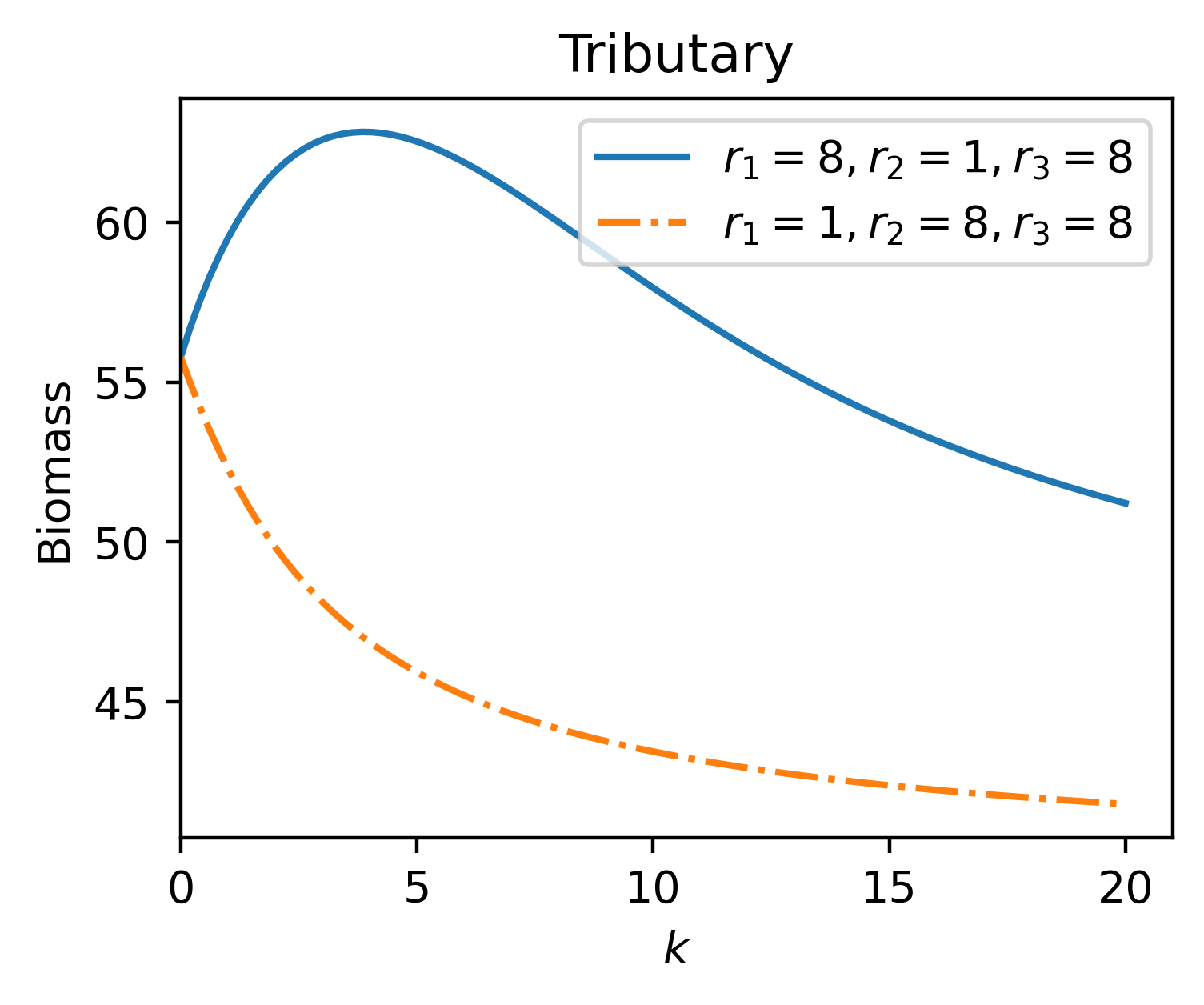}
\includegraphics[width=3in,height=2.25in]{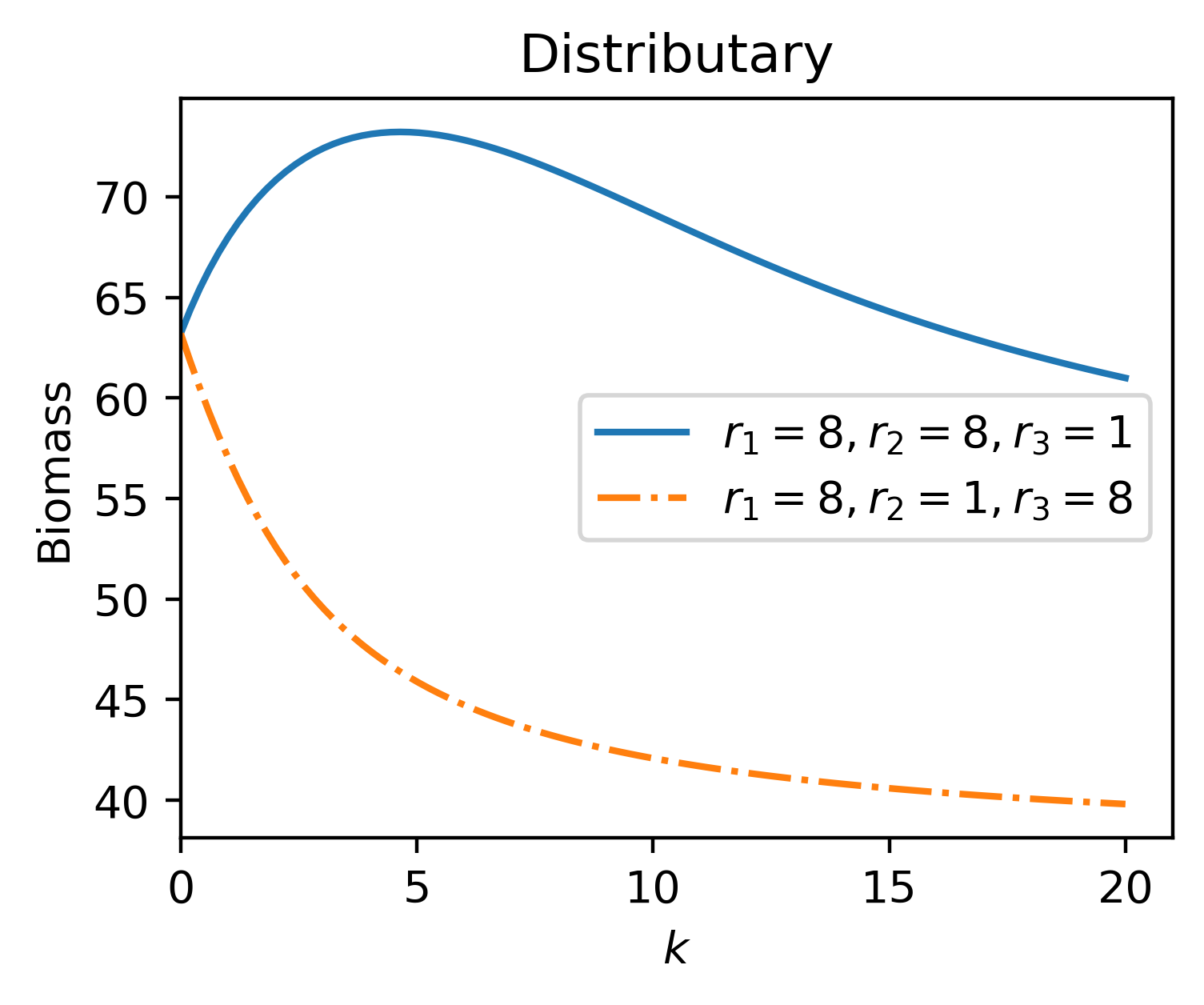}
    \caption{For the tributary and distributary networks with a one-directional edge modification, the network biomass decreases for large $k$. Graphs were obtained using parameters values  $d = 2, q=1,K=20$.}
    \label{fig: biomass 1 dir}
\end{figure}

\section{Stream networks with an arbitrary number of patches}\label{sec:homflow}
The monotonicity of the metapopulation growth rate for the case of a bi-directional edge modification (Theorem \ref{thm:rho'-3patch}) can be  extended to stream networks with any number of patches. We recall the graph-theoretic definition in \cite{nguyen2023maximizing} that allows us to describe certain stream networks with an arbitrary number of patches. 

\begin{definition}[\cite{nguyen2023maximizing}]\label{def:leveled_graph}
Let $G$ be a directed graph, and denote the set of nodes of $G$ by $V$. Consider a function $f: V\to \mathbb{Z}_{\geq 0}$. We call $f$ a \textit{level function}, $f(i)$ as the \textit{level} of node $i$ for each $i\in V$, and $(G,f)$ as a \textit{leveled graph} if the following assumptions are satisfied.
\begin{enumerate}%
    \item[(i)] For each $0\leq k\leq \max_{i\in V}\{f(i)\}$, there exists a node $j$ such that $f(j)=k$.
    \item[(ii)] For each pair of nodes $i, j$, there is no edge between $i$ and $j$ if $|f(i)-f(j)|\neq 1$.
\end{enumerate}
\end{definition}

In what follows, we associate the nodes with level 0 to be most upstream nodes, and the level of an arbitrary node with the distance between that node and the most upstream nodes. In particular, the nodes with maximum level are the most downstream nodes.  We consider homogeneous flow stream networks in which all the upstream movement coefficients are equal, and all downstream movement coefficients are equal. Two examples of homogeneous flow stream networks with five nodes are provided in Figure \ref{fig: network examples}.

\begin{definition}[\cite{nguyen2023maximizing}]
Consider a leveled graph $(G,f)$ and an irreducible connection matrix $L$. We say that $(G,f, L)$ is a \textit{homogeneous flow stream network} if the following assumptions are satisfied.
\begin{enumerate}
    \item[(i)] If there is an edge from node $i$ to node $j$, then there is also an edge from node $j$ to node $i$. 
    \item[(ii)] If there is an edge from node $i$ to node $j$, then the weight is  $\ell_{ji} = d+q$ if $f(j)-f(i)=1$ (i.e., the edge is from an upstream to a downstream node) and $\ell_{ji}=d$ if $f(i)-f(j)=1$ (i.e., the edge is from a downstream to an upstream node). 
\end{enumerate}
\end{definition}

%\begin{definition}[\cite{nguyen2023maximizing}]
%Let $(G,f, L)$ be a homogeneous flow stream network and $V$ be the set of nodes of $G$. 
% A node $i\in V$ is called a most downstream end node if $f(i)=\max_{i\in V}\{f(i)\}$; it is called a downstream end node if there does not exist an adjacent node $j\in V$ such that $f(j)-f(i)=1$.
% \end{definition}

% \begin{figure}
%     \centering
%     {\color{red} Make a figure for this, and for theorem in the next section.}%\includegraphics[width=0.8\linewidth]{Subplot.png}
%     \caption{Two examples of homogeneous stream networks with five nodes. These configurations represent scenarios \textcolor{red}{update} considered in Section \ref{sec:discussion}.}
%     \label{fig: network examples}
% \end{figure}

\begin{figure}[htbp]
\centering
\begin{tikzpicture}
\begin{scope}[every node/.style={draw}, node distance= 1.5 cm]
    \node[circle] (1) at (-3+0,     0-0) {$1$};
    \node[circle] (2) at (-3-1.5,   0-2) {$2$};
    \node[circle] (3) at (-3+1.5,   0-2) {$3$};
    \node[circle] (4) at (-3-1.5,     0-4) {$4$};
    \node[circle] (5) at (-3+1.5,     0-4) {$5$};

     \node[circle] (6) at (3+0,      0-0) {$1$};
     \node[circle] (7) at (6+0,    0) {$2$};
     \node[circle] (8) at (3,    0-2) {$3$};
     \node[circle] (9) at (6,    0-2) {$4$};
     \node[circle] (10) at (3+1.5,      0-4) {$5$};
\end{scope}

\begin{scope}[every node/.style={fill=white},
              every edge/.style={thick}]
    \draw[thick] [->](1) to [bend right] node[left=5] {{}} (2); %\footnotesize $a_{41}$
    \draw[thick] [<-](1) to node[right=4] {{}} (2); %\footnotesize $a_{14}$
    \draw[thick]  [->](1) to [bend left] node[right=5] {{}} (3); %\footnotesize $a_{42}$
    \draw[thick] [<-](1) to node[left=4] {{}} (3); %\footnotesize $a_{24}$
    \draw[thick] [->](2) to [bend right]  (4); 
    \draw[thick] [<-](2) to [bend left] (4); %\footnotesize $a_{34}$
    \draw[thick] [->](3) to [bend left]  (5); %\footnotesize $a_{43}$
    \draw[thick] [<-](3) to [bend right] (5); %\footnotesize $a_{34}$

     \draw[thick] [->](6) to [bend right]  (8); 
    \draw[thick] [->](8) to [bend right]  (6);

     \draw[thick] [->](7) to [bend right]  (9); 
    \draw[thick] [->](9) to [bend right]  (7);

       \draw[thick] [->](8) to [bend right]  (10); 
    \draw[thick] [->](10) to   (8);

       \draw[thick] [->](9) to [bend left]  (10); 
    \draw[thick] [->](10) to   (9);
    % \draw[thick] [<-](5) to node[right=4] {{}} (6); %\footnotesize $a_{14}$
    % \draw[thick] [->](6) to [bend right=15] node[right=5] {{}} (7); %\footnotesize $a_{42}$
    % \draw[thick] [<-](6) to (7); %\footnotesize $a_{24}$
    % \draw[thick] [->](5) to [bend left] node[right=5] {{}} (8); %\footnotesize $a_{43}$
    % \draw[thick] [<-](5) to node[left=4] {{}} (8); %\footnotesize $a_{34}$
    % \draw[thick] [->](8) to [bend left] node[right=5] {{}} (9); %\footnotesize $a_{43}$
    % \draw[thick] [<-](8) to node[left=4] {{}} (9); %\footnotesize $a_{34}$
    % \draw[thick] [->](7) to [bend right] node[right=5] {{}} (9); %\footnotesize $a_{43}$
    % \draw[thick] [<-](7) to node[left=4] {{}} (9); %\footnotesize $a_{34}$
\end{scope}
\end{tikzpicture}
\caption{Two leveled graphs. The left digraph has level function $f(1)=0,f(2)=f(3)=1,f(4)=f(5)=2$. The right digraph has level function $f(1)=f(2)=0,f(3)=f(4)=1,f(5)=2$. These graphs are homogeneous flow stream networks if for edges directed from an upstream node to a downstream node the weight is
$d+q$, and for edges directed from a downstream node to an upstream node the weight is $d$.} \label{fig: network examples}
\end{figure}
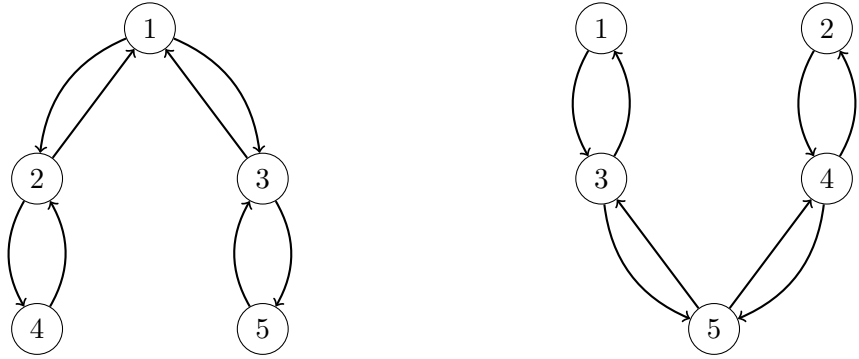

In Theorem \ref{thm:rho'-npatch}, we show that Theorem \ref{thm:rho'-3patch} can be extended to an arbitrary homogeneous flow stream network $(G,f,L)$. Consider the edge modification where we add bi-directional edges between pairs of nodes of the same level. (Notice that here we allow for possibly multiple bi-directional edges to be added as long as they are on the same level.) The modified connection matrix is
\[
L(k)=L+kE,
\]
where the matrix $E$ satisfies the following assumption.

\begin{enumerate}  \item[(A4)] \label{assumption:E-bidir} Each column of $E$ sums to zero and the off-diagonal entries ($i\neq j$) of $E$ satisfy
\begin{enumerate}[(i)]
    \item if $f(i)=f(j)$, then either {$e_{ij}=e_{ji}=1$} or $e_{ij}=e_{ji}=0$;
    \item if $f(i)\neq f(j)$ then $e_{ij}=e_{ji}=0$. 
\end{enumerate} 
\end{enumerate}
  
In Lemma \ref{lem:v}, we show that there exists a vector $\bm v$ satisfying the condition in Lemma \ref{lemma:rhok'}.

\begin{lemma}\label{lem:v}
    Consider the vector $\bm v$, where $v_i=\left(\frac{d}{d+q}\right)^{f(i)}$. Then $\bm v$ satisfies the condition in Lemma \ref{lemma:rhok'}, i.e. $v_i(\ell_{ij}+ke_{ij})=v_j(\ell_{ji}+ke_{ji})$ for all $i, j$.
\end{lemma}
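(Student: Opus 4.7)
The plan is to verify the identity $v_i(\ell_{ij}+ke_{ij})=v_j(\ell_{ji}+ke_{ji})$ by a clean case analysis based on the relationship between the levels $f(i)$ and $f(j)$. Since the leveled-graph definition forbids edges between nodes whose levels differ by more than one, and assumption (A4) restricts the support of the modification matrix $E$ to pairs at the same level, only three essentially different cases arise.

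First I would dispose of the trivial case $|f(i)-f(j)|\ge 2$: by Definition \ref{def:leveled_graph}(ii) we have $\ell_{ij}=\ell_{ji}=0$, and by (A4)(ii) we have $e_{ij}=e_{ji}=0$, so both sides vanish. Next, for the case $f(i)=f(j)$, again $\ell_{ij}=\ell_{ji}=0$ (same-level nodes share no edge in $L$), while $v_i=v_j$ because the levels are equal; combined with the symmetry $e_{ij}=e_{ji}$ from (A4)(i), both sides reduce to $v_i \cdot k e_{ij}$ and are equal.

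The only substantive case is $|f(i)-f(j)|=1$; by symmetry assume $f(j)=f(i)+1$, so $i$ is upstream of $j$. Then (A4)(ii) gives $e_{ij}=e_{ji}=0$, and the homogeneous flow assumption yields $\ell_{ji}=d+q$ (downstream movement) and $\ell_{ij}=d$ (upstream movement). From the definition $v_j=\left(\tfrac{d}{d+q}\right)^{f(j)}=\tfrac{d}{d+q}\,v_i$, so
\[
v_j\ell_{ji}=\tfrac{d}{d+q}\,v_i\,(d+q)=d\,v_i=v_i\ell_{ij},
\]
which is exactly the required identity.

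I do not expect any serious obstacle: the key observation is that the weights $v_i$ are tuned precisely to detail-balance the upstream/downstream flux ratio $d/(d+q)$, and (A4) was set up so that the modification $kE$ only acts between same-level nodes, where $v_i=v_j$ trivially handles symmetry. The only thing to be careful about is invoking the correct parts of (A4) and Definition \ref{def:leveled_graph} in each case, which I have laid out above.
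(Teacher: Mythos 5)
Your proof is correct and follows essentially the same route as the paper's: both arguments reduce to the two observations that the weights $v_i$ detail-balance the $d$ versus $d+q$ flux ratio across adjacent levels (so $v_i\ell_{ij}=v_j\ell_{ji}$) and that (A4) confines $E$ to same-level pairs where $v_i=v_j$ (so $v_ie_{ij}=v_je_{ji}$); you merely organize this as a case split on $|f(i)-f(j)|$ rather than treating $L$ and $E$ separately. The only nitpick is that in your $|f(i)-f(j)|=1$ case you should note the subcase where adjacent-level nodes happen to be unconnected, in which both sides vanish trivially — a gap the paper's own phrasing shares.
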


\begin{proof}
From Definition \ref{def:leveled_graph}, if node $i$ and $j$ are connected in the unmodified stream network, then
\[
\frac{\ell_{ij}}{\ell_{ji}} = \begin{cases}
    \frac{d}{d+q} \quad \text{if} \quad f(i)-f(j)=1\\
    \frac{d+q}{d} \quad \text{if} \quad f(j)-f(i)=1
\end{cases}.
\]
This implies
\[
\frac{\ell_{ij}}{\ell_{ji}} = \left(\frac{d}{d+q}\right)^{f(j)-f(i)} = \frac{v_j}{v_i} \quad \text{for all } i,j,
\]
where the second equality comes from the way we define $\bm v$. Thus we have $v_i\ell_{ij}=v_j\ell_{ji}$. Since the matrix $E$ satisfies (A4), we also have 
\begin{equation*}\label{eq:E-bidir2}
  v_ie_{ij}=v_je_{ji} = \begin{cases}
    \left(\frac{d}{d+q}\right)^{f(i)}=\left(\frac{d}{d+q}\right)^{f(j)} \quad &\text{if} \quad e_{ij}=e_{ji}=1\\
    0 \quad &\text{otherwise}
\end{cases}.  
\end{equation*}
Thus we must have $v_i(\ell_{ij}+ke_{ij})=v_j(\ell_{ji}+ke_{ji})$ for all $i, j$.
\end{proof}

\begin{theorem}\label{thm:rho'-npatch}
 Let $(G,f,L)$ be a homogeneous flow stream network. Consider the edge modification $L(k)=L+kE$ where the matrix $E$ satisfies assumption (A4). Then the modified metapopulation growth rate is non-increasing, i.e. 
 \[
\rho'(k) \leq 0.
 \]
\end{theorem}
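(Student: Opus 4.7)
The plan is to directly combine Lemma \ref{lemma:rhok'} and Lemma \ref{lem:v}, which together already reduce the problem to showing that a single explicit scalar quantity is non-positive. By Lemma \ref{lem:v}, the vector $\bm v$ with $v_i=\bigl(\tfrac{d}{d+q}\bigr)^{f(i)}$ satisfies the hypothesis of Lemma \ref{lemma:rhok'} for the modified connection matrix $L(k)=L+kE$. Since $\bm v\gg 0$ and the Perron eigenvector $\bm x\gg 0$, the denominator $(\bm x\circ\bm x)\cdot\bm v$ is strictly positive, so the sign of $\rho'(k)$ is the sign of the numerator $(E\bm x\circ\bm x)\cdot\bm v$. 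The first step is therefore to expand this numerator as
\[
(E\bm x\circ\bm x)\cdot\bm v \;=\; \sum_{i,j} v_i\, e_{ij}\, x_i\, x_j.
\]

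Next I would exploit the structure of $E$ coming from assumption (A4). Let $S$ denote the set of unordered pairs $\{i,j\}$ with $i\neq j$ for which $e_{ij}=e_{ji}=1$; by (A4)(ii), every such pair satisfies $f(i)=f(j)$, hence $v_i=v_j$ by definition of $\bm v$. The zero-column-sum condition forces $e_{ii}=-|\{j:\{i,j\}\in S\}|$. Splitting the sum into off-diagonal and diagonal contributions and pairing the symmetric terms gives
\[
\sum_{i\neq j} v_i e_{ij} x_i x_j \;=\; \sum_{\{i,j\}\in S} (v_i+v_j)\, x_i x_j \;=\; \sum_{\{i,j\}\in S} 2 v_i\, x_i x_j,
\]
and
\[
\sum_i v_i e_{ii} x_i^2 \;=\; -\sum_{\{i,j\}\in S}\bigl(v_i x_i^2 + v_j x_j^2\bigr).
\]

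Adding the two contributions and again using $v_i=v_j$ on each pair in $S$, the numerator collapses to
\[
(E\bm x\circ\bm x)\cdot\bm v \;=\; \sum_{\{i,j\}\in S} v_i\,\bigl(2 x_i x_j - x_i^2 - x_j^2\bigr) \;=\; -\sum_{\{i,j\}\in S} v_i\,(x_i-x_j)^2 \;\le\; 0.
\]
Combined with the positive denominator from Lemma \ref{lemma:rhok'}, this yields $\rho'(k)\le 0$, as desired.

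I do not anticipate a serious obstacle: the role of assumption (A4) is precisely to guarantee that the paired nodes share a level (so that $v_i=v_j$) and that $E$ behaves like a graph Laplacian on those pairs, which makes the sum-of-squares identity available. The only thing to be careful about is bookkeeping — making sure that summing over ordered pairs $(i,j)$ versus unordered pairs $\{i,j\}$ is handled consistently, and that the diagonal compensation from the zero-column-sum condition is matched up with the correct off-diagonal pair. Once this is written out the conclusion is immediate.
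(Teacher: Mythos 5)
Your proposal is correct and follows essentially the same route as the paper: both apply Lemma \ref{lemma:rhok'} with the level-weighted vector $\bm v$ from Lemma \ref{lem:v} and reduce the numerator to $-\sum_{\{i,j\}\in S} v_i (x_i-x_j)^2 \le 0$ using $v_i=v_j$ for same-level pairs. The only difference is organizational — you expand the full quadratic form and regroup by unordered pairs, whereas the paper first decomposes $E=\sum_{(i,j)\in S}E_{ij}$ and evaluates each elementary term — and your bookkeeping of the diagonal entries via the zero-column-sum condition is handled correctly.
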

\begin{proof}
Let $S=\{(i,j)\in[1,n]\times [1,n] : e_{ij}=e_{ji}=1\}$, i.e. the set of pairs of added edges due to the matrix $E$. We can then decompose the matrix $E$ as
\[
E=\sum_{(i,j)\in S} E_{ij},
\]
where each matrix $E_{ij}$ corresponds to the modification of adding a bi-directional edge between nodes $i$ and $j$. More precisely, each matrix $E_{ij}$ contains two entries of $1$ at location $(i,j)$ and $(j,i)$, two entries of $-1$ at location $(i,i)$ and $(j,j)$, and all other entries are $0$.

Consider the vector $\bm v$ as defined in Lemma \ref{lem:v}. From Lemma \ref{lemma:rhok'}, we have
\[
\rho'(k) =  \frac{(E\bm x\circ \bm x)\cdot \bm v}{(\bm x\circ \bm x)\cdot \bm v} = \frac{\sum_{(i,j)\in S}(E_{ij}\bm x\circ \bm x)\cdot \bm v}{(\bm x\circ \bm x)\cdot \bm v}.
\]
We have
\[
(E_{ij}\bm x\circ \bm x)\cdot \bm v = (-x_i+x_j)x_iv_i+(x_i-x_j)x_jv_j =-\left(\frac{d}{d+q}\right)^{f(i)}(x_i-x_j)^2 \leq 0,
\]
where the second equality is due to the fact that each $(i,j)\in S$ must satisfy $f(i)=f(j)$ (added edges are between pairs on the same level). In addition, since all entries of $\bm v$ are positive, we must have $(\bm x\circ \bm x)\cdot \bm v =\sum_{i=1}^n v_ix_i^2>0$. Thus we can conclude
\[
\rho'(k) =\frac{\sum_{(i,j)\in S}(E_{ij}\bm x\circ \bm x)\cdot \bm v}{(\bm x\circ \bm x)\cdot \bm v}\leq 0.
\]
\end{proof}

\begin{remark}
    Using a similar argument as the one in Theorem \ref{thm:rho'-3patch}, in order for the equality in Theorem \ref{thm:rho'-npatch} to happen, for each $(i,j)\in S$, we must have node $i$ and $j$ are equivalent. More precisely, for each $(i,j)\in S$, if we relabel node $i$ and $j$, the matrix $L+R$ must be unchanged. In particular, this implies if there exists a pair $(i,j)\in S$ where $r_i\neq r_j$, then $\rho'(k)<0$.
\end{remark}

\section{Discussion and future work}\label{sec:discussion}
% \begin{itemize}
%     \item List conjectures
%   \item Biomass: source and sink  conjecture, bi-directional, 1-1-2, 2-1-1, 2-1-2. We conjecture that adding edges between two sinks always decreases the total biomass, while adding edges between two sources can lead to all types of changes
% to the biomass.
% \item Biomass, 1-dir: adding edge from better (higher $r_i$) to worse patch increases biomass.

%     \item growth and 1-dir: 1-3 or 3-1 and order the $r_2>r_3>r_4$ or vice versa, still monotone?
%     If monotone, test 1-2-2, 2-2-1, 2-1-2. If not monotone, stop.
% \end{itemize}

Overall, the findings in this paper highlight the intricate population responses to increasing connectivity in stream networks. For streams of three nodes, adding a bi-directional edge between patches always decreases the metapopulation growth rate. In contrast, its impact on biomass depends on the level of the patch at which it is added. Specifically, increased connectivity at the upstream patches have a positive impact on population biomass. This result complements the findings in \cite{nguyen2023maximizing} where it was observed that larger intrinsic growth rates in the upstream patches produce larger total biomass. Meanwhile, the addition of a bi-directional edge at a downstream patch decreases the total biomass when the diffusion rate $d$ is larger than drift rate $q$. Biologically, this corresponds to the case where the species movement rate on its own is larger than the movement rate created through the stream's current. 

With respect to one-directional edge modifications, we found that the effects on biomass and growth rate (for small $k$) are opposite. Namely, movement from a faster growing patch to a slower growing patch results in a larger total biomass (provided $d>q$ in the case of the tributary stream) but a smaller metapopulation growth rate, while the reverse is true for movement from a slower growing patch to a faster growing patch. However, we observe numerically that, in the former case, this increase in biomass only seems to hold for small $k$, with larger values of $k$ resulting in decreased total biomass irrespective of the relative growth rates (see Figure \ref{fig: biomass 1 dir}).

While the results presented in this paper primarily focus on stream networks of three nodes, we suspect that a number of these results may apply to more general homogeneous flow stream networks, as defined in Section \ref{sec:homflow}. Specifically, we make the following conjectures.
\begin{itemize}
    \item Adding string of one-directional edges to a single level decreases the metapopulation growth rate if the the intrinsic growth rates are decreasing in the direction of flow (scenarios G.1-G.6 in Table \ref{tab:my_label}). A caveat to this is that, if multiple patches are flowing into or out of this level, then this holds when these patches have the same growth rate (scenarios G.3-G.4 in Table \ref{tab:my_label}).
    \item  Adding a bi-directional edge between two sinks ($u_i^*>K)$ decreases the total biomass (scenario B.1-B.2 in Table \ref{tab:my_label}). Meanwhile, adding edges between two sources ($u_i^*<K$) increases the total biomass (scenario B.3-B.4 in Table \ref{tab:my_label}). 
    In particular, we note that nodes in the most upstream level are sources while nodes in the most downstream level are sinks \cite{nguyen2023maximizing}.
    \item Adding a one-directional edge from a faster growing (larger $r_i$) to a slower growing (smaller $r_i$) patch increases the total biomass for $k \gtrapprox 0$ (scenarios B.5-B.6 in Table \ref{tab:my_label}). Conversely, adding a one-directional edge modification from a slower growing to a faster growing patch decreases the total biomass for $k \gtrapprox 0$ (scenarios B.7-B.8 in Table \ref{tab:my_label})
    \item Some of our conjectures require certain condition on $d$ and $q$ (scenarios B.3, B.4, B.6 and B.8). From the proof of Lemma \ref{lemma:tributary_diagonal}, we conjecture that the condition is $(\frac{d}{d+q})^{n-f(i)}>\frac{1}{2}$ where $n$ is the maximum level and $i$ is one of the patches with the added edge. Thus for the scenarios considered this condition yields $(\frac{d}{d+q})^{2}>\frac{1}{2}$ which is equivalent to $d>\frac{q}{\sqrt{2}-1}$.
\end{itemize}
In Table \ref{tab:my_label}, we test these conjectures for a number of configurations consisting of four or five nodes. 
For each configuration, we apply Latin Hypercube sampling for $K = 5$, $r, d,q\in[0,10]$. Allowing $k$ to range from 0 to 10, we find that our conjectures hold for all 1,000 simulations. {We take Scenario G.1 as an example to illustrate the results. The structure `3-1' means that the configuration consists with three nodes in level 0 and one node in level 1. Nodes are always numbered sequentially  from left to right and top to bottom. So `3-1' means that the top three nodes are $\circled {1}$, $\circled {2}$, and $\circled {3}$ from left to right and the bottom node is $\circled {4}$. The new edge $``\circled {1}\rightarrow\circled {2}\rightarrow\circled {3}"$ means that we add a one-directional edge from node $\circled {1}$ to $\circled {2}$ and another one-directional edge from $\circled {2}$ to $\circled {3}$ with the same dispersal rate $k$.   For various sets of randomly assigned parameter values, the condition $r_1>r_2>r_3$ is enforced. Under this condition, we consistently observe that the metapopulation growth rate decreases as 
$k$ increases, which aligns with our conjecture. For all configurations in Table \ref{tab:my_label}, the structure defines a unique homogeneous flow stream network except for two scenarios G.3 and G.4, which are graphed in Figure \ref{fig: network examples}.  (Specifically, in the left figure of Figure \ref{fig: network examples}, this would still be a leveled graph if nodes 2 and 5 and/or nodes 3 and 4 are connected).
}

\begin{table}[!h]
    \centering
    \begin{tabular}{c|c|c|c|c}
    \toprule
        Scenario&Structure & New edge & Condition & Conjecture \\\hline 
       G.1 &$3-1$&$\circled {1}\rightarrow\circled {2}\rightarrow\circled {3}$&$r_1>r_2>r_3$&\multirow{6}{*}{Growth rate decreases}\\\cline{1-4}
       G.2 &$1-3$&$\circled {2}\rightarrow\circled {3}\rightarrow\circled {4}$&$r_2>r_3>r_4$&\\\cline{1-4}
       G.3 &$1-2-2$&$\circled {4}\rightarrow\circled {5}$&$r_4>r_5, r_2=r_3$& \\\cline{1-4}
      G.4 & $2-2-1$&$\circled {1}\rightarrow\circled {2}$&$r_1>r_2,r_3=r_4$& \\\cline{1-4}
  G.5    &  $2-1-2$&$\circled {1}\rightarrow\circled {2}$&$r_1>r_2$& \\\cline{1-4}
  G.6    &  $2-1-2$&$\circled {4}\rightarrow\circled {5}$&$r_4>r_5$& \\\hline
 B.1   &$1-1-2$&$\circled {3}\longleftrightarrow{\circled {4}}$&None&\multirow{2}{*}{Biomass monotone decreasing } \\\cline{1-4}
 
 B.2 &$2-1-2$&$\circled {4}\longleftrightarrow{\circled {5}}$&None&   \\\hline 
 
  B.3  &$2-1-1$&$\circled {1}\longleftrightarrow{\circled {2}}$&$d>\frac{q}{\sqrt{2}-1}$& \multirow{2}{*}{Biomass monotone increasing } \\\cline{1-4}
    B.4  &$2-1-2$&$\circled {1}\longleftrightarrow{\circled {2}}$&$d>\frac{q}{\sqrt{2}-1}$&  \\\hline
      
  B.5   &  $1-1-2$&$\circled {3}\rightarrow{\circled {4}}$&$r_3>r_4$& \multirow{2}{*}{Biomass initially increases} \\\cline{1-4}

       B.6    & $2-1-1$&$\circled {1}\rightarrow{\circled {2}}$&$d>\frac{q}{\sqrt{2}-1},r_1>r_2$&  \\\hline
   B.7   &  $1-1-2$&$\circled {3}\rightarrow{\circled {4}}$&$r_3<r_4$& \multirow{2}{*}{Biomass initially decreases} \\
      \cline{1-4}

          B.8    & $2-1-1$&$\circled {1}\rightarrow{\circled {2}}$&$d>\frac{q}{\sqrt{2}-1},r_1<r_2$&  \\
     \bottomrule

    \end{tabular}
    \caption{Configurations tested using  Latin Hypercube sampling with $K = 5$, $r, d,q\in[0,10]$, and 1,000 samples each. For $0\leq k\leq 10$, we find that all conjectures are supported. We list the structure as the number of patches in each level, given in increasing level order. Patches are ordered from left to right. See Figure \ref{fig: network examples} for examples of these configurations.}
    \label{tab:my_label}
\end{table}

\section{Appendix}

\subsection{Proof of Theorem \ref{thm:rho'_onedir_distributary}}
For the distributary stream network, where the bi-directional edge is added from patch 2 to patch 3 (see Figure \ref{fig-distributary}(1)), we consider two matrices
\[
A= L_{\DD,1}(0) + R = L_\DD+R = \begin{bmatrix}
  -2d-2q +r_1 & d & d \\
  d+q & -d +r_2& 0 \\
  d+q & 0 & -d +r_3\\
    \end{bmatrix},
\]
and
\[
B= \begin{bmatrix}
  -2d-2q +r_1 & d  \\
  2d+2q & -d +r_3 \\
    \end{bmatrix},
\]
where $B$ is obtained from $A$ by adding row $2$ to row $3$, then deleting the second row and column of the resulting matrix. Informally, the matrix $B$ corresponds to the case when $k\to\infty$. 
%Repeating the same arguments in Lemma \ref{lem:SB} and Theorem \ref{thm:rho'_onedir_tributary} for the distributary stream network yields the following result. {\color{red} Could write these completely in the appendix.}

\begin{lemma}\label{lem:SB2}
    We have $s(B)>\max\{-2d-2q+r_1,-d+r_3\}$.
\end{lemma}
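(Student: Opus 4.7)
The plan is to mirror the proof of Lemma \ref{lem:SB} essentially verbatim: decompose $B$ as $P+Q$ where $P$ is essentially nonnegative and lower triangular (so that $s(P)$ is read off from the diagonal), $Q$ is nonnegative and nonzero, and $P+Q$ is irreducible; then invoke Lemma \ref{lem:PQ} to obtain the strict inequality $s(B) > s(P)$.

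Concretely, I would split
\[
B = \begin{bmatrix} -2d-2q+r_1 & 0 \\ 2d+2q & -d+r_3 \end{bmatrix} + \begin{bmatrix} 0 & d \\ 0 & 0 \end{bmatrix} =: P + Q.
\]
The matrix $P$ is lower triangular with nonnegative off-diagonal entry $2d+2q \ge 0$, hence essentially nonnegative, and its eigenvalues are just its diagonal entries, so $s(P) = \max\{-2d-2q+r_1,\, -d+r_3\}$. The matrix $Q$ is nonnegative and, because $d > 0$, it is nonzero. Finally, $P+Q = B$ is irreducible since both off-diagonal entries $d$ and $2d+2q$ are strictly positive. Lemma \ref{lem:PQ} then immediately yields $s(B) > s(P) = \max\{-2d-2q+r_1,\, -d+r_3\}$, which is the desired conclusion.

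I do not anticipate any serious obstacle: the whole argument is a routine adaptation of the decomposition used for Lemma \ref{lem:SB} to this particular $2\times 2$ matrix, and the standing assumption $d,q > 0$ guarantees both the nonzeroness of $Q$ and the irreducibility of $B$ that are needed to apply Lemma \ref{lem:PQ}.
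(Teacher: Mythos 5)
Your proposal is correct and matches the paper's proof exactly: the same decomposition $B=P+Q$ with $P$ lower triangular and $Q$ carrying the entry $d$ in position $(1,2)$, followed by an application of Lemma \ref{lem:PQ}. You simply spell out the hypothesis checks (essential nonnegativity, $Q\neq 0$, irreducibility) that the paper leaves implicit.
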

\begin{proof}
We have
\[
B= \begin{bmatrix}
    -2d-2q +r_1& 0 \\
    2d+2q & -d+r_3 \\
    \end{bmatrix} +\begin{bmatrix}
    0& d \\
    0 & 0 \\
    \end{bmatrix}:=P+Q. 
\]
By Lemma \ref{lem:PQ}, $s(B) > s(P) = \max\{-2d-2q+r_1,-d+r_3\}$.
\end{proof}

We are ready to prove Theorem \ref{thm:rho'_onedir_distributary}. 
\begin{proof}[Proof of Theorem \ref{thm:rho'_onedir_distributary}]
    We consider the characteristic polynomial of the matrix $A$
    \begin{align*}
    p_A (\lambda)&= \begin{vmatrix}
    -2d-2q +r_1-\lambda & d & d \\
    d+q & -d +r_2 -\lambda & 0 \\
    d+q & 0 & -d+r_3-\lambda  \\
    \end{vmatrix} \\
    &= \begin{vmatrix}
    -2d-2q +r_1-\lambda & d & d \\
    d+q & -d+r_2 -\lambda & 0 \\
    2(d+q) & -d+r_2-\lambda & -d+r_3-\lambda  \\
    \end{vmatrix}\\
    &=(-d+r_2-\lambda)p_B(\lambda) - (d+q)\begin{vmatrix}
    d & d  \\
    -d+r_2-\lambda & -d+r_3-\lambda  \\
    \end{vmatrix}\\
    &=(-d+r_2-\lambda)p_B(\lambda) + d(d+q)(r_2-r_3),
    \end{align*}
where $p_B(\lambda)$ is the characteristic polynomial of the matrix $B$. Now we consider the three cases.
\begin{enumerate}[(a)]
    \item If $r_2=r_3$, then $p(A)(\lambda)=(-d+r_2-\lambda)p_B(\lambda)$, thus we have
    \[
    s(A)=\max\{-d+r_2,s(B)\}.
    \] 
    We observe that $s(B)>\max\{-2d-2q+r_1,-d+r_3\}\geq -d+r_3=-d+r_2$, where the first inequality is due to Lemma \ref{lem:SB2}. Thus we must have $s(A)=s(B)$. Applying Proposition \ref{prop:hadeler} gives $\rho'_{\DD,1}(k) = 0$.
    
    \item Suppose that $r_2>r_3$. We plug $\lambda = s(B)$ into $p_A(\lambda)$ and obtain
    \[
    p_A(s(B)) = 0 + d(d+q)(r_2-r_3) >0.
    \]
    Since $\lim_{\lambda\to\infty}p_A(\lambda)=-\infty<0$, by the Intermediate Value Theorem, there exists a root of $p_A(\lambda)$ in $(s(B),\infty)$. Thus $s(A) > s(B)$. Applying Proposition \ref{prop:hadeler} gives $\rho'_{\DD,1}(k) <0$.
    \item Suppose that $r_2<r_3$. We plug $\lambda = s(B)$ into $p_A(\lambda)$ and obtain
    \[
    p_A(s(B)) = 0 + d(d+q)(r_2-r_3) <0.
    \]
    Note that when $\lambda > s(B)$, then by Lemma \ref{lem:SB2}, we have $\lambda > -d+r_3>-d+r_2$. In addition, since $p_B$ is a quadratic function which opens upward, we have $p_B(\lambda)>0$ when $\lambda > s(B)$. Combining both observations yields
    \[
    p_A(\lambda) = (-d+r_2-\lambda)p_B(\lambda) + d(d+q)(r_2-r_3) <0, \quad \text{for } \lambda>s(B).
    \]
    As a result, there is no root of $p_A(\lambda)$ in $[s(B),\infty)$ and consequently, we must have $s(A)<s(B)$. Applying Proposition \ref{prop:hadeler} gives $\rho'_{\DD,1}(k) > 0$.
\end{enumerate}
\end{proof}

\subsection{Proof of Theorem \ref{thm:biomass_2dir_distributary}}
Let $(u_1, u_2, u_3)$ be the unique positive solution of 
\begin{subequations}
\begin{align}
& \displaystyle -2(d+q)u_1+du_2+du_3+r_1u_1\left(1-\frac{u_1}{K}\right)  =0,\medskip \label{uk:1}\\
& \displaystyle (d+q)u_1-(d+k)u_2+ku_3+r_2u_2\left(1-\frac{u_2}{K}\right)  =0,\medskip \label{uk:2}\\
& \displaystyle (d+q)u_1+ku_2-(d+k)u_3+r_3u_3\left(1-\frac{u_3}{K}\right) =0. \label{uk:3}
\end{align}
\label{uk}
\end{subequations}

Differentiating \eqref{uk} with respect to $k$, we obtain 
\begin{subequations}
\begin{align}
& \displaystyle -2(d+q)u_1'+du_2'+du_3'+r_1u_1'\left(1-\frac{2u_1}{K}\right)=0,\medskip \label{ukp:1}\\
& \displaystyle (d+q)u_1'-(d+k)u_2'+ku_3'-u_2+u_3+r_2u_2'\left(1-\frac{2u_2}{K}\right)=0,\medskip \label{ukp:2}\\
& \displaystyle (d+q)u_1'+ku_2'-(d+k)u_3'+u_2-u_3+r_3u_3'\left(1-\frac{2u_3}{K}\right)=0.
\label{ukp:3}
\end{align}
\label{ukp}
\end{subequations}

\begin{lemma}\label{bounds}
If $r_1>0$, and $r_2>0$ or $r_3>0$, then $u_1\in(\frac{d}{d+q}K,K)$ and $u_2,u_3\in(K,\frac{d+q}{d}K)$.
\end{lemma}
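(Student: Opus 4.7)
The plan is to follow the strategy used in the proof of Lemma \ref{lem:tributary_bounds} for the tributary network: construct an explicit upper solution and an explicit lower solution whose coordinates realize the claimed bounds, then invoke the strong monotonicity of the dynamical system generated by \eqref{eq-system} with movement matrix $L_{\DD,2}(k)$ to conclude that the unique positive equilibrium $\bm u^*$ is strictly sandwiched between them componentwise.

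For the upper bound I will take $\bar{\bm u} = \left(K,\tfrac{d+q}{d}K,\tfrac{d+q}{d}K\right)$. Plugging $\bar{\bm u}$ into the right-hand sides of \eqref{uk:1}--\eqref{uk:3}, the logistic term in \eqref{uk:1} vanishes and the flow terms $-2(d+q)K + 2 \cdot d\cdot\tfrac{d+q}{d}K$ cancel exactly, yielding $0$. In \eqref{uk:2} the flow terms $(d+q)K - (d+k)\tfrac{d+q}{d}K + k\tfrac{d+q}{d}K$ collapse to $0$, and what remains is $r_2\tfrac{d+q}{d}K\bigl(1-\tfrac{d+q}{d}\bigr) = -r_2 \tfrac{q(d+q)}{d^2}K$, which is $\le 0$ and strictly negative when $r_2>0$; equation \eqref{uk:3} is analogous with $r_3$. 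Thus $\bar{\bm u}$ is an upper solution, and the inequality is strict in at least one coordinate under the hypothesis ``$r_2>0$ or $r_3>0$''.

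For the lower bound I will take $\underline{\bm u} = \left(\tfrac{d}{d+q}K, K, K\right)$. Plugging into \eqref{uk:1}, the flow terms $-2(d+q)\cdot\tfrac{d}{d+q}K + 2dK$ cancel and the logistic term contributes $r_1\tfrac{d}{d+q}K\cdot\tfrac{q}{d+q} = r_1\tfrac{dq}{(d+q)^2}K$, which is $>0$ under $r_1>0$. In equations \eqref{uk:2} and \eqref{uk:3}, both the logistic and the flow contributions vanish identically, giving $0$. Hence $\underline{\bm u}$ is a lower solution with strict inequality in the first coordinate whenever $r_1>0$.

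Finally, since $L_{\DD,2}(k)+R$ is essentially nonnegative and irreducible, the dynamical system generated by \eqref{eq-system} is strongly monotone by \cite[Theorem 4.1.1]{smith2008monotone}. The trajectory starting from $\bar{\bm u}$ is nonincreasing in $t$, while the trajectory starting from $\underline{\bm u}$ is nondecreasing; both converge to the unique positive equilibrium $\bm u^*$. Because each of these comparisons is strict in at least one coordinate at $t=0$, strong monotonicity upgrades the limiting inequalities to strict inequalities in every coordinate, giving $\underline{\bm u}\ll \bm u^*\ll \bar{\bm u}$ and hence $u_1\in\bigl(\tfrac{d}{d+q}K,K\bigr)$ and $u_2,u_3\in\bigl(K,\tfrac{d+q}{d}K\bigr)$. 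The only real work is the algebraic verification of the sub/super-solution inequalities above; the rest is a direct application of the monotone systems machinery already invoked in Lemma \ref{lem:tributary_bounds}, so no new conceptual obstacle is expected.
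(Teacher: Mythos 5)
Your proposal is correct and follows exactly the route the paper takes: the paper's own (one-line) proof of Lemma \ref{bounds} cites the same strict subsolution $\left(\tfrac{d}{d+q}K,K,K\right)$ and strict supersolution $\left(K,\tfrac{d+q}{d}K,\tfrac{d+q}{d}K\right)$ and defers to the monotone-systems argument of Lemma \ref{lem:tributary_bounds}. You have simply written out the algebraic verifications and the strictness upgrade that the paper omits, and they check out.
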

\begin{proof}
    The proof is similar to Lemma \ref{lem:tributary_bounds} by noticing that $(dK/(d+q)K, K, K)$ is a strict subsolution and $(K, (d+q)K/d, (d+q)K/d)$ is a strict supersolution of \eqref{uk}. So we omit the details here. 
\end{proof}

\begin{lemma}\label{lemma_r2r3}
   Suppose that $r_1>0$. If $r_2>r_3$, then $u_3 > u_2$, and vice versa. 
\end{lemma}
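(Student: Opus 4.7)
The plan is to mirror the continuity argument used for Lemma \ref{lemma:r1r2} in the tributary case. First I would subtract \eqref{uk:3} from \eqref{uk:2}; the flux contributions from $u_1$ cancel and the cross-terms $ku_3 - ku_2$ combine with the self-dispersal terms to yield the key identity
\[
(d+2k)(u_2 - u_3) = r_2 u_2\Bigl(1 - \frac{u_2}{K}\Bigr) - r_3 u_3\Bigl(1 - \frac{u_3}{K}\Bigr).
\]
This single scalar equation will be the workhorse of the argument.

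Next I would characterize the zero set: $u_2 = u_3$ if and only if $r_2 = r_3$. For the forward direction, setting $u_2 = u_3$ in the identity forces $(r_2 - r_3)\,u_2(1 - u_2/K) = 0$, and Lemma \ref{bounds} (applicable because $r_1 > 0$ and at least one of $r_2, r_3$ is positive whenever $r_2 > r_3$) gives $u_2 > K$, so $r_2 = r_3$. For the reverse direction, when $r_2 = r_3$ the system \eqref{uk} is invariant under the involution swapping $(u_2, r_2)$ with $(u_3, r_3)$, and uniqueness of the positive equilibrium (cited earlier in the paper) forces $u_2 = u_3$. Since the positive equilibrium depends continuously on the parameters, $u_2 - u_3$ is then a continuous function on the parameter region $\{r_2, r_3 \ge 0,\ r_2 + r_3 > 0\}$ vanishing exactly on the diagonal, so its sign is constant on each of the open regions $R_+ = \{r_2 > r_3\}$ and $R_- = \{r_2 < r_3\}$.

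Finally I would pin down the common sign on $R_+$ by evaluating at a convenient test point $r_3 = 0$, $r_2 > 0$: the identity collapses to $(d+2k)(u_2 - u_3) = r_2 u_2(1 - u_2/K)$, and by Lemma \ref{bounds} we have $u_2 > K$, so the right-hand side is strictly negative; hence $u_3 > u_2$ at the test point, and by the constancy of sign on $R_+$ the same inequality holds throughout $R_+$. The symmetric inequality on $R_-$ follows either by the same argument with indices $2$ and $3$ interchanged, or directly from the symmetry of system \eqref{uk} under swapping nodes $2$ and $3$. The step I expect to be most delicate is the reverse direction of the zero-set characterization: one has to be confident that the symmetry $(u_2,r_2) \leftrightarrow (u_3,r_3)$ is genuinely broken only by $r_2 \ne r_3$, which requires the uniqueness result for positive equilibria of \eqref{eq-system} rather than just an existence argument.
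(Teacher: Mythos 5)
Your proof is correct, but it follows a genuinely different route from the paper's. You subtract \eqref{uk:3} from \eqref{uk:2} to get the single identity $(d+2k)(u_2-u_3)=r_2u_2(1-u_2/K)-r_3u_3(1-u_3/K)$, characterize the zero set of $u_2-u_3$ as the diagonal $r_2=r_3$ (using Lemma \ref{bounds} to rule out $u_2(1-u_2/K)=0$ and uniqueness of the equilibrium for the converse), and then fix the sign on each connected component $\{r_2>r_3\}$, $\{r_2<r_3\}$ by evaluating at a boundary test point where one of the rates vanishes. This is exactly the technique the paper deploys for the tributary analogue (Lemma \ref{lemma:r1r2}), transplanted to the distributary setting, and it handles all $k\ge 0$ in one pass. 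The paper instead proves the lemma by a root-comparison argument: for $k=0$ it observes that $u_2$ and $u_3$ are the unique roots in $(K,\infty)$ of $f(r_i,x):=(d+q)u_1-dx+r_ix(1-x/K)=0$ with the \emph{same} value of $u_1$, shows by implicit differentiation that this root is decreasing in $r$, and then extends to $k>0$ by a continuity-and-contradiction step exploiting that the $k$-terms cancel when $u_2=u_3$. Your approach is more uniform with the rest of the paper and avoids the two-stage ($k=0$, then $k>0$) structure; the paper's approach yields the slightly stronger quantitative fact that the root is monotone in the growth rate. Both arguments rest on the same unproved-but-standard continuity of the positive equilibrium in the parameters, so you are on equal footing with the paper there; your worry about the converse of the zero-set characterization is well placed but adequately resolved by the uniqueness result the paper cites.
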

\begin{proof}
    First, suppose that $k=0$. Let $f(r, x):=(d+q)u_1-dx+rx(1-x/K)$. By Lemma \ref{bounds}, we have $u_1>dK/(d+q)$. 
    For any $r>0$, $f(r, K)=(d+q)u_1-dK>0$. Since $f(r,x)$ is a parabola opening downward, $f(r, x)=0$ has a unique positive root in $(K, \infty)$.  
    In particular, $x=u_i>K$ is the unique positive solution of $f(r_i, x)=0$ for $i=2, 3$. Differentiating $f(r, x)=0$ with respect to $r$, we obtain 
    $$
  \frac{dx}{dr}\left(-d+r\left(1-\frac{2x}{K}\right)\right)=-x\left(1-\frac{x}{K}\right).
    $$
    If $x>K$, then $dx/dr<0$. Therefore, $r_2>r_3$ implies $u_2<u_3$. 
    
Next, we consider the case $k>0$. By continuity, the fact that $r_2>r_3$ implies $u_2<u_3$ is still true when $k$ is small. Suppose to the contrary that this statement is not true for any $k>0$. Then, there exists some $k>0$ such that $r_2>r_3$ and $u_2=u_3$. Substituting $u_2=u_3$ into  \eqref{uk}, we can cancel the terms involving $k$. Then using a similar argument as the case $k=0$, $r_2>r_3$ will imply $u_2<u_3$, which is a contradiction. 
\end{proof}

\begin{lemma}\label{lemma_sign}
    $u_1'$ and $u_2'+u_3'$ have the same sign for all $k\ge 0$.
\end{lemma}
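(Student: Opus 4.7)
\medskip
\noindent\textbf{Proof proposal.} The plan is to extract the sign relation between $u_1'$ and $u_2'+u_3'$ directly from equation \eqref{ukp:1}. This is the natural target because \eqref{uk:1} is the only equation in the system \eqref{uk} that is independent of $k$, so its derivative \eqref{ukp:1} contains no isolated $u_2$ or $u_3$ term and groups the two derivatives $u_2',u_3'$ symmetrically through the combination $u_2'+u_3'$.

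First I would isolate $d(u_2'+u_3')$ from \eqref{ukp:1} to obtain
\[
d(u_2'+u_3')=u_1'\left[2(d+q)-r_1\left(1-\frac{2u_1}{K}\right)\right].
\]
The lemma then reduces to showing that the bracketed coefficient is strictly positive for all admissible $k\ge 0$.

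The key step is to establish this positivity by feeding the equilibrium equation \eqref{uk:1} back in. If $r_1>0$, dividing \eqref{uk:1} through by $u_1>0$ gives $2(d+q)=d(u_2+u_3)/u_1+r_1(1-u_1/K)$. Substituting this expression for $2(d+q)$ into the bracket cancels the constant and the $r_1$-contribution collapses to
\[
2(d+q)-r_1\left(1-\frac{2u_1}{K}\right)=\frac{d(u_2+u_3)}{u_1}+\frac{r_1 u_1}{K}>0,
\]
using positivity of the equilibrium. The degenerate case $r_1=0$ is even easier: the bracket reduces to $2(d+q)>0$ directly, and $d(u_2'+u_3')=2(d+q)u_1'$ immediately has the same sign as $u_1'$.

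I do not anticipate a serious obstacle. The only point requiring verification is that $u_1>0$ so that dividing by $u_1$ is legitimate, which is immediate since $\mathbf{u}^*$ is the positive equilibrium. The technique of cancelling a constant by using the \emph{same} equilibrium equation in its undifferentiated form mirrors the maneuver used earlier (e.g.\ in producing \eqref{eq:t_sum}), and so fits naturally with the style of the paper; no appeal to Lemma \ref{bounds} or to the more delicate sign arguments from Lemmas \ref{lemma:tributary_diagonal}--\ref{lem:tributary_u3'} is required here.
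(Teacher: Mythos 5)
Your proposal is correct and is essentially the paper's own argument: the paper multiplies \eqref{uk:1} by $u_1'$ and \eqref{ukp:1} by $u_1$ and subtracts to get $\bigl(d(u_2+u_3)+r_1 u_1^2/K\bigr)u_1'=du_1(u_2'+u_3')$, which is exactly your identity after multiplying through by $u_1$. The positivity of the coefficient follows from positivity of the equilibrium in both versions, so the two proofs coincide up to presentation.
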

\begin{proof}
    Multiplying equation \eqref{uk:1} by $u_1'$, equation \eqref{ukp:1} by $u_1$ and taking the difference, we obtain 
    $$
\left(d(u_2+u_3)+r_1\frac{u_1^2}{K}\right) u_1'=du_1(u_2'+u_3').
    $$
    Therefore, $u_1'$ and $u_2'+u_3'$ have the same sign.
\end{proof}

\begin{lemma}\label{lemma_u1p0}
    If $r_2=r_3$, then $u_2=u_3$ and $u_1'=u_2'=u_3'=0$ for all $k\ge 0$.
\end{lemma}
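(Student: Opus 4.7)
The plan is to exploit the symmetry between nodes $2$ and $3$ in the distributary network when $r_2 = r_3$. First I would argue that swapping indices $2$ and $3$ leaves the system \eqref{uk} invariant: both nodes receive input $(d+q)u_1$ from node $1$, both send $du_i$ back to node $1$, they exchange with each other symmetrically at rate $k$, and under the assumption $r_2 = r_3$ their local growth terms are identical. By the uniqueness of the positive equilibrium of \eqref{eq-system} (cited in the paper), the equilibrium must be fixed by this swap, which forces $u_2 = u_3$ for every $k \geq 0$.

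Once $u_2 = u_3$ is established, the key observation is that the coefficient of $k$ in \eqref{uk:2} is $-u_2 + u_3 = 0$ and likewise the coefficient of $k$ in \eqref{uk:3} is $u_2 - u_3 = 0$. Thus \eqref{uk} reduces to the $k$-independent system
\begin{equation*}
-2(d+q)u_1 + 2du_2 + r_1 u_1\!\left(1 - \frac{u_1}{K}\right) = 0, \qquad (d+q)u_1 - du_2 + r_2 u_2\!\left(1 - \frac{u_2}{K}\right) = 0,
\end{equation*}
which uniquely determines $u_1$ and $u_2 = u_3$ independently of $k$. Differentiating with respect to $k$ then gives $u_1' = u_2' = u_3' = 0$ immediately.

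There is essentially no obstacle here, since the argument is a straightforward symmetry/uniqueness observation. The only point requiring a bit of care is justifying that the swap $2 \leftrightarrow 3$ really is a symmetry of the modified distributary network including the added bi-directional edge, and invoking the uniqueness result for the positive equilibrium to conclude $u_2 = u_3$ rather than merely that $(u_1, u_3, u_2)$ is also an equilibrium. Alternatively, one could prove $u_2' = u_3'$ directly by subtracting \eqref{ukp:3} from \eqref{ukp:2} after imposing $u_2 = u_3$ and $r_2 = r_3$, which yields $[(d+2k) - r_2(1 - 2u_2/K)](u_2' - u_3') = 0$, and then combine this with Lemma \ref{lemma_sign} to conclude all three derivatives vanish; however, the direct $k$-independence argument is cleaner.
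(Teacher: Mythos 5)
Your proposal is correct and follows essentially the same route as the paper: a relabeling/symmetry argument plus uniqueness of the positive equilibrium gives $u_2=u_3$, and then the observation that the $k$-dependent terms cancel shows the equilibrium is independent of $k$, so all derivatives vanish. The only difference is presentational — you write out the reduced $k$-free system explicitly, whereas the paper simply notes that the $k=0$ solution still solves \eqref{uk} for every $k>0$.
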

\begin{proof}
   Suppose $r_2=r_3$. By a relabeling argument and the uniqueness of positive solution of \eqref{uk}, $u_2=u_3$ for all $k\ge 0$. Let $(u_1, u_2, u_3)$ be the positive solution of \eqref{uk} when $k=0$. Since $u_2=u_3$, it is easy to see that $(u_1, u_2, u_3)$ is also the positive solution of \eqref{uk} for any $k>0$. Hence, $u_1'=u_2'=u_3'=0$ for all $k\ge 0$.
  % We will denote $u_i(k)$, $i=1, 2, 3$, to indicate the dependence on $k$.  If $(u_1, u_2, u_3)$ is a positive solution, then $(u_1, u_3, u_2)$ is also a  solution for all $k\ge 0$. By the uniqueness of the positive solution, we must have $u_2=u_3$ for all $k\ge0$. Therefore, $(u_1(0), u_2(0), u_3(0))$ is also a solution for any $k>0$. Therefore,  $(u_1, u_2, u_3)$ is independent of $k$. 
\end{proof}

\begin{lemma}\label{lemma_u1p}
   Suppose that $r_1>0$. If $u_1'=0$ for some $k\ge 0$, then $r_2=r_3$.
\end{lemma}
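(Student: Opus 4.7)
The plan is to mimic the structure of Lemma \ref{lemma:tributary_diagonal}, but with the roles of the single downstream/upstream patch and the paired patches swapped. Assume $r_1>0$, and suppose $u_1'=0$ at some fixed $k\ge 0$. We may also assume $r_2>0$ or $r_3>0$, since otherwise $r_2=r_3=0$ and there is nothing to prove; this then lets us invoke Lemma \ref{bounds} and conclude $u_2,u_3\in(K,(d+q)K/d)$. Substituting $u_1'=0$ into \eqref{ukp:1} immediately gives $u_2'+u_3'=0$, i.e.\ $u_2'=-u_3'$, which is also consistent with Lemma \ref{lemma_sign}.

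First I would dispose of the degenerate case $u_3'=0$: then $u_2'=0$ too, and equation \eqref{ukp:2} reduces to $u_3=u_2$; subtracting \eqref{uk:2} from \eqref{uk:3} then cancels all $u_1$ and $k$ terms and leaves $(r_2-r_3)u_2(1-u_2/K)=0$, and since $u_2>K$ by Lemma \ref{bounds} the factor $u_2(1-u_2/K)$ is nonzero, forcing $r_2=r_3$.

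The main case is $u_3'\ne 0$. Plugging $u_1'=0$ and $u_2'=-u_3'$ into \eqref{ukp:2} and into \eqref{ukp:3} gives, after routine algebra,
\[
u_3'\bigl[(d+2k)-r_2\bigl(1-\tfrac{2u_2}{K}\bigr)\bigr]=u_2-u_3
\quad\text{and}\quad
u_3'\bigl[(d+2k)-r_3\bigl(1-\tfrac{2u_3}{K}\bigr)\bigr]=u_2-u_3.
\]
Cancelling the common nonzero factor $u_3'$ yields the key identity
\begin{equation}\label{eq:keyident}
r_2(2u_2-K)=r_3(2u_3-K).
\end{equation}
On the other hand, subtracting \eqref{uk:2} from \eqref{uk:3} eliminates the $u_1$ and mixed $k$ terms and gives
\[
r_3u_3(u_3-K)-r_2u_2(u_2-K)=K(d+2k)(u_2-u_3).
\]
Assuming for contradiction that $u_2\neq u_3$ (which by Lemma \ref{lemma_r2r3} is what corresponds to $r_2\neq r_3$), I will treat the case $u_2>u_3$; the other case is symmetric.

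The final step, and the main obstacle, is to combine these two relations into a contradiction using the auxiliary function
\[
g(x)=\frac{x(x-K)}{2x-K}, \qquad x>K/2.
\]
From $u_2>u_3>K/2$ and the subtracted equation, the right-hand side is positive, so $r_3u_3(u_3-K)>r_2u_2(u_2-K)$, hence
\[
\frac{r_2}{r_3}<\frac{u_3(u_3-K)}{u_2(u_2-K)},
\]
while \eqref{eq:keyident} gives $r_2/r_3=(2u_3-K)/(2u_2-K)$. Together these yield $g(u_2)<g(u_3)$. A direct computation shows $g'(x)=\bigl[2(x-K/2)^2+K^2/2\bigr]/(2x-K)^2>0$ for all $x>K/2$, so $g$ is strictly increasing on $(K,(d+q)K/d)$, which contradicts $u_2>u_3$. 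The symmetric case $u_2<u_3$ is identical. Hence $u_2=u_3$, and as in the degenerate case this forces $r_2=r_3$. The hardest part is identifying the correct function $g$ whose monotonicity closes the loop; once it is in hand, the rest is bookkeeping.
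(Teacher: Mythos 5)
Your proof is correct and follows essentially the same route as the paper's: the cancellation of $u_3'$ to get $r_2(2u_2-K)=r_3(2u_3-K)$, the difference of the equilibrium equations \eqref{uk:2} and \eqref{uk:3}, and the monotonicity of $g(x)=x(x-K)/(2x-K)$ (the negative of the paper's decreasing function $f$) are exactly the paper's ingredients. The only organizational difference is that you run the contradiction off $u_2\neq u_3$ directly and rule out the borderline value $u_2=K$ via Lemma \ref{bounds}, whereas the paper assumes $r_2\neq r_3$ and invokes Lemma \ref{lemma_r2r3} to order $u_2,u_3$; both are valid.
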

\begin{proof}
    By equation \eqref{ukp:1}, $u_2'+u_3'=0$. If $u_2'=0$, then $u_3'=0$ and $u_2=u_3$ by equation \eqref{ukp:2}. Then taking the difference of equations \eqref{ukp:2} and \eqref{ukp:3}, we have 
    $$
r_2u_2\left(1-\frac{u_2}{K}\right)= r_3u_3\left(1-\frac{u_3}{K}\right).
    $$
If $u_2=u_3=K$, then by equation \eqref{uk:2}, $u_1=dK/(d+q)$. By equation \eqref{uk:1}, $r_1=0$, which is a contradiction. So, $u_2=u_3\neq K$, which implies $r_2=r_3=0$. 

If $u_2'\neq 0$, adding equations \eqref{ukp:2} and \eqref{ukp:3}, we have 
$$
r_2u_2'\left(1-\frac{2u_2}{K}\right)+r_3u_3'\left(1-\frac{2u_3}{K}\right)=0.
$$
If follows that 
\begin{equation}\label{r2r3e}
   r_2\left(1-\frac{2u_2}{K}\right)=r_3\left(1-\frac{2u_3}{K}\right). 
\end{equation}
Suppose to the contrary that $r_2\neq r_3$. Without loss of generality, suppose that $r_2>r_3$. By Lemma \ref{lemma_r2r3}, we have $u_2<u_3$. Taking the difference of equations \eqref{uk:2} and \eqref{uk:3}, we obtain 
$$
r_2u_2\left(1-\frac{u_2}{K} \right)-r_3u_3\left(1-\frac{u_3}{K} \right)=(d+2k)(u_2-u_3)<0. 
$$
Dividing the above inequality by \eqref{r2r3e} and by Lemma \ref{bounds}, 
$$
\frac{u_2\left(1-\frac{u_2}{K}\right)}{\left(\frac{2u_2}{K}-1\right)}<\frac{u_3\left(1-\frac{u_3}{K}\right)}{\left(\frac{2u_3}{K}-1\right)}.
$$
It is straightforward to check that the function $f(x)=x(1-x/K)/(2x/K-1)$ is decreasing on $[K, (d+q)K/d]$. Then $u_2<u_3$ implies that $f(u_2)\ge f(u_3)$, which is a contradiction. 
\end{proof}

\begin{lemma}\label{lemma_boundary}
  Suppose that $r_1>0$.   If $r_2=0$ and $r_3>0$ or $r_2>0$ and $r_3=0$, then $u_1'<0$  for all $k\ge0$. 
\end{lemma}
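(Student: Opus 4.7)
The plan is to first reduce the general case to $r_2 = 0, r_3 > 0$ using the symmetry of the distributary stream network under the relabeling of nodes $2$ and $3$ (which preserves both $L_\DD$ and the bi-directional modification matrix). By Lemma \ref{lemma_u1p}, since $r_2 = 0 \neq r_3$ and $r_1 > 0$, we have $u_1'(k) \neq 0$ for every $k \ge 0$. As $u_1'(k)$ depends continuously on $k$, it must have constant sign on $[0, \infty)$, so it is enough to determine the sign of $u_1'(0)$, which I claim is negative.

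At $k = 0$, equation \eqref{uk:2} collapses to $(d+q)u_1 = du_2$, so $u_2 = (d+q)u_1/d$ explicitly. Moreover, Lemma \ref{lemma_r2r3} applied with $r_2 < r_3$ gives $\delta := u_2 - u_3 > 0$. Evaluating the derivative equations \eqref{ukp:2} and \eqref{ukp:3} at $k = 0$ (with $r_2 = 0$) yields
\[
d u_2' = (d+q)u_1' - \delta, \qquad A u_3' = (d+q)u_1' + \delta,
\]
where $A := d + r_3(2u_3/K - 1) > 0$ by Lemma \ref{bounds}. Adding these expressions and substituting into \eqref{ukp:1} produces a single linear equation in $u_1'$, whose solution has the form
\[
u_1'(0) = \frac{-\delta r_3 B}{D}, \qquad B := \frac{2u_3}{K} - 1 > 0, \qquad D := (d+q)r_3 B - A r_1\!\left(1 - \frac{2u_1}{K}\right).
\]
Since the numerator is strictly negative, the proof reduces to verifying $D > 0$.

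The main obstacle is that, in contrast to the tributary case (Theorem \ref{thm:biomass_2dir_tributary}), we do not assume $d > q$, so we cannot conclude $u_1 > K/2$; in general $1 - 2u_1/K$ can be positive, in which case $-A r_1(1 - 2u_1/K)$ is negative and positivity of $D$ is not apparent termwise. To resolve this, I would exploit the equilibrium structure at $k = 0$ to rewrite $u_1 D$. Substituting $u_2 = (d+q)u_1/d$ into \eqref{uk:1} gives $r_1 u_1(1 - u_1/K) = (d+q)u_1 - du_3$, whence
\[
r_1 u_1\!\left(1 - \frac{2u_1}{K}\right) = (d+q)u_1 - du_3 - \frac{r_1 u_1^2}{K}.
\]
Combining this with the identity $Au_3 = (d+q)u_1 + r_3 u_3^2/K$, which follows by using $A - d = r_3 B$ together with \eqref{uk:3} rewritten as $du_3 - (d+q)u_1 = r_3 u_3(u_3/K - 1)$, causes the indefinite-sign contributions to $u_1 D$ to cancel, leaving the manifestly positive expression
\[
u_1 D = \frac{d r_3 u_3^2}{K} + \frac{A r_1 u_1^2}{K} > 0.
\]
Hence $D > 0$, so $u_1'(0) < 0$, and by the constant-sign argument of the first paragraph, $u_1'(k) < 0$ for all $k \ge 0$. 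The case $r_2 > 0, r_3 = 0$ follows by the same analysis after the relabeling symmetry.
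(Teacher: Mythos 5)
Your proof is correct, but it takes a genuinely different route from the paper's. The paper argues by contradiction directly for arbitrary $k\ge 0$: assuming $u_1'>0$, it combines the weighted identity $r_1\frac{d+q}{d}\frac{u_1^2u_1'}{K}+(u_2-u_3)^2+r_3\frac{u_3^2u_3'}{K}=0$ (obtained from \eqref{ukup}) with Lemmas \ref{lemma_r2r3}, \ref{bounds}, and equation \eqref{ukp:3} to force $u_2'+u_3'<0$, contradicting Lemma \ref{lemma_sign}. You instead use Lemma \ref{lemma_u1p} together with continuity of $u_1'$ in $k$ to reduce the whole statement to the single point $k=0$, where the system \eqref{ukp} becomes explicitly solvable (since $r_2=0$ decouples \eqref{ukp:2}), yielding the closed form $u_1'(0)=-\delta r_3B/D$ and the manifestly positive expression $u_1D=dr_3u_3^2/K+Ar_1u_1^2/K$. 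I verified this identity independently and it is correct; your approach has the advantage of producing a quantitative formula for $u_1'(0)$ and of cleanly handling the case $1-2u_1/K>0$ (which can occur since $d>q$ is not assumed here), while the paper's argument avoids any appeal to smooth dependence of the equilibrium on $k$ and treats all $k$ at once. One small slip: your rewriting of \eqref{uk:3} should read $du_3-(d+q)u_1=r_3u_3\left(1-\frac{u_3}{K}\right)$, not $r_3u_3\left(\frac{u_3}{K}-1\right)$; with the correct sign one gets $Au_3=du_3+r_3Bu_3=(d+q)u_1+r_3u_3\left(1-\frac{u_3}{K}+\frac{2u_3}{K}-1\right)=(d+q)u_1+\frac{r_3u_3^2}{K}$, which is exactly the identity you use, so the error is confined to the justification and does not affect the conclusion.
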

\begin{proof}
    Suppose that $r_2=0$ and $r_3>0$. By Lemma \ref{lemma_u1p}, we have $u_1'\neq 0$. Suppose to the contrary that $u_1'>0$ for some $k\ge 0$. Multiplying \eqref{uk} by $u_i$ and \eqref{ukp} by $u_i'$ and taking the difference, we have 
  \begin{subequations}
      \begin{align}
          &\displaystyle d(u_1'u_2-u_1u_2')+d(u_1'u_3-u_1u_3')+r_1\frac{u_1^2u_1'}{K}=0,\medskip \label{ukup:1}\\
&\displaystyle -(d+q)(u_1'u_2-u_1u_2')+k(u_2'u_3-u_2u_3')+u_2(u_2-u_3)=0,\medskip \label{ukup:2}\\
&\displaystyle k(u_2u_3'-u_2'u_3)-(d+q)(u_2'u_3-u_2u_3')+u_3(u_3-u_2)+r_3\frac{u_3^2u_3'}{K}=0. \label{ukup:3}
      \end{align}
      \label{ukup}
  \end{subequations}

Multiplying equation \eqref{ukup:1} by $(d+q)/d$ and adding this to equations of \eqref{ukup:2} and \eqref{ukup:3}, we have 
$$
r_1\frac{d+q}{d}\frac{u_1^2u_1'}{K}+(u_2-u_3)^2+r_3\frac{u_3^2u_3'}{K}=0. 
$$
Noticing $u_1'>0$, the above equation implies that $u_3'<0$. By Lemma \ref{lemma_r2r3}, we have $u_2-u_3>0$. By Lemma \ref{bounds}, $u_3>K$ and $1-2u_3/K<0$. So by equation \eqref{ukp:3}, $u_2'<0$. Hence, $u_2'+u_3'<0$, which contradicts with Lemma \ref{lemma_sign}. Therefore, $u_1'<0$ for all $k\ge0$. 
\end{proof}

\begin{theorem}
    Given $r_2\neq r_3$ and $r_1>0$, then the total biomass is a decreasing function of the diffusion rate between the patches on the same level.
\end{theorem}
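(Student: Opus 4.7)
The plan is to piece together the preceding lemmas (Lemmas \ref{bounds}--\ref{lemma_boundary}) in the same manner that Theorem \ref{thm:biomass_2dir_tributary} was derived from its parallel chain of lemmas. The strategy reduces the problem to a sign-determination for $u_1'(k)$ as a function of the intrinsic growth rates, and then uses Lemma \ref{lemma_sign} to transfer that sign to $\MM'_{\DD,2}(k)=u_1'+u_2'+u_3'$.

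First I would fix $r_1>0$ and $k\ge 0$ and view $u_1'=u_1'(k; r_2,r_3)$ as a continuous function of $(r_2,r_3)\in\mathbb{R}_{\ge 0}^2$ (continuous dependence of the unique positive equilibrium on parameters). Lemma \ref{lemma_u1p0} gives $u_1'=0$ whenever $r_2=r_3$, and Lemma \ref{lemma_u1p} is the converse: if $u_1'=0$ then $r_2=r_3$. Therefore the zero set of $u_1'$ in the parameter plane is exactly the diagonal $\{r_2=r_3\}$, which partitions the parameter set into the two connected open regions $R_1=\{(r_2,r_3):0\le r_2<r_3\}$ and $R_2=\{(r_2,r_3):0\le r_3<r_2\}$. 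By continuity, $u_1'$ has constant sign on each of these regions.

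Next, I would pin down which sign occurs on each region by evaluating at a convenient boundary point. Lemma \ref{lemma_boundary} provides exactly this: when $r_2=0$ and $r_3>0$, or $r_2>0$ and $r_3=0$, we have $u_1'<0$. These two cases lie in $R_1$ and $R_2$ respectively, so $u_1'<0$ throughout both regions. Combining this with Lemma \ref{lemma_sign}, which states that $u_1'$ and $u_2'+u_3'$ share the same sign, yields $u_2'+u_3'<0$, and hence
\[
\MM'_{\DD,2}(k)=u_1'+u_2'+u_3'<0
\]
whenever $r_1>0$ and $r_2\neq r_3$, proving the claimed strict monotonicity.

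There is no genuine obstacle left, as the work has been done in the preceding lemmas; the only subtlety is justifying continuous dependence of $u_1'$ on $(r_2,r_3)$, but this follows from the smooth dependence of the unique positive equilibrium on parameters (via the implicit function theorem applied to system \eqref{uk}, using that the relevant Jacobian is nonsingular by the uniqueness and stability of $\bm u^*$). The argument is the exact analog of the closing steps in the proof of Theorem \ref{thm:biomass_2dir_tributary}, with the roles of $u_3'$ and $u_1'+u_2'$ swapped for $u_1'$ and $u_2'+u_3'$, respectively.
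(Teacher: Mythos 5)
Your proposal is correct and follows essentially the same route as the paper's own proof: reduce to the sign of $u_1'$ via Lemma \ref{lemma_sign}, characterize the zero set of $u_1'$ in the $(r_2,r_3)$-plane as the diagonal using Lemmas \ref{lemma_u1p0} and \ref{lemma_u1p}, and then fix the sign on each off-diagonal region by the boundary cases of Lemma \ref{lemma_boundary}. The extra remark on justifying continuous dependence of $u_1'$ on the parameters is a reasonable addition that the paper leaves implicit.
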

\begin{proof}
    By Lemma \ref{lemma_sign}, it suffices to show that $u_1'<0$ for all $k\ge 0$. From Lemmas \ref{lemma_u1p0} and \ref{lemma_u1p}, $u_1'=0$ if and only if $r_2=r_3$. Thus $u_1'$ has the same sign in each of the two regions: $R_1=\{(r_2,r_3): 0\leq r_2<r_3\}$ and $R_2=\{(r_2,r_3): 0\leq r_3<r_2\}$. By Lemma \ref{lemma_boundary}, $u_1'<0$ on $\{(r_1, r_2):\ r_2=0, r_3>0\}$ and $\{(r_1, r_2):\ r_3=0, r_2>0\}$. Thus $u_1'<0$ in both regions $R_1$ and $R_2$. This completes the proof.
\end{proof}

\subsection{Proof of Theorem \ref{thm:biomass_1dir_tributary}}
The positive equilibrium $(u_1,u_2,u_3)$ in the tributary stream network with one-directional edge modification is the unique solution of
\begin{subequations}
\begin{align}
&r_1u_1\left(1-\frac{u_1}{K}\right)-(d+q+k)u_1 +du_3 = 0 \\
&r_2u_2\left(1-\frac{u_2}{K}\right) + ku_1-(d+q)u_2+du_3 = 0 \\
&r_3u_3\left(1-\frac{u_3}{K}\right) +(d+q)u_1 +(d+q)u_2 - 2du_3=0 .
\end{align}
\end{subequations}
Differentiating the above equations in terms of $k$ yields
\begin{subequations}
\begin{align}
&r_1u_1'\left(1-\frac{2u_1}{K}\right)-(d+q+k)u_1'- u_1 +du_3' = 0\\
&r_2u_2'\left(1-\frac{2u_2}{K}\right) + ku_1'+u_1-(d+q)u_2'+du_3' = 0\\
&r_3u_3'\left(1-\frac{2u_3}{K}\right) +(d+q)u_1' +(d+q)u_2' - 2du_3'=0 .
\end{align}
\end{subequations}
Evaluating at $k=0$, we have
\begin{subequations}
\begin{align}
&r_1u_1'(0)\left(1-\frac{2u_1(0)}{K}\right)-(d+q)u_1'(0)- u_1(0) +du_3'(0) = 0 \label{eq:u1'0}\\
&r_2u_2'(0)\left(1-\frac{2u_2(0)}{K}\right) + u_1(0)-(d+q)u_2'(0)+du_3'(0) = 0 \label{eq:u2'0}\\
&r_3u_3'(0)\left(1-\frac{2u_3(0)}{K}\right) +(d+q)u_1'(0) +(d+q)u_2'(0) - 2du_3'(0)=0 .
\end{align}
\end{subequations}
Adding the three equations above yields
\begin{align}\label{eq:sum_ui'_tributary}
r_1u_1'(0)\left(1-\frac{2u_1(0)}{K}\right)+r_2u_2'(0)\left(1-\frac{2u_2(0)}{K}\right) +r_3u_3'(0)\left(1-\frac{2u_3(0)}{K}\right)  =0 
\end{align}

It is easy to check that when $k=0$, all the following results are special cases of Lemma \ref{lem:r3=0} to Lemma \ref{lemma:tributary_diagonal}. 

\begin{lemma}
    If $r_3=0$, then $\MM'_{\TT,1}(0)=0$.
\end{lemma}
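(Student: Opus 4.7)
My strategy is to adapt the argument of Lemma \ref{lem:r3=0} to the one-directional setting. At $k = 0$ the modification $kE_{\TT,1}$ vanishes, so the equilibrium $(u_1(0), u_2(0), u_3(0))$ coincides with that of the unmodified tributary. Assuming for the moment $r_1, r_2 > 0$, a direct verification analogous to Lemma \ref{lem:r3=0} confirms that this equilibrium is $(K, K, (d+q)K/d)$: the logistic terms in equations \eqref{eq:t1}--\eqref{eq:t2} vanish at $u_1 = u_2 = K$, and the remaining linear dispersal balance fixes $u_3 = (d+q)K/d$.

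I would next differentiate the three equilibrium equations implicitly in $k$, evaluate at $k = 0$ using $r_3 = 0$ and the above $u_i(0)$, and obtain a $3 \times 3$ linear system for $(u_1'(0), u_2'(0), u_3'(0))$. Two reductions are immediate. First, the sum identity \eqref{eq:sum_ui'_tributary}, specialized to $r_3 = 0$ and $u_1(0)=u_2(0)=K$ (so that $1 - 2u_i(0)/K = -1$ for $i = 1,2$), collapses to the scalar relation $r_1 u_1'(0) + r_2 u_2'(0) = 0$. Second, the linearization of the third equilibrium equation with $r_3 = 0$ yields $u_3'(0) = \tfrac{d+q}{2d}(u_1'(0)+u_2'(0))$, so that $\MM'_{\TT,1}(0) = \tfrac{3d+q}{2d}(u_1'(0)+u_2'(0))$. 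Hence the task reduces to showing $u_1'(0) + u_2'(0) = 0$.

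To close the argument I would combine the above with equations \eqref{eq:u1'0}--\eqref{eq:u2'0} evaluated at $u_1(0) = u_2(0) = K$, treating the symmetric subcase $r_1 = r_2$ first — where a relabeling of nodes 1 and 2 and the uniqueness of the positive equilibrium give $u_1'(0) = -u_2'(0)$ and the conclusion is immediate — and the boundary cases $r_1 = 0 < r_2$ or $r_2 = 0 < r_1$ separately, where the constraint $r_1 u_1'(0) + r_2 u_2'(0) = 0$ degenerates and one must read off the remaining derivative directly from \eqref{eq:u1'0} or \eqref{eq:u2'0}. The main obstacle I foresee is that the sum identity alone is a single scalar constraint, so producing the second independent relation needed to force $u_1'(0)+u_2'(0) = 0$ for arbitrary $r_1, r_2$ requires a careful combination of \eqref{eq:u1'0}--\eqref{eq:u2'0}, in which the asymmetric source terms $\pm u_1(0) = \pm K$ must cancel in a nontrivial way. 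This cancellation, which distinguishes the $r_3 = 0$ equilibrium from the generic case, is what I expect to be the heart of the calculation.
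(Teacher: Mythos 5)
Your two reductions are correct: with $r_3=0$ and $u_1(0)=u_2(0)=K$, identity \eqref{eq:sum_ui'_tributary} collapses to $r_1u_1'(0)+r_2u_2'(0)=0$, and the third linearized equation gives $u_3'(0)=\tfrac{d+q}{2d}\bigl(u_1'(0)+u_2'(0)\bigr)$, so everything hinges on showing $u_1'(0)+u_2'(0)=0$. But the cancellation you are hoping for in that last step does not occur. Equations \eqref{eq:u1'0} and \eqref{eq:u2'0} at $u_1(0)=u_2(0)=K$ read $(r_1+d+q)\,u_1'(0)=d\,u_3'(0)-K$ and $(r_2+d+q)\,u_2'(0)=d\,u_3'(0)+K$; combining these with $u_3'(0)=\tfrac{d+q}{2d}(u_1'(0)+u_2'(0))$ and solving the resulting $3\times 3$ linear system yields
\[
u_1'(0)+u_2'(0)=\frac{K(r_1-r_2)}{r_1r_2+\tfrac{d+q}{2}(r_1+r_2)},
\]
which vanishes only when $r_1=r_2$ (the denominator is positive by (A1) since $r_1+r_2=\sum_i r_i>0$). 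A concrete check: take $d=q=K=1$, $r_1=1$, $r_2=r_3=0$. Then $(1,\,1+k,\,2+k)$ is the unique positive equilibrium for every $k\ge 0$, so $\MM_{\TT,1}(k)=4+2k$ and $\MM'_{\TT,1}(0)=2\neq 0$. The step you flag as ``the heart of the calculation'' is therefore not merely hard; it is where the argument---and, for $r_1\neq r_2$, the statement itself---breaks down.

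For comparison, the paper offers no computation here: it asserts that this lemma is a special case of Lemma \ref{lem:r3=0}. That assertion rests on $(K,K,(d+q)K/d)$ being an equilibrium for \emph{all} $k$, which holds for the bi-directional modification because the perturbation contributes $k(-u_1+u_2)=0$ at $u_1=u_2=K$, but fails for the one-directional modification, where it contributes $-kK\neq 0$ in the first equilibrium equation. You were right to distrust that shortcut and to differentiate instead; carried to completion, your own setup shows that when $r_3=0$ the quantity $\MM'_{\TT,1}(0)$ has the sign of $r_1-r_2$, so the lemma (and the ``$r_3=0$'' branch of Theorem \ref{thm:biomass_1dir_tributary}) appears to require the additional hypothesis $r_1=r_2$.
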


\begin{lemma}
    If $r_3>0$, we must have $u_1(0),u_2(0) \in\left[\frac{d}{d+q}K,K\right)$ and $u_3(0)\in\left[K,\frac{d+q}{d}K\right)$.
\end{lemma}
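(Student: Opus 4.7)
The plan is to reduce this lemma to the bi-directional bound already established in Lemma \ref{lem:tributary_bounds}. The key observation is that at $k=0$ the one-directional edge modification contributes nothing to the connection matrix: $L_{\TT,1}(0) = L_\TT = L_{\TT,2}(0)$. Hence the positive equilibrium triple $(u_1(0), u_2(0), u_3(0))$ arising in the one-directional analysis coincides with the unique positive equilibrium of the unmodified tributary system, which is in turn the $k=0$ specialization of the bi-directional system covered by Lemma \ref{lem:tributary_bounds}.

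Given this, I would simply invoke Lemma \ref{lem:tributary_bounds} at $k=0$: under the assumption $r_3 > 0$, the bounds $u_1, u_2 \in [\tfrac{d}{d+q}K, K)$ and $u_3 \in [K, \tfrac{d+q}{d}K)$ hold for every $k \ge 0$ in the bi-directional setting, so in particular at $k=0$, and thus they apply to $(u_1(0), u_2(0), u_3(0))$ here. This is exactly what the authors mean by saying that these results are special cases of the earlier lemmas.

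If a self-contained proof were desired instead, the same construction as in Lemma \ref{lem:tributary_bounds} goes through verbatim: the vector $\bar{\bm u} = (K, K, K(d+q)/d)$ is a strict upper solution and $\underline{\bm u} = (\tfrac{d}{d+q}K, \tfrac{d}{d+q}K, K)$ is a lower solution for the unmodified tributary system, since the $k$-dependent terms drop out of every balance equation; the strong monotonicity of the dynamics generated by the irreducible, essentially nonnegative matrix $L_\TT$ then forces the positive equilibrium to lie in the stated range (with strict inequality on the upper side because the upper solution is strict at node $3$, propagating to nodes $1,2$ by monotonicity, and with equality on the lower side possible only when $r_1 = r_2 = 0$). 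There is no substantive obstacle; the statement is a direct corollary of the earlier lemma applied at a single value of the modification parameter.
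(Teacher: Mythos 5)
Your proposal is correct and matches the paper's approach exactly: the paper itself disposes of this lemma by noting that at $k=0$ the one-directional system coincides with the unmodified (equivalently, bi-directional at $k=0$) tributary system, so the bounds are a special case of Lemma \ref{lem:tributary_bounds}. Your optional self-contained upper/lower solution argument is also the same construction used in that earlier lemma, so nothing further is needed.
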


\begin{lemma}
    $u_1'(0)+u_2'(0)$ has the same sign as $u_3'(0)$. In particular, this implies further that $\MM'_{\TT,1}$ has the same sign as $u_3'(0)$.
\end{lemma}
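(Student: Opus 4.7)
The plan is to read the lemma directly off the third differentiated equation at $k=0$, combined with the lower bound on $u_3(0)$ from the preceding bounds lemma. That third equation reads
\[
r_3 u_3'(0)\left(1 - \frac{2u_3(0)}{K}\right) + (d+q)\bigl(u_1'(0) + u_2'(0)\bigr) - 2d\, u_3'(0) = 0,
\]
which I would rearrange as
\[
(d+q)\bigl(u_1'(0) + u_2'(0)\bigr) = u_3'(0)\left[2d - r_3\left(1 - \frac{2u_3(0)}{K}\right)\right].
\]
The task then reduces to verifying that the bracketed factor on the right is strictly positive.

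Next I would split into two cases. If $r_3 = 0$, the bracket equals $2d > 0$, so the relation is immediate. If $r_3 > 0$, I would invoke the preceding bounds lemma to get $u_3(0) \ge K$, hence $1 - 2u_3(0)/K \le -1 < 0$, and therefore $r_3\bigl(1 - 2u_3(0)/K\bigr) \le -r_3 < 0$, making the bracket at least $2d + r_3 > 0$. In either case $(d+q)(u_1'(0) + u_2'(0))$ and $u_3'(0)$ are positive scalar multiples of one another, so they share a sign (with both-zero as a degenerate case).

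For the second statement, since $\MM'_{\TT,1}(0) = \bigl(u_1'(0) + u_2'(0)\bigr) + u_3'(0)$ is a sum of two quantities of the same sign, it inherits that sign. There is no real obstacle here; the only nontrivial ingredient is the uniform bound $u_3(0) \ge K$, which is already supplied by the preceding lemma, so the proof is essentially a one-line rearrangement plus a sign check.
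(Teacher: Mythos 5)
Your proof is correct, but it takes a slightly different route from the paper's. The paper establishes this lemma as the $k=0$ specialization of Lemma \ref{lemma:tributary_samesign}: there one first forms the combination $u_3'\times(\text{third equilibrium equation})-u_3\times(\text{third differentiated equation})$, which produces the identity
\[
u_3'\Big(\frac{r_3u_3^2}{K}+(d+q)(u_1+u_2)\Big)=(d+q)u_3\,(u_1'+u_2'),
\]
whose coefficient of $u_3'$ is manifestly positive simply because the equilibrium is positive --- no a priori bounds on $u_3$ are needed. You instead rearrange the raw third differentiated equation into
\[
(d+q)\bigl(u_1'(0)+u_2'(0)\bigr)=u_3'(0)\Bigl[2d-r_3\Bigl(1-\tfrac{2u_3(0)}{K}\Bigr)\Bigr]
\]
and check positivity of the bracket, which requires the bound $u_3(0)\ge K$ from the preceding lemma (handled correctly via your case split on $r_3=0$ versus $r_3>0$). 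Both arguments are short and valid; yours is more direct but leans on the equilibrium bounds, while the paper's combination trick makes the sign of the coefficient automatic and is the version that generalizes uniformly in $k$ for the bi-directional case. The concluding step --- that a sum of two same-signed quantities inherits that sign, hence $\MM'_{\TT,1}(0)$ has the sign of $u_3'(0)$ --- is the same in both.
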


\begin{lemma}
Suppose that $d>q$ and $r_3>0$. Then $u_3'(0)=0$ if and only if $r_1=r_2$.
\end{lemma}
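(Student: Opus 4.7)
The plan is to follow the two-step strategy of Lemma \ref{lemma:tributary_diagonal}, adapted to the one-directional modification evaluated at $k=0$. The preceding lemmas supply the needed ingredients: the bounds $u_1(0),u_2(0)\in[dK/(d+q),K)$ and $u_3(0)\in[K,(d+q)K/d)$, together with the sign relation between $u_1'(0)+u_2'(0)$ and $u_3'(0)$. Under the hypothesis $d>q$ we have $u_1(0),u_2(0)>K/2$, which unlocks the monotonicity of the auxiliary function $f(x)=x(1-x/K)/(2x/K-1)$ on $(K/2,K)$ that was used in the bi-directional proof.

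For the forward direction, suppose $u_3'(0)=0$. The third derivative equation immediately yields $u_2'(0)=-u_1'(0)$. If additionally $u_1'(0)=0$, then \eqref{eq:u1'0} reduces to $u_1(0)=0$, contradicting the strict lower bound. Otherwise, adding \eqref{eq:u1'0} and \eqref{eq:u2'0} causes the $\pm u_1(0)$ terms, the $-(d+q)(u_1'(0)+u_2'(0))$ contribution, and the $2du_3'(0)$ contribution to all vanish, leaving
\[
u_1'(0)\bigl[r_1(1-2u_1(0)/K)-r_2(1-2u_2(0)/K)\bigr]=0,
\]
so $r_1(2u_1(0)/K-1)=r_2(2u_2(0)/K-1)$. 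Assuming without loss of generality $u_1(0)>u_2(0)$, the difference of the two equilibrium equations at $k=0$ yields $r_1u_1(0)(1-u_1(0)/K)>r_2u_2(0)(1-u_2(0)/K)$. Dividing this inequality by the previous identity and invoking the strict decrease of $f$ on $(K/2,K)$ produces $f(u_1(0))>f(u_2(0))$ while $f$ is decreasing and $u_1(0)>u_2(0)$, a contradiction. A symmetric argument rules out $u_1(0)<u_2(0)$, so $u_1(0)=u_2(0)$ and hence $r_1=r_2$.

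For the reverse direction, rather than confronting the $3\times 3$ linear system for $(u_1'(0),u_2'(0),u_3'(0))$ head on, I plan to leverage Lemma \ref{lemma:tributary_diagonal} via a decomposition. Consider the two-parameter family of equilibria $v(k_1,k_2)$ associated with the connection matrix $L_\TT+k_1E_{\TT,1}+k_2\widetilde{E}_{\TT,1}$, where $\widetilde{E}_{\TT,1}$ encodes the one-directional edge in the opposite direction (from patch 2 to patch 1). Since $E_{\TT,1}+\widetilde{E}_{\TT,1}=E_{\TT,2}$, the diagonal slice $v(k,k)$ is exactly the bi-directional equilibrium of Lemma \ref{lemma:tributary_diagonal}, while $v(k,0)$ is the equilibrium of interest here. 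When $r_1=r_2$, swapping the labels of patches 1 and 2 is an invariance of the system that exchanges the $k_1$ and $k_2$ modifications while fixing the third coordinate, giving $v_3(k_1,k_2)=v_3(k_2,k_1)$ and hence $\partial_{k_1}v_3(0,0)=\partial_{k_2}v_3(0,0)$. The chain rule combined with Lemma \ref{lemma:tributary_diagonal} gives $\partial_{k_1}v_3(0,0)+\partial_{k_2}v_3(0,0)=\frac{d}{dk}v_3(k,k)\big|_{k=0}=0$, so both partials vanish and $u_3'(0)=\partial_{k_1}v_3(0,0)=0$.

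The main obstacle is the reverse direction. A direct substitution into the three derivative equations leads to an equation of the form $u_3'(0)\cdot\{\text{bracket}\}=0$ where the non-vanishing of the bracket is not transparent and must be ruled out separately. The symmetry/decomposition argument above avoids this verification entirely by reducing the problem to the already-established bi-directional result, at the cost of setting up the two-parameter family and carefully checking that the relabeling invariance holds precisely when $r_1=r_2$.
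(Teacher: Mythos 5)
Your proof is correct. The forward direction coincides with the paper's: the paper disposes of this lemma by asserting that the $k=0$ computations are special cases of Lemma \ref{lemma:tributary_diagonal}, and indeed the intended argument is exactly yours --- $u_3'(0)=0$ forces $u_2'(0)=-u_1'(0)$, the degenerate subcase $u_1'(0)=0$ is excluded (here because \eqref{eq:u1'0} would give $u_1(0)=0$, contradicting the lower bound), and the identity $r_1(2u_1(0)/K-1)=r_2(2u_2(0)/K-1)$ is played against the difference of the equilibrium equations via the strictly decreasing function $f(x)=x(1-x/K)/(2x/K-1)$ on $(K/2,K)$, which is where $d>q$ enters. (Like the paper, you divide by $r_i(2u_i(0)/K-1)$ without comment; if $r_1=0$ or $r_2=0$ the identity forces the other to vanish too, so the conclusion survives this edge case.) Where you genuinely diverge is the reverse direction. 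The paper's route is to use the relabeling symmetry to get $u_1(0)=u_2(0)$, derive the one-directional analogue of the weighted identity \eqref{eq:t_sum} (or use \eqref{eq:sum_ui'_tributary}), conclude that $u_1'(0)+u_2'(0)$ and $u_3'(0)$ have opposite signs, and then invoke Lemma \ref{lemma:tributary_samesign} to force $u_3'(0)=0$. You instead embed the problem in the two-parameter family $L_\TT+k_1E_{\TT,1}+k_2\widetilde{E}_{\TT,1}$, note that $E_{\TT,1}+\widetilde{E}_{\TT,1}=E_{\TT,2}$ so the diagonal slice is the bi-directional modification, and use the $1\leftrightarrow 2$ relabeling invariance (valid precisely when $r_1=r_2$, by uniqueness of the positive equilibrium) together with the chain rule to split the already-established vanishing of the bi-directional derivative into two equal halves. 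This is a clean reduction that recycles Lemma \ref{lemma:tributary_diagonal} instead of redoing its computation; its only additional cost is the (routine) implicit-function-theorem justification that the equilibrium is jointly differentiable in $(k_1,k_2)$, which is the same nondegeneracy the paper already relies on when differentiating in $k$. Both arguments are sound.
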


Now it remains to check the sign of $u_3'(0)$ on the two axes $\{r_1=0, r_2>0\}$ and $\{r_2=0, r_1>0\}$. 

\begin{lemma}
    Suppose that $d>q$ and $r_3>0$. Then
    \begin{enumerate}[(a)]
        \item  If $r_1=0$ and $r_2>0$, then $u_3'(0)<0$.
        \item  If $r_2=0$ and $r_1>0$, then $u_3'(0)>0$.
    \end{enumerate}
\end{lemma}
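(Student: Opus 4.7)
The plan is to argue by contradiction, following the template of Lemma \ref{lem:tributary_u3'} but adapting to the asymmetric perturbation produced by the one-directional edge. First I would record that under $d>q$ the bounds from the immediately preceding lemma give $u_1(0),u_2(0)\in[dK/(d+q),K)\subseteq(K/2,K)$ and $u_3(0)\in[K,(d+q)K/d)$, so all three coefficients $1-2u_i(0)/K$ appearing in the linearized system are strictly negative. Because $r_1\neq r_2$ in both cases, the preceding lemma already rules out $u_3'(0)=0$, so it suffices to exclude a single sign in each case.

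For part~(a), assume for contradiction that $u_3'(0)>0$. In the sum identity \eqref{eq:sum_ui'_tributary}, the $r_1$-term vanishes and the $r_3$-term is strictly negative, so $r_2u_2'(0)(1-2u_2(0)/K)>0$, which forces $u_2'(0)<0$. Rearranging \eqref{eq:u2'0} to isolate $u_2'(0)$ gives
\[
u_2'(0)\bigl[r_2(1-2u_2(0)/K)-(d+q)\bigr]=-u_1(0)-du_3'(0),
\]
in which the bracket is strictly negative and, under $u_3'(0)>0$, the right-hand side is also strictly negative. Hence $u_2'(0)>0$, contradicting $u_2'(0)<0$.

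Part~(b) is completely symmetric. Assume for contradiction that $u_3'(0)<0$. With $r_2=0$, the sum identity and the negativity of $1-2u_i(0)/K$ force $u_1'(0)>0$. Rearranging \eqref{eq:u1'0} yields
\[
u_1'(0)\bigl[r_1(1-2u_1(0)/K)-(d+q)\bigr]=u_1(0)-du_3'(0),
\]
whose bracket is negative while the right-hand side is positive under $u_3'(0)<0$; thus $u_1'(0)<0$, a contradiction. The only delicate point is verifying that each sign falls into place, which rests entirely on $1-2u_i(0)/K<0$; this is precisely where the hypothesis $d>q$ enters through $u_i(0)\geq dK/(d+q)>K/2$. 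Beyond this observation, the proof is bookkeeping once one identifies which linearized equation to invert.
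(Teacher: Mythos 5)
Your proposal is correct and follows essentially the same argument as the paper: contradiction via the summed identity \eqref{eq:sum_ui'_tributary} to pin down the sign of $u_2'(0)$ (resp.\ $u_1'(0)$), then inversion of \eqref{eq:u2'0} (resp.\ \eqref{eq:u1'0}) to reach the opposite sign, with $d>q$ entering only through $u_1(0),u_2(0)>K/2$. Your explicit appeal to the preceding lemma to rule out $u_3'(0)=0$ is a slightly more careful way of obtaining the strict inequality than the paper's phrasing, but the substance is identical.
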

\begin{proof}
    (a) Suppose that $r_1=0$,  $r_2>0$, and $r_3>0$. We assume by contradiction that $u_3'(0)>0$. From equation \eqref{eq:sum_ui'_tributary}, we have
    \[
   r_2u_2'(0)\left(1-\frac{2u_2(0)}{K}\right) +r_3u_3'(0)\left(1-\frac{2u_3(0)}{K}\right)  =0 .
    \]
    Since $u_2(0)\geq \frac{d}{d+q}K>\frac{K}{2}$ and $u_3(0)\geq K >\frac{K}{2}$, we have $u_2'(0)<0$. However, from equation \eqref{eq:u2'0} we have
    \[
    u_2'(0)\left(d+q+r_2\left(\frac{2u_2(0)}{K}-1\right)\right) = u_1(0) +du_3'(0) >0.
    \]
    Thus we now obtain $u_2'(0)>0$, which is a contradiction. Therefore, for this case we must have $u_3'(0)<0$.

    (b) Suppose that $r_2=0$, $r_1>0$, and $r_3>0$. Again, we assume by contradiction that $u_3'(0)<0$. From equation \eqref{eq:sum_ui'_tributary}, we have 
     \[
   r_1u_1'(0)\left(1-\frac{2u_1(0)}{K}\right) +r_3u_3'(0)\left(1-\frac{2u_3(0)}{K}\right)  =0 .
    \]
    Since $u_1(0)\geq \frac{d}{d+q}K>\frac{K}{2}$ and $u_3(0)\geq K >\frac{K}{2}$, we have $u_1'(0)>0$. However, from equation \eqref{eq:u1'0}, we have
      \[
    u_1'(0)\left(d+q+r_2\left(\frac{2u_2(0)}{K}-1\right)\right) = -u_1(0) + du_3'(0) <0.
    \]
    Thus now we obtain $u_1'(0)<0$, which is a contradiction. Therefore, for this case we must have $u_3'(0)>0$.
\end{proof}

Since $\MM'_{\TT,1}(0)$ has the same sign as $u_3'(0)$, the proof of Theorem \ref{thm:biomass_1dir_tributary} follows directly from the lemmas above.
\subsection{Proof of Theorem \ref{thm:biomass_1dir_distributary}}

The positive equilibrium $(u_1,u_2,u_3)$ in the distributary stream network with one-directional edge modification is the unique solution of
\begin{subequations}
\begin{align}
&r_1u_1\left(1-\frac{u_1}{K}\right)-2(d+q)u_1 +du_2+du_3 = 0 ,\label{eq:u1_distributary}\\
&r_2u_2\left(1-\frac{u_2}{K}\right)+ (d+q)u_1 -(d+k)u_2 = 0, \label{eq:u2_distributary}\\
&r_3u_3\left(1-\frac{u_3}{K}\right) +(d+q)u_1 +ku_2 - du_3=0 \label{eq:u3_distributary}.
\end{align}
\end{subequations}
Differentiating the above equations in terms of $k$ yields
\begin{subequations}
\begin{align}
&r_1u_1'\left(1-\frac{2u_1}{K}\right)-2(d+q)u_1' +du_2'+du_3'  = 0,\\
&r_2u_2'\left(1-\frac{2u_2}{K}\right) + (d+q)u_1' - (d+k)u_2' - u_2  = 0,\\
&r_3u_3'\left(1-\frac{2u_3}{K}\right) +(d+q)u_1' +ku_2'+u_2 - du_3'=0 .
\end{align}
\end{subequations}
Evaluating at $k=0$, we have
\begin{subequations}
\begin{align}
&r_1u_1'(0)\left(1-\frac{2u_1(0)}{K}\right)-2(d+q)u_1'(0)+du_2'(0) +du_3'(0) = 0, \label{eq:u1'0_distributary}\\
&r_2u_2'(0)\left(1-\frac{2u_2(0)}{K}\right) + (d+q)u_1'(0)-d u_2'(0)-u_2(0)= 0, \label{eq:u2'0_distributary}\\
&r_3u_3'(0)\left(1-\frac{2u_3(0)}{K}\right) +(d+q)u_1'(0) +u_2(0) - du_3'(0)=0 \label{eq:u3'0_distributary}.
\end{align}
\end{subequations}
Adding the three equations above yields
\begin{align}\label{eq:sum_ui'_distributary}
r_1u_1'(0)\left(1-\frac{2u_1(0)}{K}\right)+r_2u_2'(0)\left(1-\frac{2u_2(0)}{K}\right) +r_3u_3'(0)\left(1-\frac{2u_3(0)}{K}\right)  =0. 
\end{align}
Plugging in $k=0$ and taking $\eqref{eq:u1_distributary}\times u_1'(0)+\eqref{eq:u2_distributary}\times u_2'(0)+\eqref{eq:u3_distributary}\times u_3'(0)-(\eqref{eq:u1'0_distributary}\times u_1+\eqref{eq:u2'0_distributary}\times u_2+\eqref{eq:u3'0_distributary}\times u_3)$ yields
\begin{align}\label{eq:sum_distributary}
\frac{r_1u_1(0)^2}{K}u_1'(0)+\frac{r_2u_2(0)^2}{K}u_2'(0)+\frac{r_3u_3(0)^2}{K}u_3'(0)=u_2(0)(u_3(0) - u_2(0)).
\end{align}
It is easy to check that when $k=0$, all the following results are special cases of Lemma \ref{bounds} to Lemma \ref{lemma_u1p}. 

\begin{lemma}
    If $r_1=0$, then $\MM_{\DD,1}(0)=0$.
\end{lemma}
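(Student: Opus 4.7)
The plan is to follow the template of Lemma \ref{lem:r3=0}: show that the hypothesis $r_1=0$ forces a particularly simple closed-form for the equilibrium at $k=0$, and then combine the identities already established for the distributary system to conclude that the total biomass is stationary at $k=0$.

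Concretely, I would first observe that when $r_1=0$ and $k=0$, the triple $(u_1,u_2,u_3) = (dK/(d+q),\, K,\, K)$ satisfies the equilibrium system \eqref{eq:u1_distributary}--\eqref{eq:u3_distributary}. Equation \eqref{eq:u1_distributary} reduces to $-2(d+q)\cdot dK/(d+q) + 2dK = 0$, while each of \eqref{eq:u2_distributary} and \eqref{eq:u3_distributary} reduces to $(d+q)\cdot dK/(d+q) - dK = 0$ because the logistic terms vanish at $u_i = K$. The uniqueness of the positive equilibrium cited in the paper then identifies this triple as $(u_1(0),u_2(0),u_3(0))$.

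Next, I would substitute these equilibrium values into the identity \eqref{eq:sum_distributary}. The $r_1$-term vanishes by hypothesis, and the right-hand side $u_2(0)(u_3(0)-u_2(0))$ vanishes because $u_2(0)=u_3(0)=K$; this produces the linear relation $r_2 u_2'(0) + r_3 u_3'(0) = 0$. The same relation also falls out of \eqref{eq:sum_ui'_distributary} once we substitute $u_2(0)=u_3(0)=K$. Combined with the derivative of the first equilibrium equation, \eqref{eq:u1'0_distributary}, which at $r_1=0$ reduces to $2(d+q)u_1'(0) = d(u_2'(0)+u_3'(0))$, this expresses $\MM'_{\DD,1}(0) = u_1'(0) + u_2'(0) + u_3'(0)$ as an explicit scalar multiple of $u_2'(0) + u_3'(0)$.

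The main obstacle is to extract a second independent linear relation forcing the symmetric combination $u_2'(0) + u_3'(0)$ (and hence $u_1'(0)$) to vanish. I would attempt this by exploiting the node-2/node-3 symmetry of the unperturbed equilibrium: swapping nodes $2$ and $3$ preserves the $k=0$ problem but reverses the sign of the driving term $\pm u_2(0)$ appearing in \eqref{eq:u2'0_distributary} and \eqref{eq:u3'0_distributary}. This suggests decomposing the linearized perturbation system into symmetric and antisymmetric modes; the antisymmetric mode carries the nonzero driving, while the symmetric mode is forced to vanish by the conservation identity above. That would give $u_1'(0) = 0$ and therefore $\MM'_{\DD,1}(0) = 0$. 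Carrying out the algebra cleanly—so that the symmetric component of the inhomogeneous term indeed cancels rather than contributing a nonzero piece—is the technical step I expect to require the most care.
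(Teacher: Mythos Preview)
The obstacle you identify is real, and your proposed symmetry argument cannot close it. Swapping nodes $2$ and $3$ does not preserve the linearized problem at $k=0$ unless $r_2=r_3$: the swap also exchanges $r_2$ and $r_3$ in the coefficients of \eqref{eq:u2'0_distributary}--\eqref{eq:u3'0_distributary}, so there is no genuine symmetry to decompose against in the generic case. In fact, solving \eqref{eq:u1'0_distributary}--\eqref{eq:u3'0_distributary} directly with $r_1=0$ and the equilibrium $(dK/(d+q),K,K)$ gives
\[
(d+q)\,u_1'(0)=\frac{dK(r_2-r_3)}{\,2r_2r_3+d(r_2+r_3)\,},
\qquad
\MM'_{\DD,1}(0)=u_1'(0)\left(1+\frac{2(d+q)}{d}\right),
\]
which is nonzero whenever $r_2\neq r_3$. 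A concrete check: with $d=q=K=1$, $r_1=0$, $r_2=1$, $r_3=2$ one finds $u_1'(0)=-1/14$, $u_2'(0)=-4/7$, $u_3'(0)=2/7$, hence $\MM'_{\DD,1}(0)=-5/14$.

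The paper does not supply an independent argument here: it declares the lemmas in this block to be ``special cases'' of Lemmas~\ref{bounds}--\ref{lemma_u1p}, but those lemmas all carry the hypothesis $r_1>0$ and do not address $r_1=0$. The analogous \emph{bi-directional} claim (that $r_1=0$ forces $\MM'_{\DD,2}(k)\equiv 0$) is correct, because the equilibrium $(dK/(d+q),K,K)$ satisfies the bi-directional system for every $k$ (the $\pm k u_2$ and $\pm k u_3$ terms cancel). That cancellation is exactly what fails for the one-directional perturbation $E_{\DD,1}$, so the equilibrium genuinely moves with $k$. In short, the difficulty you ran into is a defect of the statement rather than of your strategy.
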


\begin{lemma}
    If $r_1>0$, we must have $u_1(0) \in\left[\frac{d}{d+q}K,K\right)$ and $u_2(0), u_3(0)\in\left[K,\frac{d+q}{d}K\right)$.
\end{lemma}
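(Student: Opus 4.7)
The stated bounds are precisely the specialization of Lemma \ref{bounds} (for the bi-directional case) to $k=0$. Since the one-directional perturbation $E_{\DD,1}$ is scaled by $k$, at $k=0$ the equilibrium $\bm u^*(0)$ for the one-directional system coincides with the positive equilibrium of the unperturbed distributary system governed by $L_\DD$. My plan is therefore to reproduce the sub/super-solution argument of Lemma \ref{bounds} applied to $L_\DD$, which gives the desired componentwise bounds in a few short steps.

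First, I would construct the explicit super-solution $\bar{\bm u} = (K,\, (d+q)K/d,\, (d+q)K/d)$ and sub-solution $\underline{\bm u} = (dK/(d+q),\, K,\, K)$ for \eqref{eq:u1_distributary}--\eqref{eq:u3_distributary} at $k=0$. Direct substitution at $\bar{\bm u}$ zeros the logistic term in patch 1 and makes the net dispersion vanish there; in patches 2 and 3 the dispersion terms balance to zero while the logistic term evaluates to $-r_i(d+q)Kq/d^2 \le 0$. Hence the right-hand side is non-positive componentwise at $\bar{\bm u}$. Symmetrically at $\underline{\bm u}$, the dispersion terms balance in patches 2 and 3 while the logistic term vanishes there, and in patch 1 the logistic term contributes $r_1 dKq/(d+q)^2 > 0$ (since $r_1 > 0$) and the dispersion balances, so the right-hand side is non-negative componentwise.

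Next, since $L_\DD$ is essentially non-negative and irreducible, the ODE \eqref{eq-system} generates a strongly monotone semi-flow by \cite[Theorem 4.1.1]{smith2008monotone}. The trajectory starting from $\bar{\bm u}$ is componentwise decreasing and converges to the unique positive equilibrium $\bm u^*(0)$ (whose existence and uniqueness are guaranteed by the references cited in Section \ref{sec:model}), giving $\bm u^*(0) \le \bar{\bm u}$. The trajectory starting from $\underline{\bm u}$ is componentwise increasing with the same limit, giving $\bm u^*(0) \ge \underline{\bm u}$. Combining yields $u_1(0) \in [dK/(d+q), K]$ and $u_2(0), u_3(0) \in [K, (d+q)K/d]$.

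The main obstacle is sharpening these to the (half-open) endpoints claimed in the statement. Strong monotonicity together with irreducibility of the dispersion graph propagates any strict coordinatewise inequality of the sub-solution into a strict inequality for $\bm u^*(0)$: since $r_1 > 0$ makes the patch-$1$ sub-solution inequality strict, we obtain $\bm u^*(0) \gg \underline{\bm u}$, in particular $u_1(0) > dK/(d+q)$ and $u_2(0), u_3(0) > K$. The analogous propagation on the super-solution side gives the remaining strict upper inequalities; this is the step where care is needed, because in the degenerate case $r_2 = r_3 = 0$ the vector $\bar{\bm u}$ is itself an equilibrium, so the strict upper bound must be argued under the mild additional hypothesis implicit in Lemma \ref{bounds}. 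Assuming that hypothesis, strong monotonicity applied to the strict super-solution inequality in whichever of patches 2 or 3 has $r_i > 0$ forces $\bm u^*(0) \ll \bar{\bm u}$, completing the proof.
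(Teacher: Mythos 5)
Your proposal is correct and takes essentially the same route as the paper: the paper proves this lemma by observing that at $k=0$ it reduces to Lemma \ref{bounds}, whose proof is exactly your sub/super-solution plus strongly-monotone-flow argument with the identical vectors $(dK/(d+q),K,K)$ and $(K,(d+q)K/d,(d+q)K/d)$. Your remark that the strict upper bounds require $r_2>0$ or $r_3>0$ (otherwise $\bar{\bm u}$ is itself the equilibrium) is a valid catch, since the appendix lemma as stated assumes only $r_1>0$ while Lemma \ref{bounds} carries that extra hypothesis.
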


\begin{lemma}\label{lem:r2r3_distributary}
    If $r_2>r_3$, then $u_3(0)>u_2(0)$ and vice versa. 
\end{lemma}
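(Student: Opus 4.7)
The plan is to exploit the symmetry between equations \eqref{eq:u2_distributary} and \eqref{eq:u3_distributary} evaluated at $k=0$: both reduce to $r_i u_i(1-u_i/K) + (d+q)u_1(0) - d u_i = 0$ for $i=2,3$, so $u_2(0)$ and $u_3(0)$ are roots of a common one-parameter family
\begin{equation*}
f(r, x) := r x\bigl(1 - x/K\bigr) + (d+q)\,u_1(0) - d x,
\end{equation*}
evaluated at $r=r_2$ and $r=r_3$ respectively, sharing the same upstream value $u_1(0)$.

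First I would invoke the preceding lemma (the one-directional analogue of Lemma \ref{bounds}) to conclude that $u_1(0) \geq dK/(d+q)$ and $u_2(0), u_3(0) \in [K,(d+q)K/d)$. This gives $f(r, K) = (d+q)u_1(0) - dK \geq 0$ for every $r \geq 0$. Since $f(r,\cdot)$ is either affine (when $r=0$) or a downward-opening parabola with $f(r,K)\geq 0$, it admits a unique root in $[K,\infty)$, which must coincide with $u_i(0)$ when $r=r_i$. Applying the implicit function theorem to $f(r, x(r)) = 0$, I obtain
\begin{equation*}
\frac{dx}{dr} = -\frac{\partial f/\partial r}{\partial f/\partial x} = -\frac{x(1 - x/K)}{r(1 - 2x/K) - d}.
\end{equation*}
For $x > K$ the numerator is strictly negative, while for any $r \geq 0$ with $x > K/2$ the denominator is also strictly negative, so $dx/dr < 0$ on the interval of interest. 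Therefore $r_2 > r_3$ forces $u_2(0) < u_3(0)$, and the converse implication follows by swapping the indices $2$ and $3$.

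I do not anticipate a genuine obstacle here, since this is a cleaner $k=0$ specialization of Lemma \ref{lemma_r2r3}, whose bi-directional counterpart has already been carried out in detail. The only mildly delicate point is the boundary case $r_i = 0$, where $f(0,\cdot)$ is affine with unique root $(d+q)u_1(0)/d \geq K$; this value is consistent with the limit of the implicit-function analysis as $r \to 0^+$, so the strict sign of $dx/dr$ at nearby positive $r$ still produces the strict inequality whenever $r_2 \neq r_3$.
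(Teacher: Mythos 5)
Your proof is correct and follows essentially the same route as the paper: the paper proves this lemma by declaring it the $k=0$ special case of Lemma \ref{lemma_r2r3}, whose $k=0$ argument is exactly your construction of the one-parameter family $f(r,x)=rx(1-x/K)+(d+q)u_1-dx$, the uniqueness of its root in $[K,\infty)$, and the implicit differentiation showing $dx/dr<0$ there. Your extra care with the boundary case $r_i=0$ and the non-strict inequality $f(r,K)\ge 0$ is a harmless refinement of the same argument.
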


\begin{lemma}
    $u_2'(0)+u_3'(0)$ has the same sign as $u_1'(0)$. In particular, this implies further that $\MM'_{\DD,1}$ has the same sign as $u_1'(0)$.
\end{lemma}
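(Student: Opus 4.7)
The plan is to mimic the proof of Lemma \ref{lemma:tributary_samesign} from the tributary case, taking advantage of the fact that in the distributary network node $1$ is the single upstream node, just as node $3$ was the single downstream node in the tributary network. Specifically, I will combine the equation for $u_1$ with the equation for $u_1'(0)$ to eliminate the $r_1$-dependent quadratic nonlinearity and produce a clean relation between $u_1'(0)$ and $u_2'(0)+u_3'(0)$ with manifestly positive coefficients.

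Concretely, evaluate \eqref{eq:u1_distributary} at $k=0$ and multiply by $u_1'(0)$; multiply \eqref{eq:u1'0_distributary} by $u_1(0)$; then subtract. The logistic terms combine via
\[
r_1 u_1(0)\left(1-\tfrac{u_1(0)}{K}\right)u_1'(0) - r_1 u_1(0)u_1'(0)\left(1-\tfrac{2u_1(0)}{K}\right) \;=\; \frac{r_1 u_1(0)^2}{K}\,u_1'(0),
\]
while the $-2(d+q)u_1$ terms cancel and the $du_2,du_3$ terms rearrange to yield
\[
\left(\frac{r_1 u_1(0)^2}{K} + d\bigl(u_2(0)+u_3(0)\bigr)\right)u_1'(0) \;=\; d\,u_1(0)\bigl(u_2'(0)+u_3'(0)\bigr).
\]
Since $r_1\ge 0$ and $u_1(0),u_2(0),u_3(0)>0$, both the coefficient multiplying $u_1'(0)$ and the coefficient $d\,u_1(0)$ multiplying $u_2'(0)+u_3'(0)$ are strictly positive. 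Hence $u_1'(0)$ and $u_2'(0)+u_3'(0)$ carry the same sign.

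The ``in particular'' clause then follows immediately by writing $\MM'_{\DD,1}(0)=u_1'(0)+\bigl(u_2'(0)+u_3'(0)\bigr)$: the two summands on the right have the same sign, so their sum has that same sign (or is zero exactly when both vanish).

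I do not anticipate a real obstacle here: the derivation is a single algebraic manipulation, and the only subtlety is ensuring that the coefficient $\tfrac{r_1 u_1(0)^2}{K} + d(u_2(0)+u_3(0))$ is strictly positive, which is guaranteed by $d>0$ and the preceding lemma giving $u_2(0),u_3(0)\ge K>0$ (or even just by noting that $u_2(0)$ and $u_3(0)$ are components of the positive equilibrium, so they are strictly positive). No case analysis on $r_2,r_3$ is required, and the result is valid regardless of the sign structure of the other parameters.
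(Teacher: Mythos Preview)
Your proof is correct and is essentially identical to the paper's own argument: the paper proves the corresponding Lemma~\ref{lemma_sign} for the bi-directional distributary case by exactly this manipulation (multiply \eqref{uk:1} by $u_1'$, \eqref{ukp:1} by $u_1$, subtract) and then remarks that the one-directional $k=0$ lemma is a special case. The resulting identity $\bigl(\tfrac{r_1u_1^2}{K}+d(u_2+u_3)\bigr)u_1' = d\,u_1(u_2'+u_3')$ is the same one you derived.
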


\begin{lemma}
Suppose that $r_1>0$. Then $u_1'(0)=0$ if and only if $r_2=r_3$.
\end{lemma}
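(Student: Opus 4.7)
My plan is to handle the two directions of the ``if and only if'' separately, mirroring closely the bi-directional analysis of Lemmas \ref{lemma_u1p0} and \ref{lemma_u1p}.

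\textbf{Forward direction.} Assume $u_1'(0)=0$. Equation \eqref{eq:u1'0_distributary} immediately gives $u_2'(0)+u_3'(0)=0$, so $u_3'(0)=-u_2'(0)$. If also $u_2'(0)=0$, then \eqref{eq:u2'0_distributary} forces $u_2(0)=0$, contradicting the bound $u_2(0)\ge K>0$. Hence $u_2'(0)\neq 0$, and substituting $u_3'(0)=-u_2'(0)$ into \eqref{eq:sum_ui'_distributary} yields
\[
r_2\!\left(1-\tfrac{2u_2(0)}{K}\right) = r_3\!\left(1-\tfrac{2u_3(0)}{K}\right). \quad (\ast)
\]
Suppose for contradiction that $r_2\neq r_3$; WLOG $r_2>r_3$. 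By Lemma \ref{lem:r2r3_distributary}, $u_3(0)>u_2(0)$. Since $u_2,u_3>K$ implies $1-2u_i/K<0$, relation $(\ast)$ already forces $r_3>0$ (otherwise $r_2=0$, contradicting $r_2>r_3$). Taking the difference of \eqref{eq:u2_distributary} and \eqref{eq:u3_distributary} at $k=0$ produces $r_2u_2(1-u_2/K)-r_3u_3(1-u_3/K)=d(u_2-u_3)<0$. Combining this inequality with the equality $(\ast)$ (dividing corresponding sides and carefully tracking signs, noting both sides of $(\ast)$ are negative) eliminates $r_2,r_3$ and yields $f(u_2)>f(u_3)$ for $f(x):=x(x-K)/(2x-K)$. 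A short calculation gives $f'(x)=(2(x-K/2)^2+K^2/2)/(2x-K)^2>0$, so $f$ is strictly increasing on $(K,\infty)$, contradicting $u_2<u_3$.

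\textbf{Backward direction.} Assume $r_2=r_3=:r$. By the relabeling symmetry applied to the $k=0$ system and uniqueness of the positive equilibrium, $u_2(0)=u_3(0)=:u$. Setting $S(k)=u_2(k)+u_3(k)$, I combine \eqref{eq:u1'0_distributary} with the sum of \eqref{eq:u2'0_distributary} and \eqref{eq:u3'0_distributary} (where the $\pm u_2(0)$ cross-terms cancel) to obtain the $2\times 2$ linear system
\[
\begin{pmatrix} A_1 & d \\ 2(d+q) & A_2 \end{pmatrix}\!\begin{pmatrix} u_1'(0) \\ S'(0) \end{pmatrix} = \begin{pmatrix} 0 \\ 0 \end{pmatrix},
\]
with $A_1=r_1(1-2u_1/K)-2(d+q)$ and $A_2=r(1-2u/K)-d$. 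The key step is to show the determinant $A_1A_2-2d(d+q)$ is nonzero. Using \eqref{eq:u1_distributary} at $k=0$ to derive $A_1=-dS/u_1-r_1u_1/K$ and \eqref{eq:u2_distributary} at $k=0$ to derive $A_2=-(d+q)u_1/u-ru/K$ (both negative) and substituting $S=2u$, the expansion collapses to
\[
A_1A_2-2d(d+q)=\tfrac{2dru^2}{u_1K}+\tfrac{(d+q)r_1u_1^2}{uK}+\tfrac{r_1ru_1u}{K^2},
\]
a sum of nonnegative terms whose middle summand is strictly positive because $r_1>0$. So the matrix is invertible, forcing $u_1'(0)=0$.

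\textbf{Main obstacle.} The forward direction is essentially a transcription of Lemma \ref{lemma_u1p}. The backward direction is the harder one because, unlike the bi-directional case, the one-directional perturbation breaks the $2\leftrightarrow 3$ symmetry for $k>0$, so Lemma \ref{lemma_u1p0} does not apply directly. The trick is to project onto the symmetric mode via $S=u_2+u_3$, which restores enough symmetry at $k=0$ to reduce the problem to a $2\times 2$ linear system, and then to establish its invertibility via an explicit equilibrium-based identity rather than appealing to any general cooperativity or stability argument.
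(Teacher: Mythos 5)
Your proof is correct. The forward direction is essentially the paper's own argument: it is the $k=0$ transcription of Lemma \ref{lemma_u1p}, with the one genuine difference (the sub-case $u_1'(0)=u_2'(0)=0$ now yields $u_2(0)=0$ rather than $u_2=u_3$, because the inhomogeneous term in \eqref{eq:u2'0_distributary} is $-u_2(0)$ instead of $u_3-u_2$) handled correctly; the division of the strict inequality by the equality $(\ast)$ and the monotonicity of $f$ check out (your $f$ is the negative of the paper's, hence increasing where theirs is decreasing). The backward direction is where you genuinely depart. The paper disposes of this lemma by asserting it is a special case of Lemmas \ref{bounds}--\ref{lemma_u1p}, but, as you rightly note, the argument of Lemma \ref{lemma_u1p0} (the equilibrium is literally constant in $k$ when $r_2=r_3$) does not transfer to the one-directional modification; the route the paper's machinery actually supports is to combine the same-sign statement (the one-directional analogue of Lemma \ref{lemma_sign}) with identity \eqref{eq:sum_distributary}, whose right-hand side vanishes when $u_2(0)=u_3(0)$, forcing $u_1'(0)$ and $u_2'(0)+u_3'(0)$ to have opposite signs and hence both to vanish. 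Your alternative --- projecting onto $(u_1',S')$ with $S=u_2+u_3$ and showing the homogeneous $2\times2$ system has determinant $\tfrac{2dru^2}{u_1K}+\tfrac{(d+q)r_1u_1^2}{uK}+\tfrac{r_1ru_1u}{K^2}>0$ --- is also correct: I verified the identities $A_1=-dS/u_1-r_1u_1/K$ and $A_2=-(d+q)u_1/u-ru/K$ obtained from the equilibrium equations, and the cancellation of $-2d(d+q)$ against the $d(d+q)S/u$ term once $S=2u$. Your version is more computational but entirely self-contained (it needs neither the same-sign lemma nor the weighted identity) and it pinpoints exactly where $r_1>0$ enters, via the strictly positive middle summand. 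Either route works; if you want to match the paper's architecture, the two-line argument via the analogue of Lemma \ref{lemma_sign} and \eqref{eq:sum_distributary} is shorter.
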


Now it remains to check the sign of $u_1'(0)$ on the two axes $\{r_2=0, r_3>0\}$ and $\{r_3=0, r_2>0\}$. 

\begin{lemma}
    Suppose that $r_1>0$. Then
    \begin{enumerate}[(a)]
        \item if $r_2=0$ and $r_3>0$, then $u_1'(0)<0$;
        \item  if $r_3=0$ and $r_2>0$, then $u_1'(0)>0$.
    \end{enumerate}
\end{lemma}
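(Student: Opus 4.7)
The plan is to prove both parts by contradiction, mirroring the strategy used in Lemma \ref{lemma_boundary} for the bi-directional distributary modification. By the preceding lemma in this subsection, $r_2\ne r_3$ implies $u_1'(0)\ne 0$, so in each case it suffices to rule out the sign opposite to the one claimed. The ingredients I would rely on are the linear system \eqref{eq:u1'0_distributary}--\eqref{eq:u3'0_distributary} in the derivatives $(u_1'(0),u_2'(0),u_3'(0))$, the weighted identity \eqref{eq:sum_distributary}, and the ordering between $u_2(0)$ and $u_3(0)$ supplied by Lemma \ref{lem:r2r3_distributary}.

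For part (a), where $r_2=0<r_3$, Lemma \ref{lem:r2r3_distributary} gives $u_2(0)>u_3(0)$, so the right-hand side $u_2(0)(u_3(0)-u_2(0))$ of \eqref{eq:sum_distributary} is strictly negative. I would assume for contradiction that $u_1'(0)>0$ and rewrite \eqref{eq:u3'0_distributary} as
$$u_3'(0)\left(r_3\left(1-\frac{2u_3(0)}{K}\right)-d\right)=-(d+q)u_1'(0)-u_2(0).$$
The bracketed coefficient on the left is bounded above by $-d<0$ because $u_3(0)\ge K$, while the right-hand side is strictly negative, forcing $u_3'(0)>0$. Substituting $r_2=0$ into \eqref{eq:sum_distributary} then makes the left-hand side strictly positive (both terms $\tfrac{r_1u_1(0)^2}{K}u_1'(0)$ and $\tfrac{r_3u_3(0)^2}{K}u_3'(0)$ are positive), a direct contradiction.

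For part (b), where $r_3=0<r_2$, Lemma \ref{lem:r2r3_distributary} instead gives $u_3(0)>u_2(0)$, so the right-hand side of \eqref{eq:sum_distributary} is now strictly positive. Assuming for contradiction that $u_1'(0)<0$, I would solve \eqref{eq:u2'0_distributary} for $u_2'(0)$:
$$u_2'(0)\left(r_2\left(1-\frac{2u_2(0)}{K}\right)-d\right)=u_2(0)-(d+q)u_1'(0).$$
Again the left coefficient is at most $-d<0$ because $u_2(0)\ge K$, while the right-hand side exceeds $u_2(0)>0$, forcing $u_2'(0)<0$. Setting $r_3=0$ in \eqref{eq:sum_distributary} then makes its left-hand side strictly negative, contradicting the positivity of the right-hand side.

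The essential observation making this work without any hypothesis on $d$ versus $q$ is that the lower bounds $u_2(0),u_3(0)\ge K$ from the bounds lemma guarantee the diagonal coefficients $r_i(1-2u_i(0)/K)-d$ are uniformly at most $-d<0$; no competition between $d$ and $q$ enters. Once those coefficients have a known sign, and Lemma \ref{lem:r2r3_distributary} pins down the sign of $u_3(0)-u_2(0)$, the weighted identity \eqref{eq:sum_distributary} closes each case immediately. The main obstacle I foresee is simply recognizing that this $d$-versus-$q$ comparison is not needed here, which is the key structural difference between the distributary analysis and the corresponding tributary lemma.
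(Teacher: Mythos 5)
Your argument is correct and is essentially the paper's own proof: the same contradiction hypothesis on the sign of $u_1'(0)$, the same weighted identity \eqref{eq:sum_distributary}, the same single equations \eqref{eq:u2'0_distributary} and \eqref{eq:u3'0_distributary} with the bounds $u_2(0),u_3(0)\ge K$, and the same ordering from Lemma \ref{lem:r2r3_distributary}. The only difference is that you deduce the sign of $u_3'(0)$ (resp.\ $u_2'(0)$) from the single equation first and then contradict the weighted identity, whereas the paper does these two steps in the opposite order; this is immaterial.
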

\begin{proof}
    (a) Suppose that $r_2=0$, $r_3>0$, and $r_1>0$. From Lemma \ref{lem:r2r3_distributary}, we have $u_2(0)>u_3(0)$. Now, assume by contradiction that $u_1'(0)>0$. From equation \eqref{eq:sum_distributary}, we have
    \[
\frac{r_1u_1(0)^2}{K}u_1'(0)+\frac{r_3u_3(0)^2}{K}u_3'(0)=u_2(0)(u_3(0) - u_2(0)) <0. 
    \]
    Since $u_1'(0)>0$, we must have $u_3'(0)<0$. However, from equation \eqref{eq:u3'0_distributary} we have
    \[
    u_3'(0)\left(d+r_3\left(\frac{2u_3(0)}{K}-1\right)\right) = u_2(0) +(d+q)u_1'(0) >0.
    \]
    Thus we obtain $u_3'(0)>0$, which is a contradiction. Therefore, for this case we must have $u_1'(0)<0$.

    (b) Suppose that $r_3=0$, $r_2>0$, and $r_1>0$. By Lemma \ref{lem:r2r3_distributary}, we have $u_3(0)>u_2(0)$. Again, we assume by contradiction that $u_1'(0)<0$. From equation \eqref{eq:sum_distributary}, we have 
     \[
\frac{r_1u_1(0)^2}{K}u_1'(0)+\frac{r_2u_2(0)^2}{K}u_2'(0)=u_2(0)(u_3(0) - u_2(0)) >0.
    \]
    Since $u_1'(0)<0$, we must have $u_2'(0)>0$. However, the equation \eqref{eq:u2'0_distributary}, we have
      \[
    u_2'(0)\left(d+r_2\left(\frac{2u_2(0)}{K}-1\right)\right) = -u_2(0) + (d+q)u_1'(0) <0.
    \]
    Thus now we obtain $u_2'(0)<0$, which is a contradiction. Therefore, for this case we must have $u_1'(0)>0$.
\end{proof}

Since $\MM'_{\DD,1}(0)$ has the same sign as $u_1'(0)$, the proof of Theorem \ref{thm:biomass_1dir_distributary} follows directly from the lemmas above.

\vspace{.2in}

\noindent{\bf Acknowledgment.} This project was begun at a SQuaRE at the American Institute of Mathematics. The authors thank AIM for providing a supportive and mathematically rich environment.

% \begin{center}
% %Data availability statement
% DATA AVAILABILITY STATEMENT
% \end{center}
% Data sharing is not applicable to this article as no new data were created or analysed in this study.
\vspace{.2in}
{\large\noindent{\bf Declarations}}

\noindent{\bf Conflict of interest} The authors declare that they have no conflict of interest.\\
\noindent{\bf Data Availability Statement} Data sharing is not applicable to this article as no new data were created or analyzed in this study.
\bibliographystyle{plain}
\bibliography{ref}

\begin{thebibliography}{10}

\bibitem{ackleh2022interplay}
Azmy~S Ackleh and Amy Veprauskas.
\newblock The interplay between dispersal and allee effects in a two-patch
  discrete-time model.
\newblock In {\em International Conference on Difference Equations and
  Applications}, pages 283--301. Springer, 2022.

\bibitem{altermatt2013diversity}
Florian Altermatt.
\newblock Diversity in riverine metacommunities: a network perspective.
\newblock {\em Aquatic Ecology}, 47:365--377, 2013.

\bibitem{arancibia2022network}
Paulina~A Arancibia and Peter~J Morin.
\newblock Network topology and patch connectivity affect dynamics in
  experimental and model metapopulations.
\newblock {\em Journal of Animal Ecology}, 91(2):496--505, 2022.

\bibitem{berman1994nonnegative}
Abraham Berman and Robert~J Plemmons.
\newblock {\em Nonnegative matrices in the mathematical sciences}.
\newblock SIAM, 1994.

\bibitem{chen2022invasion}
Shanshan Chen, Jie Liu, and Yixiang Wu.
\newblock Invasion analysis of a two-species lotka--volterra competition model
  in an advective patchy environment.
\newblock {\em Studies in Applied Mathematics}, 149(3):762--797, 2022.

\bibitem{chen2023impact}
Shanshan Chen, Jie Liu, and Yixiang Wu.
\newblock On the impact of spatial heterogeneity and drift rate in a
  three-patch two-species lotka--volterra competition model over a stream.
\newblock {\em Zeitschrift f{\"u}r angewandte Mathematik und Physik},
  74(3):117, 2023.

\bibitem{chen2024evolution}
Shanshan Chen, Jie Liu, and Yixiang Wu.
\newblock Evolution of dispersal in advective patchy environments with varying
  drift rates.
\newblock {\em SIAM Journal on Applied Dynamical Systems}, 23(1):696--720,
  2024.

\bibitem{chen2023evolution}
Shanshan Chen, Junping Shi, Zhisheng Shuai, and Yixiang Wu.
\newblock Evolution of dispersal in advective patchy environments.
\newblock {\em Journal of Nonlinear Science}, 33(3):40, 2023.

\bibitem{cosner1996variability}
C.~Cosner.
\newblock Variability, vagueness and comparison methods for ecological models.
\newblock {\em Bull. Math. Biol.}, 58(2):207--246, 1996.

\bibitem{du2020fisher}
Yihong Du, Bendong Lou, Rui Peng, and Maolin Zhou.
\newblock The fisher-kpp equation over simple graphs: varied persistence states
  in river networks.
\newblock {\em Journal of Mathematical Biology}, 80:1559--1616, 2020.

\bibitem{Ducrot2022}
Arnaud Ducrot, Quentin Griette, Zhihua Liu, and Pierre Magal.
\newblock {\em Positivity and the Perron--Frobenius Theorem}, pages 131--166.
\newblock Springer International Publishing, Cham, 2022.

\bibitem{fagan2002connectivity}
William~F Fagan.
\newblock Connectivity, fragmentation, and extinction risk in dendritic
  metapopulations.
\newblock {\em Ecology}, 83(12):3243--3249, 2002.

\bibitem{ge2023global}
Qing Ge and De~Tang.
\newblock Global dynamics of a two-species lotka-volterra
  competition-diffusion-advection system with general carrying capacities and
  intrinsic growth rates ii: Different diffusion and advection rates.
\newblock {\em Journal of Differential Equations}, 344:735--766, 2023.

\bibitem{giller1998biology}
Paul~S Giller and Bjorn Malmqvist.
\newblock {\em The biology of streams and rivers}.
\newblock Oxford University Press, 1998.

\bibitem{gonzalez2019effects}
Alexia~M Gonz{\'a}lez-Ferreras, Enrico Bertuzzo, Jos{\'e} Barqu{\'\i}n, Luca
  Carraro, Carlos Alonso, and Andrea Rinaldo.
\newblock Effects of altered river network connectivity on the distribution of
  salmo trutta: Insights from a metapopulation model.
\newblock {\em Freshwater Biology}, 64(11):1877--1895, 2019.

\bibitem{hadeler2008monotone}
K.P. Hadeler and H.R. Thieme.
\newblock Monotone dependence of the spectral bound on the transition rates in
  linear compartment models.
\newblock {\em Journal of Mathematical Biology}, 57(5):697--712, 2008.

\bibitem{hamida2017evolution}
Youcef Hamida.
\newblock The evolution of dispersal for the case of two-patches and
  two-species with travel loss.
\newblock Master's thesis, The Ohio State University, 2017.

\bibitem{huang2016r0}
Qihua Huang, Yu~Jin, and Mark~A Lewis.
\newblock R\_0 analysis of a benthic-drift model for a stream population.
\newblock {\em SIAM Journal on Applied Dynamical Systems}, 15(1):287--321,
  2016.

\bibitem{jiang2020two}
Hongyan Jiang, King-Yeung Lam, and Yuan Lou.
\newblock Are two-patch models sufficient? the evolution of dispersal and
  topology of river network modules.
\newblock {\em Bulletin of Mathematical Biology}, 82:1--42, 2020.

\bibitem{jiang2021three}
Hongyan Jiang, King-Yeung Lam, and Yuan Lou.
\newblock Three-patch models for the evolution of dispersal in advective
  environments: varying drift and network topology.
\newblock {\em Bulletin of Mathematical Biology}, 83:1--46, 2021.

\bibitem{jin2011seasonal}
Yu~Jin and Mark~A Lewis.
\newblock Seasonal influences on population spread and persistence in streams:
  critical domain size.
\newblock {\em SIAM Journal on Applied Mathematics}, 71(4):1241--1262, 2011.

\bibitem{jin2019population}
Yu~Jin, Rui Peng, and Junping Shi.
\newblock Population dynamics in river networks.
\newblock {\em Journal of Nonlinear Science}, 29:2501--2545, 2019.

\bibitem{kirkland2021impact}
Stephen Kirkland, Zhisheng Shuai, Pauline van~den Driessche, and Xueying Wang.
\newblock Impact of varying community networks on disease invasion.
\newblock {\em SIAM Journal on Applied Mathematics}, 81(3):1166--1189, 2021.

\bibitem{labonne2008linking}
Jacques Labonne, Virginie Ravign{\'e}, Bruno Parisi, and C{\'e}dric Gaucherel.
\newblock Linking dendritic network structures to population demogenetics: the
  downside of connectivity.
\newblock {\em Oikos}, 117(10):1479--1490, 2008.

\bibitem{lam2016emergence}
King-Yeung Lam, Yuan Lou, and Frithjof Lutscher.
\newblock The emergence of range limits in advective environments.
\newblock {\em SIAM Journal on Applied Mathematics}, 76(2):641--662, 2016.

\bibitem{li2010global}
M.~Y. Li and Z.~Shuai.
\newblock Global-stability problem for coupled systems of differential
  equations on networks.
\newblock {\em J. Differential Equations}, 248(1):1--20, 2010.

\bibitem{liu2025global}
Jie Liu and Shanshan Chen.
\newblock Global dynamics and evolutionarily stable strategies in a two-species
  competition patch model.
\newblock {\em Journal of Differential Equations}, 416:2175--2220, 2025.

\bibitem{liu2024dynamics}
Weiwei Liu, Jie Liu, and Shanshan Chen.
\newblock Dynamics of lotka--volterra competition patch models in streams with
  two branches.
\newblock {\em Bulletin of Mathematical Biology}, 86(2):14, 2024.

\bibitem{lou2019ideal}
Yuan Lou.
\newblock Ideal free distribution in two patches.
\newblock {\em J Nonlinear Model Anal}, 2:151--167, 2019.

\bibitem{lou2014evolution}
Yuan Lou and Frithjof Lutscher.
\newblock Evolution of dispersal in open advective environments.
\newblock {\em Journal of Mathematical Biology}, 69:1319--1342, 2014.

\bibitem{lou2015evolution}
Yuan Lou and Peng Zhou.
\newblock Evolution of dispersal in advective homogeneous environment: the
  effect of boundary conditions.
\newblock {\em Journal of Differential Equations}, 259(1):141--171, 2015.

\bibitem{lu1993global}
Zhengyi Lu and Yasuhiro Takeuchi.
\newblock Global asymptotic behavior in single-species discrete diffusion
  systems.
\newblock {\em Journal of Mathematical Biology}, 32(1):67--77, 1993.

\bibitem{lutscher2006effects}
Frithjof Lutscher, Mark~A Lewis, and Edward McCauley.
\newblock Effects of heterogeneity on spread and persistence in rivers.
\newblock {\em Bulletin of mathematical biology}, 68:2129--2160, 2006.

\bibitem{lutscher2005effect}
Frithjof Lutscher, Elizaveta Pachepsky, and Mark~A Lewis.
\newblock The effect of dispersal patterns on stream populations.
\newblock {\em SIAM Review}, 47(4):749--772, 2005.

\bibitem{moore2015bidirectional}
Jonathan~W Moore.
\newblock Bidirectional connectivity in rivers and implications for watershed
  stability and management.
\newblock {\em Canadian Journal of Fisheries and Aquatic Sciences},
  72(5):785--795, 2015.

\bibitem{nguyen2023impact}
Tung~D Nguyen, Yixiang Wu, Tingting Tang, Amy Veprauskas, Ying Zhou,
  Behzad~Djafari Rouhani, and Zhisheng Shuai.
\newblock Impact of resource distributions on the competition of species in
  stream environment.
\newblock {\em Journal of Mathematical Biology}, 87(4):62, 2023.

\bibitem{nguyen2023maximizing}
Tung~D Nguyen, Yixiang Wu, Amy Veprauskas, Tingting Tang, Ying Zhou, Charlotte
  Beckford, Brian Chau, Xiaoyun Chen, Behzad~Djafari Rouhani, Yuerong Wu,
  et~al.
\newblock Maximizing metapopulation growth rate and biomass in stream networks.
\newblock {\em SIAM Journal on Applied Mathematics}, 83(6):2145--2168, 2023.

\bibitem{noble2015evolution}
Laine Noble.
\newblock {\em Evolution of dispersal in patchy habitats}.
\newblock PhD thesis, The Ohio State University, 2015.

\bibitem{pelletier1996dynamics}
Dominique Pelletier and Pierre Magal.
\newblock Dynamics of a migratory population under different fishing effort
  allocation schemes in time and space.
\newblock {\em Canadian Journal of Fisheries and Aquatic Sciences},
  53(5):1186--1199, 1996.

\bibitem{perry2019does}
George~LW Perry and Finnbar Lee.
\newblock How does temporal variation in habitat connectivity influence
  metapopulation dynamics?
\newblock {\em Oikos}, 128(9):1277--1286, 2019.

\bibitem{samia2015connectivity}
Yasmine Samia, Frithjof Lutscher, and Alan Hastings.
\newblock Connectivity, passability and heterogeneity interact to determine
  fish population persistence in river networks.
\newblock {\em Journal of the Royal Society Interface}, 12(110):20150435, 2015.

\bibitem{smith2008monotone}
H.~L. Smith.
\newblock {\em Monotone {D}ynamical {S}ystems: {A}n {I}ntroduction to the
  {T}heory of {C}ompetitive and {C}ooperative {S}ystems}.
\newblock American Mathematical Society, Providence, RI, 1995.

\bibitem{thompson2017loss}
Patrick~L Thompson, Bronwyn Rayfield, and Andrew Gonzalez.
\newblock Loss of habitat and connectivity erodes species diversity, ecosystem
  functioning, and stability in metacommunity networks.
\newblock {\em Ecography}, 40(1):98--108, 2017.

\bibitem{tonkin2018role}
Jonathan~D Tonkin, Florian Altermatt, Debra~S Finn, Jani Heino, Julian~D Olden,
  Steffen~U Pauls, and David~A Lytle.
\newblock The role of dispersal in river network metacommunities: Patterns,
  processes, and pathways.
\newblock {\em Freshwater Biology}, 63(1):141--163, 2018.

\bibitem{tonkin2018metacommunities}
Jonathan~D Tonkin, Jani Heino, and Florian Altermatt.
\newblock Metacommunities in river networks: The importance of network
  structure and connectivity on patterns and processes.
\newblock {\em Freshwater biology}, 63(1):1--5, 2018.

\bibitem{vasilyeva2024evolution}
Olga Vasilyeva, Dylan Smith, and Frithjof Lutscher.
\newblock Evolution of dispersal in river networks.
\newblock {\em Bulletin of Mathematical Biology}, 86(12):140, 2024.

\bibitem{wofford2005influence}
John~EB Wofford, Robert~E Gresswell, and Michael~A Banks.
\newblock Influence of barriers to movement on within-watershed genetic
  variation of coastal cutthroat trout.
\newblock {\em Ecological Applications}, 15(2):628--637, 2005.

\bibitem{zeigler2014transient}
Sara~L Zeigler and William~F Fagan.
\newblock Transient windows for connectivity in a changing world.
\newblock {\em Movement Ecology}, 2:1--10, 2014.

\end{thebibliography}

\end{document}